%% file: paper.tex
\documentclass{llncs}

\usepackage{amsmath}
\usepackage{amssymb}
\usepackage{booktabs}
\usepackage{colonequals}
\usepackage{tikz}
\usepackage{xspace}
\usepackage{marvosym}
\usepackage{xcolor}
\usepackage{wrapfig}
\usepackage{cite}
\usepackage{algpseudocode}
\usepackage{algorithmicx}
\usepackage{algorithm}
\usepackage{mathpartir}
\usepackage{multirow}
\usepackage{comment}

\usepackage{mathtools}

\usepackage{version}

\iftrue
\includeversion{arxiv}
\excludeversion{cav}
\else
\excludeversion{arxiv}
\includeversion{cav}
\fi

\algnewcommand{\IIf}[1]{\State\algorithmicif\ #1\ \algorithmicthen}
\algnewcommand{\EndIIf}{\unskip\ \algorithmicend\ \algorithmicif}

\usepackage{graphicx}                   
\usepackage{hyperref}                   
\usetikzlibrary{shapes,arrows,positioning,decorations.markings,matrix,fit,calc,arrows.meta,decorations.pathreplacing}
\tikzstyle{line} = [draw, -latex']
\pgfdeclarelayer{background}
\pgfdeclarelayer{foreground}
\pgfsetlayers{background,main,foreground}

\usepackage{hyperref}
\usepackage{cleveref}

\newcommand\paptitle{Satisfiability Modulo\\Extensional Constant Arrays}
\newcommand\papthanks{This work was supported in part by the Stanford Center
for Automated Reasoning, the Stanford Center for Blockchain Research,
Defense Advanced Research Projects Agency (DARPA) contract FA875024-2-1001,
and a gift from Amazon Web Services.}

\hypersetup{
 pdfborder={0 0 0},
 colorlinks=true,
 linkcolor=blue,
 urlcolor=blue,
 citecolor=blue,
 pdftitle={\paptitle}
}
\makeatletter
\def\orcidID#1{\smash{\href{http://orcid.org/#1}{\protect\raisebox{-1.25pt}{\protect\includegraphics{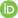}}}}}
\makeatother

\begin{document}

\include{macros.tex}

\title{\paptitle}

\author{%
  Mathias Preiner\textsuperscript{(\Letter)}\hskip .1em\orcidID{0000-0002-7142-6258}
  \and
  Aina Niemetz\hskip .1em\orcidID{0000-0003-2600-5283}
  \and
  Clark Barrett\hskip .1em\orcidID{0000-0002-9522-3084}
}
\authorrunning{M. Preiner et al.}
\institute{Stanford University, Stanford, USA\\\email{\{preiner,niemetz,barrett\}@cs.stanford.edu}}

\maketitle

\begin{abstract}
  Reasoning about array data structures is a key requirement for
  many applications in hardware and software verification, especially in
  combination with machine integers.
  The Satisfiability Modulo Theories (SMT) theory of extensional arrays
  provides array read and write operators and allows extensionality over
  arrays.
  This is sufficient to express many aspects of computer-aided verification,
  but lacks succinctness to efficiently deal with arrays that are initialized
  with a default value.
  Existing procedures for extending the SMT-LIB theory of arrays with support
  for constant arrays are limited to arrays with infinite index domains,
  and existing implementations in SMT solvers only support a fragment of the
  theory for finite index domains.
  In this paper, we present a novel decision procedure for the theory of
  arrays with constant arrays that supports arbitrary index domains and is not
  limited to the infinite case.
  We present our procedure as an abstract calculus and show its
  refutational and satisfiability soundness.
  We implement a decision procedure based on our calculus in the
  state-of-the-art SMT solver Bitwuzla and evaluate its performance on a
  diverse collection of benchmarks and use cases.
\end{abstract}

\section{Introduction}

Arrays are an essential data structure in programming languages, and are
commonly used to model memory in hardware.
The ability to efficiently reason about arrays, especially in combination with
machine integers, is therefore a core requirement for many applications
of computer-aided verification~\cite{btor2,handbookmc18,BaldoniCDDF18}.
Satisfiability Modulo Theories (SMT) solvers for the theory of arrays provide
such reasoning and serve as the back end engine for a wide range of
applications.

The SMT-LIB standard~\cite{BarFT-RR-25} defines a theory of arrays with
extensionality as originally axiomatized by McCarthy~\cite{McCarthy62}.
This theory provides array read (\texttt{select}) and write (\texttt{store})
operators and allows extensionality (equality) over arrays.
While this is sufficient to express many aspects of modeling memory and array
data structures in hardware and software,
the theory lacks the ability to succinctly express operations over multiple
indices and memory initialization.
When modeling an array initialized with a default value,
e.g., to represent a zero-initialized chunk of memory,
this initialization must be expressed via a nested store expression that
explicitly writes the default value to all relevant indices.
This leads to inaccuracies in the modeling as indices can only be
explicitly overwritten, and to inefficiencies in solving procedures as the
nesting depth of such store chains can become very deep, depending on the
number of relevant indices.

Extending the SMT-LIB theory of arrays with so-called \emph{constant
arrays}, i.e., arrays that store a single element at every
index, provides a succinct way to model array initialization with a default
value (which may be an arbitrary term).
Existing procedures for this extended theory of \emph{extensional} constant
arrays, however, are all restricted to
\emph{infinite} index domains~\cite{StumpBDL01,HoenickeS19,MouraB09}.

In this paper, we present a novel decision procedure
for the \emph{theory of extensional constant arrays}
that is \emph{not} restricted to
infinite index sorts.
Our procedure is presented as an abstract calculus and is a generalization and
extension of the \emph{Lemmas on Demand (LOD)} procedure for the extensional
theory of arrays presented in~\cite{BrummayerB09}.
In particular, we make the following \emph{contributions}:

\begin{itemize}
  \item We generalize the LOD procedure for the extensional
    theory of arrays with fixed-size bit-vectors presented
    in~\cite{BrummayerB09} to arbitrary index and element sorts.
  \item We present this generalization as an abstract calculus
    \calcbase, which serves as the core calculus of our new decision procedure
    and is \emph{decoupled} from the LOD solving paradigm~\cite{DemouraR02,BarrettDS02}.
  \item We then extend this core calculus with support for extensional constant
    arrays, and prove refutational and satisfiability soundness of our resulting
    calculus \calcext for the theory of extensional constant arrays.
  \item We implemented our procedure in the state-of-the-art SMT solver
    Bitwuzla~\cite{bitwuzla}. We evaluate it on a diverse
    collection of benchmarks and use cases,
    including applications in hardware model checking
    and smart contract verification of the Ethereum virtual machine.
    Our evaluation shows that our new procedure significantly expands the
    solving capabilities of Bitwuzla.
\end{itemize}

\paragraph{Related Work.}
The extension of the theory of arrays with constant arrays is not
yet standardized in SMT-LIB~\cite{BarFT-RR-25}.
Over the years, several array procedures incorporated support for constant
arrays to some extent, with limitations~\cite{StumpBDL01,MouraB09,HoenickeS19,PreinerNB13}.
Several state-of-the-art SMT
solvers~\cite{z3,cvc5,btor2,bitwuzla,smtinterpol,yices2,mathsat5}
have limited support for
constant arrays.
To the best of our knowledge,
none provide \emph{full} support for extensional constant arrays with
\emph{finite} index domains.

In 2001, Stump et al.~\cite{StumpBDL01} axiomatized the
extensional theory of arrays with constant arrays and
presented a sound and complete decision procedure for the
extensional theory of arrays. The authors also proposed
an extension of the procedure with support for constant arrays over \emph{infinite} index domains,
which is not implemented in any state-of-the-art SMT solver.

In 2009,
de Moura and Bj{\o}rner~\cite{MouraB09}
proposed a procedure for the so-called combinatory array logic
(CAL) extension of the extensional array theory,
which is implemented by the SMT solver Z3~\cite{z3}.
The CAL extension includes constant arrays,
and the procedure is, again, limited to \emph{infinite} index domains.
While Z3 accepts problems that use constant arrays with finite index domains,
in our experiments, we encountered cases where it reported wrong answers.
All of these instances involved extensionality over constant arrays with finite
index domains.
We faced a similar issue with MathSAT5~\cite{mathsat5}, which
supports constant arrays and accepts input problems over extensional constant arrays
with finite index domains but reported wrong answers for some instances
involving extensionality.

cvc5~\cite{cvc5} implements a procedure based on~\cite{MouraB09},
restricting support for constant arrays to the non-extensional case,
and to interpreted constants as elements (arbitrary terms are not allowed).
SMTInterpol~\cite{smtinterpol} supports satisfiability and interpolation
queries for the extensional theory of arrays with constant arrays as presented
by Hoenicke and Schindler~\cite{HoenickeS19} in 2019, also limited to the
\emph{infinite} index domain case.
Yices2~\cite{yices2} does not have native support for constant arrays, but has
limited support for non-extensional constant arrays by means of encoding them
as lambda terms. However, Yices2 gives up if the lambda terms cannot be
eliminated via beta reduction.
Boolector~\cite{btor2} generalizes the lemmas on demand
procedure for extensional arrays described
in~\cite{BrummayerB09} to non-recursive first-order lambda
terms~\cite{PreinerNB13,preiner-phd}
and allows reasoning about constant arrays
(modeled as lambda terms) for the \emph{non-extensional} theory of arrays with
constant arrays.

Bitwuzla~\cite{bitwuzla} implements and extends the array procedure
from~\cite{BrummayerB09} with support for
(extensional) nested arrays and \emph{non-extensional} constant arrays.
Until now, extensionality over constant arrays was not supported.

\section{Preliminaries}

We assume and briefly review the usual notions and terminology of many-sorted
first-order logic with equality
(see, e.g.,~\cite{EndertonLogic,Man-MSL-93} for a more
detailed introduction).

Let \sorts be a set of \emph{sort symbols},
and let \sig be a \emph{signature} containing
a set $\sigs\!\subseteq \sorts$ of sort symbols
and a set \sigf of function symbols
$f^{\sort_1 \cdots \sort_n \sort}$
with arity $n \ge 0$ and $\sort_1, ..., \sort_n, \sort \in \sigs$.
We usually omit the superscript from function symbols
and refer to 0-arity function symbols as \emph{constants}.

We assume that \sig includes a designated sort \bool,
Boolean constants \btrue~(true) and \bfalse (false),
Boolean connectives \{\boolandf, \boolnotf\} defined as usual,
equality and disequality symbols $\{\teqf, \tneqf\}$ of sort
$\sort\times\sort\to\bool$ for every $\sort\in\sigs$,
and an if-then-else operator
\emph{ite} of sort $\bool\times\sort\times\sort\to\sort$ for every
$\sort\in\sigs$.

Let \I be a \emph{\sig-interpretation} that maps
each $\sort \in \sigs$ to a non-empty set \sorti (the \emph{domain} of \I),
with $\bool^\I = \{ \btrue, \bfalse \}$;
and each $f^{\sort_1 \cdots \sort_n \sort} \in \sigf$ to a total function
$f^\I\!\!: \sort_1^\I \times ... \times \sort_n^\I \to \sorti$ if $n > 0$,
and to an element in \sorti if ${n = 0}$.
We denote the domain size of a sort $s \in \sigs$ as~\domsize{s}, e.g.,
$\domsize{\bool} = 2$.
The interpretation of Boolean connectives, Boolean constants,
equality symbols and \emph{ite} symbols
is fixed and standard.
%
We use the usual inductive definition of the satisfiability relation $\models$
between \sig-interpretations and \sig-formulas.
%

A \emph{theory} is a pair $(\sig,\Is)$ where $\sig$ is some signature,
and $\Is$ is a class of $\sig$-interpretations.
A \sig-formula is \emph{\T-satisfiable} (resp.~\emph{\T-unsatisfiable})
if it is satisfied by some (resp.~no) interpretation in \Is;
it is \emph{\T-valid} if it is satisfied by all interpretations in \Is.
We assume the usual definition of well-sorted terms, literals, and formulas,
and call $\sig$-formulas $\T$-formulas and $\sig$-literals
$\T$-literals.
We sometimes refer to standard first-order logic as the \emph{empty} theory \theoryeuf.

A literal is \emph{flat} if it is of the form
\bfalse,
$p(x_1,\ldots,x_n)$,
$\neg p(x_1,\ldots,x_n)$,
\teq{x}{y},
\tneq{x}{y},
or
$\teq{x}{f(x_1,\ldots,x_n)}$,
where $p$ and $f$ are function symbols and $x$, $y$, and $x_1,\ldots,x_n$ are
uninterpreted constants.

\section{A Theory of Extensional Constant Arrays}

The SMT-LIB theory of extensional arrays \theoryarr
defines a parameterized sort \arraysort for arrays with index sort \indexsort
and element sort \elementsort.
Its signature contains two interpreted function symbols for accessing
(\texttt{select}) and writing (\texttt{store}) to arrays
as axiomatized by McCarthy in~\cite{McCarthy62}.

In the following,
we use symbol \indexsort for array index sorts,
symbol \elementsort for array element sorts,
symbols $a$, $b$, $c$, $d$, $e$ for arrays of sort \arraysort,
symbols $i$, $j$, $k$ for indices of sort \indexsort,
and symbols $u$, $v$, $w$ for elements of sort \elementsort.

We define the \emph{theory of extensional constant arrays} $\theorycarr = (\sigcarr, \Icarr)$ as an extension
of \theoryarr with an additional interpreted function symbol for constructing
constant arrays.
The signature~\sigcarr of theory \theorycarr is shown in \Cref{tab:arrops}.
Array \emph{access} function \aread{a}{i}
reads an element from array~$a$ at index~$i$ (\texttt{select}),
and array \emph{update} function \awrite{a}{i}{u} constructs a new
array that is identical to array~$a$ except for index $i$ where it stores
element~$u$ (\texttt{store}).
Further, as originally proposed by Stump et al.~in~\cite{StumpBDL01},
\sigcarr includes a constant array \emph{constructor} operator~$\aconsti{v}$.
This operator creates an array of sort \arraysort
that stores a single element~$v$ at all indices and
is currently not defined in SMT-LIB.
The majority of SMT solvers that support constant arrays,
however, support the \texttt{((as const (Array IndexSort ElementSort)) v)} SMT-LIB
construct as a constant array constructor.
We omit $\indexsort$ from  $\aconsti{v}$ if the index sort is clear
from the context.

\begin{table}[t]
  \centering
  \caption{Signature for the theory of constant arrays \theorycarr.}
  \input{table/arrayops.tex}

  \label{tab:arrops}
\end{table}

\subsection{Axiomatization of \theorycarr}

The theory of extensional constant arrays \theorycarr is axiomatized by
the following four axioms.
Array access and update are defined by McCarthy's
\emph{read-over-write} axioms \ref{eq:roweq} and \ref{eq:rowne}~\cite{McCarthy62}.
Note that the axiomatization scheme for the non-extensional theory of arrays
only includes these two axioms.

Axiom \ref{eq:roweq} states that if array \awrite{a}{i}{u}, which results from
storing element~$u$ at index $i$ in array $a$, is accessed at that index,
the accessed element must be $u$.
\begin{equation}\tag{row-eq}
  \forall\,
    a,i,j,u
    \,.\, \imp{\teq{i}{j}}{\teq{\aread{\awrite{a}{i}{u}}{j}}{u}}
  \label{eq:roweq}
\end{equation}
Axiom \ref{eq:rowne} covers the symmetric opposite case and states that all
elements that were not overwritten by store \awrite{a}{i}{u} remain unchanged.
\begin{equation}\tag{row-ne}
  \forall\,
  a,i,j,u
  \,.\, \imp{\tneq{i}{j}}{\teq{\aread{\awrite{a}{i}{u}}{j}}{\aread{a}{j}}}
  \label{eq:rowne}
\end{equation}

The \emph{extensionality} axiom~\ref{eq:ext} states that two arrays are equal
if and only if both store the same element on each index.
The axiomatization scheme for the extensional theory of arrays \theoryarr
consists of this and the previous two axioms.
%
\begin{equation}\tag{ext}
  \forall\,
    a,b
    \,.\, \bimp{\teq{a}{b}}{\forall\, i \,.\, \teq{\aread{a}{i}}{\aread{b}{i}}}
  \label{eq:ext}
\end{equation}
The axiomatization scheme for the extensional theory of constant arrays
\theorycarr
extends the scheme for \theoryarr with
the \emph{read-over-constant-array} axiom~\ref{eq:roc}~\cite{StumpBDL01}.
This axiom states that accessing a constant array at any arbitrary index always
yields its default value (the value it was constructed with).
%
\begin{equation}\tag{roc}
  \forall\,
    i,v
    \,.\, \teq{\aread{\aconst{v}}{i}}{v}
  \label{eq:roc}
\end{equation}

\subsection{Infinite vs.~Finite Constant Arrays}

A major limitation of previous work on extensionality of constant
arrays~\cite{StumpBDL01,HoenickeS19,MouraB09} is its restriction to
\emph{infinite} index domains.
In the following, we illustrate
why the restriction to infinite constant arrays significantly simplifies
procedures for extensional constant arrays.

Given two constant arrays $\aconst{v}$ and $\aconst{w}$,
from axioms \ref{eq:ext} and \ref{eq:roc}
it follows that \aconst{v} and \aconst{w} are
equal if and only if their default values $v$ and $w$ are equal,
i.e., $\forall v, w\,.\, \bimp{\teq{\aconst{v}}{\aconst{w}}}{\teq{v}{w}}$.
For constant arrays with \emph{infinite index domain},
the same also holds
for equality over array stores on constant arrays, e.g.,
\teq{\awrite{\awrite{\awrite{\aconst{v}}{i}{w}}{j}{w}}{k}{w}}{\aconst{w}}.
Intuitively,
infinite constant arrays with different default values cannot be ``made'' equal
since this requires an \emph{infinite} number of array stores,
which cannot be expressed in the array theory.
Thus, for infinite constant arrays \aconst{v} and \aconst{w}, an equality over update sequences on
\aconst{v} and \aconst{w} cannot hold if \teq{v}{w} does not hold.

For \emph{finite index domains}, however, two constant arrays with different
default values can be made equal if sufficiently many stores ensure that they
are equal on every index.
For example, formula
$\teq{\awrite{\awrite{\aconst{v}}{i}{w}}{j}{w}}{\aconst{w}} \wedge \tneq{v}{w}$
is satisfiable
for constant arrays with an index domain of size two,
since it is possible to find an interpretation for $i$ and $j$
such that all indices of \aconst{v} are updated to $w$.

More generally, two finite constant arrays \aconsti{v}
and \aconsti{w} with different default values $v$ and $w$
differ on a finite number of indices $n = \domsize{\indexsort}$.
Consider the following equality over array update sequences
on \aconst{v} and \aconst{w}:
\begin{align*}
  \teq{\awrite{\awrite{\awrite{\aconst{v}}{i_1}{w}}{\ldots}{\ldots}}{i_l}{w}}
      {\awrite{\awrite{\awrite{\aconst{w}}{j_1}{v}}{\ldots}{\ldots}}{j_k}{v}}
\end{align*}

If \tneq{v}{w},
this equality only holds if
the indices \emph{not} updated in array update sequence
$s_v \coloneq \awrite{\awrite{\awrite{\aconst{v}}{i_1}{w}}{\ldots}{\ldots}}{i_l}{w}$
are updated in
$s_w \coloneq \awrite{\awrite{\awrite{\aconst{w}}{j_1}{v}}{\ldots}{\ldots}}{j_k}{v}$,
i.e., $s_v$ and $s_w$ must perform at least $n$ array updates,
on $n$ distinct indices.
In other words,
interpretation \I satisfies formula $\teq{s_v}{s_w} \wedge \tneq{v}{w}$
if and only if $\indexsort^\I = \{i_1^\I, \ldots, i_l^\I, j_1^\I, \ldots, j_k^\I\}$.
For any number of array stores $l + k < n$, the equality does not hold
since $s_v$ and $s_w$ differ on at least $n - (l + k)$ indices.

Thus, the main challenge of deciding the equality of two store sequences over
different finite constant arrays is to check whether there exists an
arrangement of $n\!=\!\domsize{\indexsort}$ unique array updates such that the resulting
updated arrays are equal.

\begin{example}
  \label{ex:infinite}
  Consider formula $\varphi$ over \emph{infinite} constant arrays,
  and assume that arrays \aconst{v}, $a$, and \aconst{w} are indexed by
  integers and store Booleans as elements.
  \begin{align*}
    \varphi \coloneq
    \teq{\awrite{\aconst{v}}{i_1}{u_1}}{\awrite{a}{j_1}{u_2}}
    \wedge
    \teq{\awrite{a}{i_2}{u_3}}{\awrite{\aconst{w}}{j_2}{u_4}}
  \end{align*}
  Formula $\varphi \wedge \teq{v}{w}$ is \emph{satisfiable} and satisfied, e.g., by interpretation
  \begin{align*}
  \{
    a = \aconst{\bfalse},
    v = w = u_1 = u_2 = u_3 = u_4 = \bfalse,
    i_1 = i_2 = j_1 = j_2 = 0
  \}
  .
  \end{align*}
  Formula $\varphi \wedge \tneq{v}{w}$, however, is \emph{unsatisfiable} since
  \aconst{v} and \aconst{w} store different elements at infinitely many
  indices.
  Thus,
  there exists no arrangement for array updates at indices
  $i_1$, $i_2$, $j_1$, and $j_2$ to make the array update sequences equal.
  \qed
\end{example}

\begin{figure}[t]
  \centering
  \resizebox{\textwidth}{!}{
    \input{fig/example2.tex}
  }
  \caption{Visualization of the interpretation satisfying formula
           $\varphi \wedge \tneq{v}{w}$ from \Cref{ex:finite}.}
  \label{fig:example_dag}
\end{figure}

\begin{example}
  \label{ex:finite}
  Now, for formula $\varphi$ from \Cref{ex:infinite} above, assume that
  arrays~\aconst{v}, $a$, and~\aconst{w} are indexed by
  bit-vectors of size two, i.e., by a \emph{finite} index sort.
  Formula $\varphi \wedge \teq{v}{w}$ is still \emph{satisfiable} and satisfied, e.g.,
  by interpretation
  \begin{align*}
  \{
    a = \aconst{\bfalse},
    v = w = u_1 = u_2 = u_3 = u_4 = \bfalse,
    i_1 = i_2 = j_1 = j_2 = 00
  \}
  .
  \end{align*}
  Interestingly, formula $\varphi \wedge \tneq{v}{w}$ is now also \emph{satisfiable} since
  the array stores on indices $i_1$, $i_2$, $j_1$, and $j_2$ can be arranged in
  a way such that each equality holds.
  One possible interpretation satisfying this formula,
  as illustrated by \Cref{fig:example_dag},
  is
  \begin{align*}
  \{
    a &= \{
      00 \mapsto \btrue,
          01 \mapsto \bfalse,
          10 \mapsto \btrue,
          11 \mapsto \bfalse
        \},\\
    v &= u_2 = u_4 = \bfalse,
    w = u_1 = u_3 = \btrue,\\
    i_1 &= 00,
    i_2 = 01,
    j_1 = 10,
    j_2 = 11
  \}
  .
  \end{align*}

  On the other hand, formula $\varphi \wedge \tneq{v}{w} \wedge \teq{i_1}{j_1}$ is
  \emph{unsatisfiable} since constraint $\teq{i_1}{j_1}$ decreases the number of updates on
  distinct indices to three,
  but $\domsize{\indexsort} = 4$ unique updates are required to satisfy
  both equalities.
  Similarly, formula
  $\varphi \wedge \tneq{v}{w}\wedge \allowbreak \tneq{i_1}{j_1} \wedge \teq{u_1}{u_2}$
  is also unsatisfiable.
  While the array stores can be arranged to update four distinct indices, the
  additional constraint \teq{u_1}{u_2} results in array store
  \awrite{\aconst{v}}{i_1}{u_1} to update index~$i_1$ to $u_2$.
  However, since \tneq{i_1}{j_1}, we have that \teq{u_2}{v} and thus,
  the array update on $i_1$ leaves \aconst{v} unchanged while the remaining three
  array updates are not sufficient to satisfy both equalities.
  \qed
\end{example}

\section{Calculus}
\label{sec:calc}

In this section, we first generalize and formalize the decision procedure for
the extensional theory of arrays presented in~\cite{BrummayerB09}, which we
refer to as \dpa, as an abstract base calculus \calcbase.
We then extend \calcbase with a set of new rules
to handle extensional finite and infinite constant arrays,
yielding a new calculus \calcext,
which maintains full
generality with respect to the finiteness of the index domain of constant arrays.
Finally, we prove that \calcext is sound and complete for the theory of
extensional constant arrays \theorycarr.

In the following, we briefly introduce common assumptions, definitions and
notation used to define our base calculus and its extension.

\begin{assumption}
  Whenever we refer to a set of \sigarr-formulas \assertions,
  we assume that all literals in \assertions are \emph{flat}.
  \qed
\end{assumption}

This assumption is without loss of generality as any set of \sigarr-formulas
can be easily transformed into a \theoryarr-equisatisfiable set by the
addition of fresh variables and equalities.
Note that some of the derivation rules we introduce later
introduce non-flat literals in \assertions.
In such cases, we assume that similar transformations
maintain the invariant that all literals in \assertions are flat.

\begin{definition}[Configuration]
  A configuration is either the distinguished configuration \unsat or
  a tuple \config{\assertions}{\I}{\propathf}, where
  \assertions is a set of \sigarr-formulas,
  \I is an interpretation in the empty theory \theoryeuf that satisfies
  \assertions, and \propathf is a map of propagation steps
  (as defined in \Cref{def:propstep}).
  \qed
\end{definition}

Calculus rules modify configurations and are given in
\emph{guarded assignment form}, where the rule premises describe the conditions
on the current configuration under which the rule can be applied.
The \emph{conclusion} of a rule is either \unsat or otherwise describes the
resulting modifications to the configuration.
An application of a rule is \emph{redundant} if it has a conclusion where each
component of the derived configuration is equal to the corresponding
component in the premise configuration, i.e., it remains unchanged.
A configuration is \emph{saturated} with respect to a set of rules if every possible
application of each rule is redundant.

In the following,
we use \terms{\assertions} to denote the set of all terms in
formulas~\assertions, and \termsa{\assertions} to denote all array terms
in~\assertions.
Further, we assume that array disequalities
$\tneq{a}{c} \in \assertions$ are represented as
$\neg(\teq{a}{c})$.

\subsection{Base Calculus}
\label{sec:base}

In this section, we generalize and formalize the 
LOD decision
procedure \dpa for the extensional theory of arrays of~\cite{BrummayerB09} as
our base calculus~\calcbase.

The main concept of \dpa as presented in~\cite{BrummayerB09}
is based on a fixed point algorithm for
propagating and congruence checking array reads.
The algorithm produces lemmas that are derived from McCarthy's array axioms
\emph{without} introducing new array reads.
In~\cite{BrummayerB09}, procedure \dpa assumes that \theoryarr is combined with
a single, decidable theory for array indices and elements,
and is presented in the context of combining
\theoryarr with the theory of fixed-size bit-vectors \theorybv.
It is based on the lemmas on demand paradigm~\cite{DemouraR02,BarrettDS02},
but with a bit-vector abstraction at its core,
and relies on necessary formula transformations during preprocessing
that introduce virtual reads for array equalities and writes.
The authors show that for this combination of theories, \dpa is sound and
complete.

We generalize \dpa to combinations of \theoryarr with arbitrary theories for
array indices and elements and decouple it from the LOD paradigm.
Our generalized procedure for the theory of extensional arrays \theoryarr
does not rely on necessary formula transformations during preprocessing,
does not assume that array index and element sorts belong to the same theory,
and does not rely on the existence of a decision procedure for index and
element theories.
Instead, it assumes an interpretation of index and element terms in the empty
theory \theoryeuf.

We formalize our generalized procedure in terms of our \emph{base calculus} \calcbase.
Given a set of \sigarr-formulas \assertions and a current interpretation
$\I \models \assertions$,
the derivation rules of our base calculus
aim to \emph{propagate information} about array indices and elements from one array
to another, based on the axioms of \theoryarr.
For example, for an array access \aread{\awrite{a}{j}{u}}{i},
if $\I \models \tneq{i}{j}$, from axiom~\ref{eq:rowne} it follows that
\teq{\aread{\awrite{a}{j}{u}}{i}}{\aread{a}{i}}.
We can therefore propagate the information that reading from \awrite{a}{j}{u}
at index $i$
yields the same element as reading from~$a$ at index $i$.
Our calculus explores propagations under the current interpretation
until unsatisfiability is derived, or all possible propagations have been explored.

Calculus \calcbase consists of the set of derivation rules given in
\Cref{fig:calcbasene,fig:calcbase}.
\Cref{fig:calcbasene} lists all rules required for the \emph{non-extensional}
theory of arrays, while \Cref{fig:calcbase} shows the supplemental
rules to handle the \emph{extensional} theory of arrays.

\begin{definition}[Propagation Step]
  A propagation step $\propath{a}{t} \coloneq (r,c)$
  maps a propagated term~$t$ and its destination array $a$ to a tuple
  $(r, c)$,
  where $r$ defines the reason for propagating $t$ from source array
  $c$ to the destination array $a$.
  \qed
  \label{def:propstep}
\end{definition}

Given a configuration \config{\assertions}{\I}{\propathf},
the \emph{propagation path} of a read \aread{b}{i} to destination array $a$
can be constructed as a sequence of propagation steps
by following the steps in reverse order.
That is, starting from \propath{a}{\aread{b}{i}},
we reach \propath{b}{\aread{b}{i}} via the propagation path
$\{
    \propath{a}{\aread{b}{i}} \coloneq (r_n,c_n),
    \propath{c_n}{\aread{b}{i}} \coloneq (r_{n-1},c_{n-1}),
    \ldots,\allowbreak
    \propath{c_1}{\aread{b}{i}} \coloneq (r_1,b),
    \propath{b}{\aread{b}{i}} \coloneq (\btrue,b)
\}$.
The reason \reasonof{a}{\aread{b}{i}} for propagating~\aread{b}{i} from $b$ to
$a$ along this path is defined as follows.

\begin{definition}[Propagation Reason]
  For a propagation step $\propath{a}{t} \coloneq (r,c)$,
  \reasonof{a}{t} defines the reason for propagating $t$
  from its initial source array to destination array $a$ as follows.
  \begin{equation*}
  \reasonof{a}{t} =
  \begin{cases}
    \emptypair               &\text{ if } \propath{a}{t} = \emptypair\\
    \btrue                   &\text{ if } t = a \text{ or } t = a[i] \text{ for some } i\\
    \reasonof{c}{t} \wedge r &\text{ otherwise, where } \propath{a}{t} = (r,c)
  \end{cases}
  \end{equation*}
  \qed
  \label{def:reason}
\end{definition}

Note that in the context of our base calculus \calcbase,
we only propagate reads~\aread{b}{i} to destination arrays $a$, i.e.,
propagated term $t$  in \Cref{def:propstep,def:reason} is always a read.
Thus, the second case of \Cref{def:reason} for $t = a$ does not apply and
propagation reasons are only defined as \reasonof{a}{\aread{b}{i}}.
When extending \calcbase with support for constant arrays to \calcext,
we also propagate default values of constant arrays which we represent with
case $t = a $, as we will explain later.

\begin{figure}
  \centering
  \setlength{\fboxsep}{10pt}
  \fbox{\scalebox{0.9}{\input{rules/non_ext.tex}}}
  \caption{\calcbase rules for non-extensional arrays.}
  \label{fig:calcbasene}
  \bigskip
  \bigskip
  \fbox{\scalebox{0.9}{\input{rules/ext.tex}}}
  \caption{\calcbase rules for extensional arrays.}
 \label{fig:calcbase}
\end{figure}

\begin{figure}[t]
  \centering
  \setlength{\fboxsep}{10pt}
  \fbox{\scalebox{0.9}{\input{rules/cnon_ext.tex}}}
  \caption{\calcext rule for non-extensional constant arrays.}
  \label{fig:calcextne}
  \bigskip
  \bigskip
  \fbox{\scalebox{.88}{\input{rules/cext.tex}}}
  \caption{\calcext rules for extensional constant arrays.}
  \label{fig:calcext}
\end{figure}

\paragraph{Start.}
Our calculus starts with the \emph{initial} configuration
\config{\assertions}{\I_0}{\propathf_0}
with an empty interpretation $\I_0 \coloneq \none$
and an empty map of propagation steps $\propathf_0 \coloneq \emptymap$.
From this initial configuration,
it explores possible propagation steps \propathf under a \emph{current
interpretation} \I for formulas \assertions in the empty theory \theoryeuf, as
defined by rule \ruleinterp, while modifying the configuration.
If \assertions is \theoryeuf-unsatisfiable,
rule \ruleconf derives configuration \unsat.
Else, if the configuration is \emph{saturated}, i.e., each rule application is
redundant and configuration \rn{unsat} was not derived,
for each array $a \in \termsa{\assertions}$,
propagation steps \propath{a}{\cdot} represent a partial model of~$a$, which
can be extended to an interpretation~$\I_\mathcal{A}$
that satisfies \sigarr-formulas~\assertions.

\paragraph{Propagation.}
In addition to rules \ruleinterp and \ruleconf, we distinguish
\emph{propagation rules} that modify the map of propagation steps \propathf
(\rn{Init*}, \rn{Row*}, \rn{Eq*}), and \emph{conflict rules} that modify
the set of formulas \assertions (\rulecongr, \rulediseq).

Rules \ruleinitr and \ruleinitw initialize map \propathf for each
\emph{direct read} $\aread{a}{i} \in \terms{\assertions}$ and
\emph{virtual read} $\aread{\awrite{a}{i}{u}}{i}$
for each write $\awrite{a}{i}{u} \in \terms{\assertions}$, respectively.
The purpose of these virtual reads is to represent the written element $u$
in terms of a read \aread{\awrite{a}{i}{u}}{i}, which allows for uniform
handling of array read and write terms in the calculus.
This is similar to the approach of \dpa, where
virtual reads are conceptually added to the formula via top-level equalities
\teq{\aread{\awrite{a}{i}{u}}{i}}{u} as a preprocessing step.
In our calculus, however, the congruence of \aread{\awrite{a}{i}{u}}{i} and $u$
is ensured by rules \ruleinterp and \ruleconf.

Rule \rulerowd is based on a rule of \dpa which applies read-over-write axiom~\ref{eq:rowne}.
Given a read \aread{\awrite{a}{j}{u}}{i}, if \tneq{i}{j},
by axiom~\ref{eq:rowne} we have that
\teq{\aread{\awrite{a}{j}{u}}{i}}{\aread{a}{i}}.
However, instead of creating a (potentially) fresh read \aread{a}{i},
we can use the existing read \aread{\awrite{a}{j}{u}}{i} as
representative of \aread{a}{i} under the current interpretation~\I, and
\emph{propagate} \aread{\awrite{a}{j}{u}}{i} from source array
\awrite{a}{j}{u} down to destination $a$, which is recorded in the conclusion of
rule \rulerowd by adding propagation step
$\propath{a}{\aread{\awrite{a}{j}{u}}{i}} \coloneq (\tneq{i}{j}, \awrite{a}{j}{u})$ 
to the configuration.

\paragraph{Conflict.}
Rule \rulecongr is based on the congruence rule of \dpa, which handles the
case where two reads \aread{b}{i} and \aread{c}{k} read different elements
from the same array~$a$ at the same index.
Recall that
with
$\propath{a}{\aread{b}{i}} \ne \emptypair$
and
$\propath{a}{\aread{c}{k}} \ne \emptypair$,
we have that
\aread{b}{i} is a representative of \aread{a}{i} and
\aread{c}{k} is a representative of \aread{a}{k} under the current
interpretation \I.
However, if
$\I \models \teq{i}{k} \wedge \tneq{\aread{b}{i}}{\aread{c}{k}}$,
the congruence axiom of \theoryeuf is violated since
$
\teq{i}{k}
\imp{\wedge \teq{\aread{b}{i}}{\aread{a}{i}} \wedge \teq{\aread{c}{k}}{\aread{a}{k}}}
    {\teq{\aread{b}{i}}{\aread{c}{k}}}
$.
As a result, \rulecongr adds a lemma to \assertions to rule out the current
spurious interpretation~\I.
Note that by modifying \assertions, configuration components \propathf and \I are reset to
their initial values $\propathf_0$ and $\I_0$ since \propathf is dependent on
the current interpretation \I, which does not satisfy the modified formulas
\assertions anymore.

\paragraph{Extensionality.}
The derivation rules given in \Cref{fig:calcbase} are only required
if \assertions contains extensionality over array terms.
Similarly to rule \rulerowd,
if $\I \models \tneq{i}{j}$,
rule \rulerowu applies axiom~\ref{eq:rowne} but propagates \aread{a}{i} to
\awrite{a}{j}{u}, since the element of~$a$ at index $i$ remains unchanged after
the update operation.
Rules \ruleeqr and \ruleeql apply the extensionality axiom \ref{eq:ext} and
propagate reads on arrays $a$ and $c$ over equalities
$\teq{a}{c} \in \terms{\assertions}$ from left to right and vice versa.

For array equalities \teq{a}{c} for which $\I \models \tneq{a}{c}$,
rule \rulediseq adds an instance of axiom~\ref{eq:ext} to \assertions.
This rule is the only rule that introduces fresh reads (on a fresh index
\adiff{a}{c}),
however, limited to at most two reads per equality.
The added lemma ensures that there exists
at least one witness index \adiff{a}{c} at which arrays $a$ and $c$ differ.
In contrast, \dpa relies on a preprocessing step that eagerly adds this lemma
for each $\teq{a}{c} \in \terms{\assertions}$, independent of whether $\I \models \tneq{a}{c}$.

\subsection{Constant Array Extension}
\label{sec:ext}

In this section, we extend our base calculus \calcbase with a set of new rules
to support extensional constant arrays.
The new rules for our extended calculus~\calcext are given in
\Cref{fig:calcextne,fig:calcext}.
\Cref{fig:calcextne} shows the additional conflict rule \ruleroc required for
handling reads over constant arrays, while \Cref{fig:calcext}
lists the new rules to handle extensional constant arrays.

The main challenge of determining satisfiability of equalities between store sequences over
different finite constant arrays is that we have to reason about default
values for indices that have not been updated by array stores.
A straightforward extension of \calcbase would be the introduction of virtual reads
for all indices of a constant array, i.e., adding
$\teq{\aread{\aconsti{v}}{i}}{v}$ for all $i \in \indexsort^\I$.
However, this is not feasible in general,
neither for infinite nor for finite index domains.
For the infinite case, we would never reach a saturated configuration due to
infinitely many virtual reads.
And while it may be feasible for reasonably small finite index
domains, large domains would require a potentially unfeasibly large number of
derivations to reach a saturated configuration.

Our extended calculus \calcext avoids the introduction of virtual reads
on constant arrays.
Instead, it propagates the \emph{default value} $v$ of a constant array~\aconst{v}
to array $a$
while keeping track of updated indices \indicesof{a}{\aconst{v}} between
$a$ and \aconst{v}.

\begin{definition}[Updated Indices]
  Given a propagation step $\propath{a}{\aconst{v}} \coloneq (r,c)$,
  the set of \emph{updated indices} \indicesof{a}{\aconst{v}} consists of all
  indices updated while propagating the default value of \aconst{v} to
  array~$a$, and is defined as follows.
  \begin{equation*}
    \indicesof{a}{\aconst{v}} =
    \begin{cases}
      \emptypair                           &\text{if } \propath{a}{\aconst{v}} = \emptypair\\
      \emptyset                            &\text{if } a = \aconst{v}\\
      \indicesof{b}{\aconst{v}} \cup \{j\} &\text{if }
      \propath{a}{\aconst{v}}=(\btrue,b) \text{ with } \\&
      \hspace{2.57em}b=\awrite{a}{j}{u} \text{ or }
      a=\awrite{b}{j}{u} \text{ for some }u \\
      \indicesof{c}{\aconst{v}}            &\text{otherwise, where } \propath{a}{\aconst{v}} = (r,c)\\
    \end{cases}
  \end{equation*}
  \qed
\end{definition}

Propagating the default value of \aconst{v} to array $a$
is denoted as \propath{a}{\aconst{v}},
and the reason for this propagation
is denoted as \reasonof{a}{\aconst{v}}.
If a default value $v$ is propagated to array $a$,
indices of $a$ that are not updated along the propagation path
store this default value.

\begin{definition}[Array Default Value]
  Given array $a$ and constant array \aconst{v},
  if $\propath{a}{\aconst{v}} \ne \emptypair$,
  the default value of array $a$ at indices $i \not\in
  \indicesof{a}{\aconst{v}}$ is $v$, i.e.,
  \bforall{i : \indexsort}{\imp{i \not\in \indicesof{a}{\aconst{v}}}{\teq{\aread{a}{i}}{v}}}.
  Otherwise, it is uninterpreted.
  \qed
\end{definition}

\Cref{fig:calcext} lists the rules
for propagating array default values by modifying \propathf
(rules \ruleinitc, \rn{Cow*}, \rn{CEq*}), and
conflict rule \rulecongc for extending \assertions with a new constant array
congruence lemma.

Rule \ruleinitc initializes map \propathf for each constant array
$\aconst{v}\in\terms{\assertions}$ with its default value.
Rules \rulecowd and \rulecowu propagate default values over array stores
\awrite{a}{j}{u}.
These two rules are only applied if there exists an index $i$
that has not been updated on the propagation path from~\aconst{v} to array
store \awrite{a}{j}{u} and is different from $j$.
For infinite index domains, there always exists such an index $i$.
Rules \ruleceqr and \ruleceql propagate default values over array equalities
$\teq{a}{c}\in\terms{\assertions}$ from left to right and vice versa.
Rule \rulecongc handles the case when two default values $v$ and $w$ with
$\I \models \tneq{v}{w}$ are propagated to the same indices of array $a$.

\begin{example}
  Consider formula $\varphi$ from \Cref{ex:finite}, defined over finite constant arrays
  with bit-vectors of size 2 as indices and Booleans as elements.
  Our initial configuration is given as follows.
  \begin{align*}
    \assertions &\coloneq
    \teq{
      \overbrace{\awrite{\aconst{v}}{i_1}{u_1}}^{s_1}}{
      \overbrace{\awrite{a}{j_1}{u_2}}^{s_2}
    }
    \wedge
    \teq{
      \overbrace{\awrite{a}{i_2}{u_3}}^{s_3}}{
      \overbrace{\awrite{\aconst{w}}{j_2}{u_4}}^{s_4}
    }
    \wedge
    \tneq{v}{w}\\
    \I_0 &\coloneq \none\\
    \propathf_0 &\coloneq \emptymap
  \end{align*}
  After applying rules \ruleinitw and \ruleinitc for all stores and constant
  arrays, we get
  \begin{align*}
    \propath{s_1}{\aread{s_1}{i_1}} \coloneq (\btrue,s_1),
    &&\propath{s_3}{\aread{s_3}{i_2}} \coloneq (\btrue,s_3),
    &&\propath{\aconst{v}}{\aconst{v}} \coloneq (\btrue,\aconst{v}),\\
    \propath{s_2}{\aread{s_2}{j_1}} \coloneq (\btrue,s_2),
    &&\propath{s_4}{\aread{s_4}{j_2}} \coloneq (\btrue,s_4),
    &&\propath{\aconst{w}}{\aconst{w}} \coloneq (\btrue,\aconst{w}).
  \end{align*}
  Now, assume rule \ruleinterp yields the following interpretation \I:
  \begin{align*}
    i_1 = j_1 = i_2 = j_2,
    u_1 = u_2 = u_3 = u_4,
    v \ne w,
    s_1 = s_2,
    s_3 = s_4,
    \\
    \aread{s_1}{i_1} = u_1,
    \aread{s_2}{j_1} = u_2,
    \aread{s_3}{i_2} = u_3,
    \aread{s_4}{j_2} = u_4.
  \end{align*}
  We have that
  $\propath{\aconst{v}}{\aconst{v}} \ne \emptypair$ and
  $\propath{s_1}{\aconst{v}} = \emptypair$ for
  $s_1 = \awrite{\aconst{v}}{i_1}{u_1} \in \terms{\assertions}$,
  and $\indicesof{\aconst{v}}{\aconst{v}} = \{\}$.
  Since there exists an index not in $\{\} \cup \{ i_1^\I \}$,
  we can propagate \aconst{v} to $s_1$ by applying rule \rulecowu, which
  yields
  $\propath{s_1}{\aconst{v}} \coloneq (\btrue,\aconst{v})$.
  Intuitively, after we propagated the default value of \aconst{v} to $s_1$, all indices of $s_1$,
  except the updated index~$i_1$, have default value $v$.
  Next, rule \ruleceqr can be applied and propagates~\aconst{v}
  from $s_1$ to $s_2$, which yields
  $\propath{s_2}{\aconst{v}} \coloneq (\teq{s_1}{s_2},s_1)$.
  Since
  $\propath{s_2}{\aconst{v}} \ne \emptypair$ and
  $\propath{a}{\aconst{v}} = \emptypair$ for
  $s_2 = \awrite{a}{j_1}{u_2} \in \terms{\assertions}$
  and there still exists an index not in $\{ i_1^\I \} \cup \{ j_1^\I \}$,
  we can propagate \aconst{v} down to array $a$ via rule \rulecowd, which
  yields
  $\propath{a}{\aconst{v}} \coloneq (\btrue,s_2)$.
  Note that under interpretation \I, indices $i_1$ and $j_1$ are equal and
  thus $\{ i_1^\I \} \cup \{ j_1^\I \}$ yields a set containing one index.
  Following the same reasoning steps,
  from $\propath{a}{\aconst{v}} \ne \emptypair$ we can propagate \aconst{v}
  via rule \rulecowu to $s_3$, afterwards apply rule \ruleceqr to propagate
  \aconst{v} to $s_4$, and finally propagate \aconst{v} down to \aconst{w},
  which yield the following configuration of \propathf:
  \begin{align*}
    \propath{s_1}{\aread{s_1}{i_1}} \coloneq (\btrue,s_1),
    &&\propath{s_3}{\aread{s_3}{i_2}} \coloneq (\btrue,s_3),
    &&\propath{\aconst{v}}{\aconst{v}} \coloneq (\btrue,\aconst{v}),\\
    \propath{s_2}{\aread{s_2}{j_1}} \coloneq (\btrue,s_2),
    &&\propath{s_4}{\aread{s_4}{j_2}} \coloneq (\btrue,s_4),
    &&\propath{\aconst{w}}{\aconst{w}} \coloneq (\btrue,\aconst{w}),\\
    \propath{s_1}{\aconst{v}} \coloneq (\btrue,\aconst{v}),
    &&\propath{s_2}{\aconst{v}} \coloneq (\teq{s_1}{s_2},s_1),
    &&\propath{a}{\aconst{v}} \coloneq (\btrue,s_2),\\
    \propath{s_3}{\aconst{v}} \coloneq (\btrue,a),
    &&\propath{s_4}{\aconst{v}} \coloneq (\teq{s_3}{s_4},s_3),
    &&\propath{\aconst{w}}{\aconst{v}} \coloneq (\btrue,s_4).
  \end{align*}
  We now have that
  $\propath{\aconst{w}}{\aconst{v}}\!\ne\!\emptypair$,
  $\propath{\aconst{w}}{\aconst{w}}\!\ne\!\emptypair$
  and $\indicesof{\aconst{w}}{\aconst{v}}\!=\!\{ i_1, j_1, i_2, j_2\}\!$,
  but indices $i_1$, $j_1$, $i_2$, $j_2$ are all equal under \I,
  and \tneq{v}{w}.
  This means that all indices $i \ne i_1^\I$ need to have the default
  $v$ and $w$ at the same time, which is not possible, and thus, rule
  \rulecongc adds the following lemma to \assertions.
  \begin{align*}
    \overbrace{
    \teq{s_1}{s_2}
    \wedge
    \teq{s_3}{s_4}
    }^{
      \reasonof{\aconst{w}}{\aconst{v}}
    }
    \;
    \wedge
    \overbrace{
    \btrue
    }^{
      \reasonof{\aconst{w}}{\aconst{w}}
    }
    \wedge
    \;\;
    \bexists{i}{\hspace{-1.8em}\bigwedge_{k \in \{i_1, j_1, i_2, j_2 \}}\hspace{-1.8em}\tneq{i}{k}}
    \implies
    \teq{v}{w}
  \end{align*}
  After that \I and \propathf are reset to $\I_0$ and $\propathf_0$, and
  we continue with a new interpretation~\I.
  In the next interpretation, indices $i_1$, $j_1$, $i_2$, $j_2$ will be
  forced to be different due to the added lemma, which may yield
  interpretation~\I:
  \begin{align*}
    i_1 \ne j_1 \ne i_2 \ne j_2,
    u_1 = u_2 = u_3 = u_4,
    u_2 \ne v,
    v \ne w,
    s_1 = s_2,
    s_3 = s_4,
    \\
    \aread{s_1}{i_1}= u_1,
    \aread{s_2}{j_1}= u_2,
    \aread{s_3}{i_2}= u_3,
    \aread{s_4}{j_2}= u_4.
  \end{align*}
  With this interpretation, after applying rules \ruleinitw and \ruleinitc,
  read \aread{s_2}{j_1} is propagated from $s_2$ to \aconst{v} by applying
  rules \rulerowu, \ruleeqr, and \rulerowd.
  This will result in a new lemma generated by rule \ruleroc, since
  $\propath{\aconst{v}}{\aread{s_2}{j_1}} \ne \emptypair$
  and $\I \models \tneq{\aread{s_2}{j_1}}{v}$ due to
  $\I \models \teq{\aread{s_2}{j_1}}{u_2} \wedge \tneq{u_2}{v}$.

  Reads \aread{s_1}{i_1}, \aread{s_3}{i_2}, and \aread{s_4}{j_2} will be
  propagated similarly and after all conflicts are resolved we can finally
  derive a saturated configuration, and thus, conclude with \emph{satisfiable}.

  The following interpretation \I that will lead to a saturated configuration, i.e.,
  by applying all rules until fixed point, will not yield any conflict.
  \begin{align*}
    \I \coloneq \{&
    i_1 \ne j_1 \ne i_2 \ne j_2,
    v = u_2 = u_4,
    w = u_1 = u_3,
    v \ne w,\\
    &s_1 = s_2,
    s_3 = s_4,
    \aread{s_1}{i_1} = u_1,
    \aread{s_2}{j_1} = u_2,
    \aread{s_3}{i_2} = u_3,
    \aread{s_4}{j_2} = u_4
    \}.
  \end{align*}
  \qed
\end{example}


\subsection{Correctness}

In this section, we establish the theorems for refutational and satisfiability
soundness of our base calculus presented in \Cref{sec:base}, and its extension
presented in \Cref{sec:ext}.
Refutational soundness ensures that the calculus never reports \unsat unless
the initial set of constraints is unsatisfiable.  Satisfiability soundness is
the complementary property: if the calculus \emph{does not} report \unsat,
then the initial set of constraints is satisfiable. This is often
referred to as completeness, but we find the term satisfiability soundness more
precise in this context.
We prove the refutational and satisfiability soundness of our calculus
\calcext \begin{arxiv}in the appendix\end{arxiv}\begin{cav}in an extended version of this paper~\cite{arxiv}\end{cav}.

\begin{theorem}[Refutational Soundness]
  If a derivation starting with an initial configuration \config{\assertions}{\I}{\propathf}
  ends in \unsat, then $\assertions$ is $\theoryarr$-unsatisfiable.
\end{theorem}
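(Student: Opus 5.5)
Refutational soundness follows from a single invariant: every rule application either leaves the formula set unchanged or adds a formula that is $\theorycarr$-valid, or at least preserves $\theorycarr$-satisfiability when a fresh symbol is introduced. Since only \ruleconf derives \unsat, and it does so only when the current formula set is $\theoryeuf$-unsatisfiable, the original $\assertions$ must be $\theorycarr$-unsatisfiable. I prove the statement for the extended calculus \calcext and the theory \theorycarr; the base case of \calcbase and \theoryarr is then a special case.

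The proof proceeds by induction on the length of the derivation. The invariant is that for every configuration $\config{\assertions'}{\I'}{\propathf'}$ reached from $\config{\assertions}{\I_0}{\propathf_0}$, $\assertions'$ is $\theorycarr$-equisatisfiable with $\assertions$ (in fact, $\theorycarr$-satisfiability of $\assertions$ implies that of $\assertions'$). The rules \ruleinterp, \rn{Init*}, \rn{Row*}, \rn{Eq*}, \rn{Cow*} and \rn{CEq*} modify only $\I$ and $\propathf$, so the invariant is immediate for them. The flattening transformations that accompany lemma addition only introduce fresh constants defined by equalities, so they preserve satisfiability. Finally, \ruleconf derives \unsat only if $\assertions'$ is $\theoryeuf$-unsatisfiable. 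Every $\theorycarr$-interpretation is in particular a $\theoryeuf$-interpretation, so $\assertions'$, and hence $\assertions$, is $\theorycarr$-unsatisfiable.

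It remains to treat the conflict rules \rulecongr, \rulediseq, \ruleroc and \rulecongc. For \rulediseq, the added instance of \ref{eq:ext} introduces the fresh constant $\adiff{a}{c}$. Any $\theorycarr$-model of the current formulas extends to a model of the lemma by choosing a witness index where $a$ and $c$ differ, if they do; if they are equal the lemma is trivially satisfied. For the remaining three rules I show the lemma is $\theorycarr$-valid. The key auxiliary lemma states that for any propagation step recorded in $\propathf$, the formula $\reasonof{a}{\aread{b}{i}} \imp{}{} \teq{\aread{a}{i}}{\aread{b}{i}}$ is $\theorycarr$-valid. This is proved by induction on the length of the propagation path, discharging one step per rule. \rulerowd and \rulerowu use \ref{eq:rowne} under the reason $\tneq{i}{j}$. \ruleeqr and \ruleeql use \ref{eq:ext} under $\teq{a}{c}$. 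The initial steps use reflexivity, or \ref{eq:roweq} for virtual reads. Given this auxiliary lemma, the \rulecongr lemma follows by congruence and the \ruleroc lemma follows by \ref{eq:roc}.

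The main obstacle is \rulecongc, which requires an analogous invariant for default values. The plan is to prove, by induction on propagation paths of $\aconst{v}$, that
\[
\reasonof{a}{\aconst{v}} \wedge \bigwedge_{k\in\indicesof{a}{\aconst{v}}}\tneq{i}{k} \implies \teq{\aread{a}{i}}{v}
\]
is $\theorycarr$-valid for a universally quantified $i$. The base case $a=\aconst{v}$ is \ref{eq:roc}. For \rulecowd and \rulecowu, the store index $j$ is added to the updated index set, so $\tneq{i}{j}$ is available and \ref{eq:rowne} applies in either direction. For \rn{CEq*}, the updated index set is inherited and \ref{eq:ext} transfers the value through the equality. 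With this invariant, the \rulecongc lemma is valid: under both reasons and the existence of an index $i$ outside the union of $\indicesof{a}{\aconst{v}}$ and $\indicesof{a}{\aconst{w}}$, we get $v=\aread{a}{i}=w$. The care needed here is to keep the updated index sets exactly as in the definition, in particular the case where the reason is $\btrue$ and the step crosses a store; the side condition of the \rn{Cow*} rules is not needed for soundness, only for termination.
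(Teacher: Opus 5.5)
Your proposal is correct and follows the paper's proof essentially step for step. Only \ruleconf yields \unsat, and the rules that leave $\assertions$ untouched are trivial. \rulediseq is handled by reinterpreting the fresh witness $\adiff{a}{c}$. The \rulecongr, \ruleroc and \rulecongc lemmas are shown valid via the same two invariants the paper proves: for reads, $\reasonof{a}{\aread{b}{i}}\implies\teq{\aread{a}{i}}{\aread{b}{i}}$; for default values, under $\reasonof{a}{\aconst{v}}$ the array $a$ reads $v$ outside $\indicesof{a}{\aconst{v}}$. Stating the result for $\theorycarr$ is the right reading, since the \ruleroc and \rulecongc cases need \ref{eq:roc}.

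One step should be corrected. \ruleconf does not fire when $\assertions'$ itself is $\theoryeuf$-unsatisfiable. It fires when $\assertions'\cup\{\teq{\aread{\awrite{a}{i}{u}}{i}}{u}\mid\awrite{a}{i}{u}\in\terms{\assertions'}\}$ is $\theoryeuf$-unsatisfiable. So you need the observation that these virtual-read equalities are instances of \ref{eq:roweq} and hence valid. This is where \ref{eq:roweq} is actually used. In the \ruleinitw base case of your auxiliary lemma, the virtual read is recorded at its own array, so reflexivity suffices there.

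Also, your induction on propagation-path length tacitly assumes that paths are finite and acyclic. This holds because each entry of $\propathf$ is set once, only from an already-defined entry, until a reset. The paper makes this explicit in a separate well-definedness lemma and inducts on derivation length instead.
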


Refutational soundness holds for arbitrary derivations.
For satisfiability soundness, we need a restriction on the order in
which propagations of default values are recorded.
Since rules \rn{Cow*} and \rn{CEq*} record at most one propagation step per
pair $(a, \aconst{v})$, the recorded set of updated indices
\indicesof{a}{\aconst{v}} is determined by the \emph{first} recorded
propagation path.
This may result in an over-approximation of the indices updated along other
derivable propagation paths.
For finite index domains, this can lead to a conflict not detected by
\rulecongc if the existential side condition fails on the union of the
over-approximated index sets.

We therefore restrict derivations to a \emph{strategy} that records
\emph{conflicting propagations eagerly}.
That is, whenever rules \ruleinitc, \rn{Cow*}, and \rn{CEq*} admit sequences of
applications under \I that make rule \rulecongc applicable
for some $a$, \aconst{v} and \aconst{w},
we record the propagation steps that lead to that conflict
before any other propagation step.
This strategy is effective since only finitely many propagation paths,
that visit each each array at most once, need to be considered. Our
implementation realizes this strategy by intersecting the updated index sets
over all propagation paths explored under~\I.

\begin{theorem}[Satisfiability Soundness]
  If a derivation following the above strategy, starting with an initial
  configuration \config{\assertions}{\I_0}{\propathf_0},
  ends in a configuration other than \unsat, and every applicable rule
  is redundant, then $\assertions$ is $\theoryarr$-satisfiable.
\end{theorem}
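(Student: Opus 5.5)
The plan is a model construction. Take the final saturated configuration \config{\assertions}{\I}{\propathf}, which is not \unsat, so \I is a \theoryeuf-interpretation satisfying \assertions. From \I we build a \theorycarr-interpretation $\I_\mathcal{A}$ that agrees with \I on all index and element terms and reinterprets each array term $a \in \termsa{\assertions}$ as a function $\indexsort^\I \to \elementsort^\I$.

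First we partition the array terms into equivalence classes under \I; rules \ruleeqr, \ruleeql, \ruleceqr and \ruleceql propagate across every equality that holds in \I, so it is natural to work per class. For each array $a$ and each index value $x$, we define $a^{\I_\mathcal{A}}(x)$ in three cases:
\begin{itemize}
\item if some propagated read $\aread{b}{i}$ has $\propath{a}{\aread{b}{i}} \ne \emptypair$ and $i^\I = x$, take $(\aread{b}{i})^\I$;
\item otherwise, if some $\aconst{v}$ has $\propath{a}{\aconst{v}} \ne \emptypair$ and $x \notin (\indicesof{a}{\aconst{v}})^\I$, take $v^\I$;
\item otherwise, take a fixed default element of $\elementsort^\I$, chosen per class consistently (see below).
\end{itemize}
Well-definedness of the first case follows from saturation w.r.t.\ \rulecongr. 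Agreement between the first and second cases (when both apply) follows from \ruleroc combined with the propagation rules; the key point is that a read reaching $a$ at a non-updated index can be pushed through \rn{Cow*} down to \aconst{v}. Well-definedness of the second case for two different constant arrays follows from saturation w.r.t.\ \rulecongc. Under the eager-conflict strategy, the recorded index sets are the minimal ones over the relevant paths, so the existential side condition of \rulecongc faithfully captures whether some index outside both sets exists.

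Next we verify the axioms on the terms of \assertions:
\begin{itemize}
\item \ref{eq:roweq} holds via the virtual reads from \ruleinitw together with the congruence of $\aread{\awrite{a}{i}{u}}{i}$ and $u$ ensured by \ruleinterp.
\item \ref{eq:rowne} holds because saturation under \rulerowd, \rulerowu, \rulecowd and \rulecowu guarantees that any read or default value visible on one side of a store at an index $\ne j$ is also visible on the other side.
\item \ref{eq:roc} holds by \ruleinitc and \ruleroc.
\item For equalities $\teq{a}{c}$ true in \I, the \rn{Eq*}/\rn{CEq*} rules make the propagated information of $a$ and $c$ identical. This makes the constructed functions equal, provided the third-case defaults are chosen per \I-equivalence class of arrays.
\item For disequalities, \rulediseq guarantees that the witness reads at \adiff{a}{c} differ in \I, and these reads are propagated, so the functions differ.
\end{itemize}
Finally, since every formula of \assertions is built from flat literals whose truth under $\I_\mathcal{A}$ coincides with \I, we get $\I_\mathcal{A} \models \assertions$.

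The main obstacle is the finite-domain default-value case: showing that when $\propath{a}{\aconst{v}}$ and $\propath{a}{\aconst{w}}$ both exist with $v^\I \ne w^\I$, the index sets jointly cover $\indexsort^\I$ and every covered index is determined consistently by a propagated read. I would first try induction on propagation path length, tracking the invariant that each index in $\indicesof{a}{\aconst{v}}$ carries a propagated read via the virtual read of the corresponding store. I would then use the eager strategy to argue that the recorded sets are intersections over all paths, so the side conditions of \rn{Cow*} and \rulecongc are exact rather than over-approximated.
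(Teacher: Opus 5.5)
Your three-case model and your well-definedness arguments are the paper's, and they are fine: \rulecongr for case one, \rulecongc for case two, and \ruleroc after pushing a read back down the recorded path. The gap is in the literal checks. You claim that \rn{Eq*}/\rn{CEq*} make the propagated information of $\I$-equal arrays identical, and this is false for default values. \rn{CEq*} only guarantees $\propath{c}{\aconst{v}}\ne\emptypair$. The set $\indicesof{c}{\aconst{v}}$ is fixed by whichever path was recorded first, and it can strictly contain $\indicesof{a}{\aconst{v}}$ with no read covering the extra indices.

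Concretely, take $\assertions=\{\teq{e}{\aconst{v}},\ \teq{c}{\awrite{e}{j}{u_1}},\ \teq{c}{\awrite{d}{k}{u_2}},\ \teq{d}{e}\}$ and an $\I$ with $|\indexsort^{\I}|\ge 2$, $j^{\I}=k^{\I}$ and $u_1^{\I}=u_2^{\I}$.
\begin{itemize}
\item A legal derivation may record $\propath{d}{\aconst{v}}$ by \rulecowd from $\awrite{d}{k}{u_2}$, reached via $e$, $\awrite{e}{j}{u_1}$ and $c$, before \rn{CEq*} reaches $d$ from $e$. Then $\indicesof{d}{\aconst{v}}=\{j,k\}$ while $\indicesof{e}{\aconst{v}}=\emptyset$.
\item Neither $d$ nor $e$ carries a read, because \rulerowd would need $\I\models\tneq{j}{k}$ to move the virtual reads below their stores.
\item With a single constant array, \rulecongc can never fire. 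So the strategy imposes no order, and the configuration is saturated.
\end{itemize}
Your model gives $e^{\Iarr}(j^{\I})=v^{\I}$ but puts $d^{\Iarr}(j^{\I})$ in the default case. So $\teq{d}{e}$ fails unless that default happens to be $v^{\I}$. Per-class defaults rescue this instance only if you choose them that way, and you give no rule that works at all offending indices simultaneously.

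Both remedies in your last paragraph fail on this example.
\begin{itemize}
\item Your invariant is false: $\indicesof{d}{\aconst{v}}$ contains $j$ and $k$, yet $d$ carries no read at either.
\item The eager strategy only reorders propagations that lead to a \rulecongc conflict, so recorded sets need not be intersections over paths. The paper mentions intersection only as an implementation device.
\end{itemize}
The paper's own proof makes the same move at this step (``the corresponding read introduced by \ruleinitw would have propagated''). It uses one arbitrary default $\epsilon$ and never invokes the strategy, so the same example defeats it. Two disjoint copies over $\aconst{v},\aconst{w}$ with $\I\models\tneq{v}{w}$ defeat every choice of $\epsilon$. The reference argument therefore does not supply the missing step.

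What closes the gap is to stop using the recorded sets in the model. Call a \emph{$\aconst{v}$-chain to $a$ avoiding $\iota$} a sequence of distinct array terms from $\aconst{v}$ to $a$ in which each step is one of:
\begin{itemize}
\item a \rn{CEq*} step across an equality of $\terms{\assertions}$ that is true under $\I$; or
\item a \rn{Cow*} step across a store $\awrite{b}{j}{u}\in\terms{\assertions}$ with $j^{\I}\ne\iota$.
\end{itemize}
Use ``such a chain exists'' as the condition of the second case.

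Every such chain satisfies all \rn{Cow*} side conditions, with witness $\iota$. So chains for $\aconst{v}$ and $\aconst{w}$, with $\I\models\tneq{v}{w}$, to the same array avoiding the same $\iota$ would form a sequence applicable from $\propathf_0$ that makes \rulecongc applicable. The strategy records that sequence first in the final round, which contradicts saturation. This is exactly where the strategy is needed.

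Your reverse-propagation argument with \ruleroc works along any chain. For $\teq{a}{c}$, and for $\teq{c}{\awrite{a}{j}{u}}$ at $\iota\ne j^{\I}$ (comparing $c$ with $a$), a chain to one side extends by one \rn{CEq*} or \rn{Cow*} step to the other; truncate it if it revisits a node. Reads transfer by \rn{Eq*}/\rn{Row*}. So both sides land in the same case with the same value, and a single global default suffices. For $\teq{a}{\aconst{v}}$, the one-step chain together with \ruleroc gives $a^{\Iarr}=\lambda x.\,v^{\I}$.
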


\section{Integration}

In \Cref{sec:calc},
we presented our new calculus \calcext for the theory of extensional constant
arrays \theorycarr as a set of derivation rules that modify configurations
starting from the initial configuration \config{\assertions}{\I_0}{\propathf_0}.
Our calculus is \emph{independent} of the underlying solver architecture
and only requires the existence of a procedure that provides an interpretation
\I in the empty theory \theoryeuf (rule \ruleinterp).

We integrated a decision procedure for \theorycarr based on our new calculus
\calcext in the state-of-the-art SMT solver Bitwuzla~\cite{bitwuzla}.
Bitwuzla specializes on
theories relevant for bit-precise reasoning and
supports the theories of fixed-size bit-vectors, floating-point
arithmetic, and combinations with arrays and uninterpreted functions.
Until now, it implemented a decision procedure for the theory of extensional
arrays based on~\cite{BrummayerB09}, extended with support for nested arrays
and non-extensional constant arrays.
We replaced this procedure with a new decision procedure based on \calcext,
which seamlessly integrates into the LOD architecture of Bitwuzla.
In the context of \calcext, the bit-vector solver at the core of its LOD
architecture is responsible for providing interpretation \I~\cite{NiemetzPZ24}.
As implemented in Bitwuzla, our procedure incorporates several \emph{optimizations}.

\paragraph{\distinctn.}
When a conflict between two different constant arrays is detected, i.e.,
rule \rulecongc applies, we produce a resulting lemma that does not actually
contain a quantified formula.
Instead, the quantified formula is encoded as a special variant of \distinct, which we call
\distinctn.
While $\distinct(t_1, \ldots, t_k)$ is true iff terms
$t_1, \ldots, t_k$ are pairwise different,
$\distinctn(n, t_1, \ldots, t_k)$ is true iff at least $n$ terms
are pairwise different, i.e.,
the set of values assigned to $t_1, \ldots, t_k$ has at least $n$ elements.
If $n = k$,
then $\distinctn(k, t_1, \ldots, t_k) \equiv \distinct(t_1, \ldots, t_k)$,
while for $n > k$,
$\distinctn(n, t_1, \ldots, t_k) \equiv \bfalse$.
Given a finite sort \elementsort for $t_i$ and
$k > |\elementsort^\I|$, $\distinct(t_1, \ldots, t_k) \equiv \bfalse$,
whereas $\distinctn(n, t_1, \ldots, t_k)$ for $n \le k$ is not.
We encode the quantified formula in \rulecongc as follows.
\begin{align*}
  \bexists{i : \indexsort}{
    \hspace{-3.3em}\bigwedge_{
      k \in \indicesof{a}{\aconst{v}}\cup\indicesof{a}{\aconst{w}}}\hspace{-3.2em}\tneq{i}{k}}
  \;\;\equiv\;\;
  \neg\distinctn(|\indexsort^\I|,\indicesof{a}{\aconst{v}}\cup\indicesof{a}{\aconst{w}})
\end{align*}
The intuition here is that this quantified formula is true if there exists an
index~$i$ that has not been updated between \aconst{v} and $a$ and \aconst{w}
and $a$, and is thus not included in
$\I_{\cup} \coloneq \indicesof{a}{\aconst{v}}\cup\indicesof{a}{\aconst{w}}$.
For \emph{infinite} index domains, this will always be \btrue.
However, for \emph{finite} index domains it depends on whether the set of updated
indices $\I_{\cup}$
includes all possible indices of index domain $\indexsort^\I$, which can be
encoded as
$\distinctn(|\indexsort^\I|,\indicesof{a}{\aconst{v}}\cup\indicesof{a}{\aconst{w}})$.

As an additional optimization, we encode a given
$\distinctn(n, t_1, \ldots, t_k)$ constraint lazily and use a
custom decision heuristic that tries to assign different values to terms
$t_1, \ldots, t_k$.
If our heuristic fails to assign $n$ unique values to the given terms, we
lazily expand the encoding of the constraint until we
find $n$ different values or conclude that $\distinctn(n, t_1, \ldots, t_k) \equiv \bfalse$.

\paragraph{Propagation.}
In order to reduce memory overhead, we do not immediately record pair $(r, c)$
when constructing \propath{a}{t} while exploring all possible propagations of reads.
Since these pairs are only required for constructing \reasonoff,
we instead replay the propagation steps
to produce the full map \propathf containing $(r,c)$
only for the array reads involved in a conflict.
In this replay phase, propagation is
performed in breadth-first-search order, which yields the shortest sequence
of propagation steps to the conflict under the current interpretation.

\section{Evaluation}

We evaluate the performance of our procedure for the theory of extensional
constant arrays as implemented in the SMT solver Bitwuzla~\cite{bitwuzla}, denoted as
configuration \configbzlaconst.
Bitwuzla supports the (quantified and quantifier-free)
theories of fixed-size bit-vectors and floating-point arithmetic, in
combination with arrays and uninterpreted functions.
We therefore focus on problems over \emph{finite} constant arrays.

We compare our configuration \configbzlaconst against its baseline version
\configbzlabase (commit 961ccbc5~\cite{bitwuzla-gh}), and the SMT solvers
\configcvcv~\cite{cvc5} (version 1.3.2), \configziii~\cite{z3}
(commit 9666915d~\cite{z3-gh}) and \configmsat~\cite{mathsat5} (version 5.6.16).
These four support different \emph{fragments} of theory \theorycarr.
In particular, our configuration \configbzlaconst is the only SMT solver that
\emph{fully} supports \theorycarr for finite constant arrays.
Note that \configziii and \configmsat allow the use of finite constant arrays
without restrictions,
but we encountered soundness issues that suggest that they only support the
fragment of \theorycarr \emph{without} extensionality over finite constant arrays.
We exclude Boolector~\cite{btor2} and SMTInterpol\cite{smtinterpol} from
the evaluation, since they do not support extensionality over finite constant
arrays.
We further exclude \configmsat from quantified benchmark sets
since it does not support quantifier reasoning.

Support for constant arrays is not yet standardized in the SMT-LIB
language. Thus, the SMT-LIB benchmark
library~\cite{smtlib25inc,smtlib25noninc} does not include
benchmarks with constant arrays.
We therefore evaluated our approach in different usage scenarios: a) on
benchmarks without constant arrays but where reasoning about constant arrays
is required internally to Bitwuzla;
and b) on a compilation of diverse sets of benchmarks with constant arrays, one
crafted and the others stemming from different applications.
We use the following benchmark sets.

\paragraph{\emph{\benchsmtlib (5,112 benchmarks).}}
The non-incremental benchmarks of the quantified logics \texttt{ABV},
\texttt{ABVFP} and \texttt{ABVFPLRA} of the SMT-LIB benchmark
library~\cite{smtlib25noninc}.
These benchmarks do not contain constant arrays but require constant
array reasoning in subsolver queries of the model-based quantifier
instantiation (MBQI)~\cite{GeM09} engine implemented in Bitwuzla.
This is due to the MBQI procedure asserting the model value of each
constant $x$ that occurs in the ground formulas via equalities of the form
\teq{x}{x^\I}.
The model value of an array constant is represented as a store
chain on a constant-initialized array.
This inevitably introduces equalities over constant arrays.
We use this set to evaluate how much our new procedure improves quantified
reasoning with arrays in Bitwuzla.

\paragraph{\emph{\benchmbqi (4,147 benchmarks).}}
A set of quantifier-free non-incremental benchmarks defined over
value-initialized and nested extensional constant arrays with fixed-size
bit-vectors.
These benchmarks were extracted from subsolver queries of the MBQI solver of
\configbzlaconst on benchmarks from set \emph{smtlib}
that were unsolved by baseline configuration \configbzlabase due to subsolver
queries with extensional constant arrays.
We included at most one satisfiable and one unsatisfiable of these queries
per corresponding \benchsmtlib benchmark.

\paragraph{\emph{\benchpono (41 benchmarks, 2,838 incremental queries).}}
A set of quantifier-free incremental benchmarks over zero-initialized
extensional constant arrays with fixed-size bit-vectors.
This set was extracted from all queries with constant arrays generated by the
k-induction engine of the model checker Pono~\cite{pono} on benchmarks of the
hardware model checking competitions (HWMCC)
2019--2025~\cite{hwmcc19,hwmcc20,hwmcc24,hwmcc25}.
Similarly to Bitwuzla's MBQI engine, this model-checking engine produces queries
that equate array constants with array model values.

\paragraph{\emph{\benchhevm (34 benchmarks).}}
This set contains benchmarks generated by
hevm~\cite{hevm,hevmbench},
a symbolic execution engine for the Ethereum virtual machine.
Benchmarks in this set are encoded over bit-vectors of
size~256 in combination with uninterpreted functions and zero-initialized
extensional constant arrays.

\paragraph{\emph{\benchcraft(410 benchmarks).}}
In order to cover a wider range of the encoding capabilities (and corner cases) of
\theorycarr with finite constant arrays, we created a set of
benchmarks as the instantiations of parameterizing formula
\begin{align*}
{\awrite{\awrite{\awrite{\aconst{v}}{i_1}{e_1^i}}{\ldots}{\ldots}}{i_n}{e_n^i}}
\teqf
{\awrite{\awrite{\awrite{a_1}{k_1}{e_1^k}}{\ldots}{\ldots}}{k_s}{e_s^k}}
\teqf\ldots\\
\ldots\teqf
{\awrite{\awrite{\awrite{a_z}{l_1}{e_1^l}}{\ldots}{\ldots}}{l_t}{e_t^l}}
\teqf
{\awrite{\awrite{\awrite{\aconst{w}}{j_1}{e_1^j}}{\ldots}{\ldots}}{j_m}{e_m^j}}
\end{align*}
over the number of array constants $z$ and
the number of array updates $n$, $m$, $s$ and~$t$.
Indices $\{i_1,\ldots,i_n, k_1,\ldots,k_s, l_1,\ldots,l_t, j_1,\ldots,j_m\}$
and elements $\{e_1^i,\ldots,\allowbreak e_n^i, \allowbreak e_1^k,\ldots,e_s^k,\allowbreak
e_1^l,\ldots,e_t^l,e_1^j,\ldots,e_m^j\}$ are uninterpreted
constants over finite sorts
(\bool, bit-vector, floating-point arithmetic, rounding mode).
This benchmark set is split into the two subsets \benchcraftc and \benchcraftq
($205$ benchmarks each).
Set \benchcraftc encodes constant arrays natively
via \texttt{(as const \ldots)}, and set \benchcraftq encodes
constant arrays with quantifiers via axiom~\ref{eq:roc}.
The quantified encoding serves two purposes:
(1) as a baseline to compare against the native encoding, i.e.,
to show whether the native encoding has any advantages over an axiomatization
with quantifiers; and
(2) if solvers disagree on the native encoding, we can compare their answers
to the results on the quantified encoding to determine
which solver gave a wrong answer.

\begin{table}[t]
  \caption{Number of solved/satisfiable/unsatisfiable/unsolved benchmarks/queries
  (Solved,Sat,Unsat,Unsolved),
  and penalized runtime (T), grouped by benchmark set.
  Number ($x$/$y$) indicates the total number of benchmarks $x$ and
  satisfiability queries $y$ in the set.
  Numbers marked with \textsuperscript{*} include incorrect answers.
  }
  \label{tab:eval:overall}
  \resizebox{\textwidth}{!}{
  \input{table/results.tex}
  }
\end{table}

For each solver and benchmark pair, we allocated one CPU core
and 8GB of memory with a time limit of 1200 seconds.
If a solver terminated with \emph{unknown}, an error or ran into the
memory limit on a specific benchmark, we counted its runtime on that benchmark
as 1200 seconds as a penalty.

\Cref{tab:eval:overall} summarizes the results of each solver over all
benchmark sets.
Overall, \configbzlaconst significantly outperforms \emph{all other solvers} on each
benchmark set.

On the \benchsmtlib set, our new procedure significantly improves the
performance of Bitwuzla's MBQI engine on problems where MBQI subqueries involve
reasoning over extensional constant arrays.
Compared to the baseline version \configbzlabase, the number of solved
benchmarks increases by a factor of 5 (3,545 vs.~717).
On the 636 benchmarks commonly solved by \configbzlaconst and \configbzlabase,
we observe a $27\times$ speed-up in runtime (22 vs.~594.5 seconds).

On the \benchmbqi set, \configbzlaconst solves all queries, including the ones that
\configbzlabase was not able to solve due to extensional constant arrays. This
observed performance gain directly translates to the improved performance on
the \benchsmtlib set, since \configbzlaconst is now able to efficiently solve the
MBQI subqueries.
On the 4,126 instances commonly solved by \configziii and \configbzlaconst,
\configbzlaconst achieves a $140\times$ speed-up in solving time over \configziii
(37.8 vs.~5,279.6 seconds).
Compared to \configmsat, \configbzlaconst achieves a $1.4\times$ speed-up
on the 4,034 commonly solved instances (44.9 vs.~63.9 seconds).

On benchmark set \benchpono,
we observe a similar performance improvement of
\configbzlaconst over \configbzlabase, where \configbzlaconst is able to solve
all incremental queries on which \configbzlabase struggled due to
extensional constant arrays.

On the \benchhevm set,
both \configbzlabase
and \configbzlaconst solve all benchmarks the fastest.
Note that these benchmarks only make use of non-extensional zero-initialized
constant arrays.
Thus, \configbzlabase and \configbzlaconst are expected to perform similarly.
The difference in runtime is due to one benchmark where Bitwuzla's
abstraction-refinement approach for bit-vector arithmetic~\cite{NiemetzPZ24} receives different bit-vector
models, which helps \configbzlaconst to converge faster.

On the \benchcraftc set,
\configbzlaconst solves the highest number of benchmarks.
This benchmark set consists of benchmarks that encode equalities over finite
constant arrays with different default values.
Both \configcvcv and baseline \configbzlabase do not support this case and reject the
input with an appropriate error message.
\configziii and \configmsat, on the other hand, accept the input, but report
wrong answers for a significant number of benchmarks.
In contrast, we did not encounter any disagreements between
any of the solvers on the \benchcraftq set.

We investigated all of the disagreements on the native encoding and compared
the solver results with the results of all solvers on the quantified encoding
in \benchcraftq.
We found that \configziii reports \emph{sat} (resp.~\emph{unsat}) on 31 (19) benchmarks that
are known to be \emph{unsatisfiable} (resp.~\emph{satisfiable}) by
construction.
Further, on benchmarks commonly solved by \configziii between \benchcraftc and \benchcraftq,
it disagrees with itself on 1 benchmark with unknown status
and 5 benchmarks with known status.
\configmsat reports \emph{unsat} on 29 benchmarks with unknown status
that are determined to be satisfiable by
\configbzlaconst and \configbzlabase on the quantified encoding.
Further, 17 of these benchmarks are also solved by \configbzlaconst on the
native encoding and determined to be satisfiable.

In an additional experiment,
we investigated the models produced by
\configbzlaconst, \configmsat, and \configziii on the \benchcraftc set
for correctness.
That is, when a solver reported \emph{sat}, we asked the solver to produce
a model, which was then asserted back to the original formula.
We ran all three solvers on all resulting benchmarks
and found that the \emph{only} solver that did \emph{not} disagree with its own
models was \configbzlaconst.
Further, \configbzlaconst rejected 3 (out of 12) \configmsat models and
69 (out of 84) \configziii models.
These findings suggest that both \configmsat and \configziii, while accepting
the input, do \emph{not} fully support equality over constant arrays with
finite index domains.
We contacted the developers of \configmsat and \configziii and provided the
failing test cases.

Finally,
with the native encoding of constant arrays,
\configbzlaconst is able to solve significantly more benchmarks than with the
quantifier-based encoding \benchcraftq.
Overall, only 13 benchmarks that could be solved by at least one solver with the
quantifier-based encoding could not be solved by \configbzlaconst
with the native encoding.
In contrast, \configbzlaconst was able to solve 59 native benchmarks that
none of the solvers was able to solve with the quantified encoding.
On closer inspection of the behavior of \configbzlaconst on native benchmarks
that were hard to solve, we found that the majority of the solving time is
spent on finding an index arrangement via \distinctn.
We leave improving the lazy expansion scheme for \distinctn to future work.

\section{Conclusion}
We have presented a new calculus \calcext for the theory of extensional
constant arrays \theorycarr without imposing any limitations on the finiteness
of the array index domain.
We have shown the refutational and satisfiability soundness of \calcext.
We integrated a new decision procedure for \theorycarr based on \calcext
in the state-of-the-art
SMT solver Bitwuzla and evaluated its performance on a diverse set of
benchmarks.
The new decision procedure enables Bitwuzla to solve problems it was not able
to handle before, and thus significantly improves over its baseline.

While the integration into Bitwuzla and its evaluation has focused
on arrays with finite index domains, our procedure is agnostic to the
underlying solver architecture and fully supports the infinite domain case.
As future work, we plan to integrate a new array solver based on \calcext
in the SMT solver cvc5.

\begin{credits}
\subsubsection{\ackname} \papthanks

\subsubsection{\discintname}
The authors have no competing interests to declare that are
relevant to the content of this article.
\end{credits}

\subsubsection*{Data Availability Statement.}
The artifact accompanying this paper is ar\-chived and available in the Zenodo
repository at \url{https://zenodo.org/records/19713260}.


\bibliographystyle{splncs04}
\bibliography{references}

\begin{arxiv}
\clearpage
\appendix
\section*{Appendix}

In this appendix, we prove refutational and satisfiability soundness of our base
calculus presented in \Cref{sec:base}, and its extension presented in
\Cref{sec:ext}.

\vspace{2ex}
We start with several crucial lemmas that are needed to prove the main
theorems.
First, we need the following definition, which assigns a
specific numeric value indicating the recursion depth necessary to calculate a
reason $\reasonoff$.
\begin{definition}
  \begin{equation*}
  |\reasonof{a}{t}| =
  \begin{cases}
    0                     &\text{ if } \propath{a}{t} = \emptypair\\
    1                     &\text{ if } t = a \text{ or } t = a[i] \text{ for some } i\\
    1 + |\reasonof{c}{t}| &\text{ otherwise, where } \propath{a}{t}=(r,c)
  \end{cases}
  \end{equation*}
  \qed
\end{definition}

The first lemma states that once a reason $\reasonoff$ has been
established, it does not change unless a conflicting rule is applied and the
configuration is reset.
\begin{lemma}\label{lem:reasonprops}
  Let $D$ be a derivation ending in configuration $C$, with $C\not=\unsat$.  The following are true
  of $\reasonoff$ in $C$ for all $a$ and $t$:
  ($i$) if $C$ is obtained by
  applying a non-conflict rule to a configuration $C'$, and
  $\reasonof{a}{t}\not=\emptypair$ in $C'$,
  then both $\reasonof{a}{t}$ and $|\reasonof{a}{t}|$ remain unchanged
  from $C'$ to $C$.
  ($ii$) $\reasonof{a}{t}$ is well defined;
  ($iii$) $|\reasonof{a}{t}|$ is finite; and
  ($iv$) $\reasonof{a}{t}=\emptypair$ iff
  $\propath{a}{t}=\emptypair$.
\end{lemma}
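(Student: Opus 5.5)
We prove all four claims simultaneously by induction on the length of the derivation $D$. The induction invariant is the following property of the map $\propathf$ in every non-\unsat configuration. Consider any $a$ and $t$ with $\propath{a}{t}=(r,c)$, where we are not in the base case ($t\neq a$ and $t$ is not a read on $a$). Then the entry $\propath{c}{t}$ was recorded strictly earlier in the derivation than $\propath{a}{t}$, and in particular $\propath{c}{t}\neq\emptypair$.

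Given this invariant, assign to each recorded entry its insertion time. Following the second component of propagation steps then strictly decreases insertion time. Hence the recursion in \Cref{def:reason} and in the definition of $|\reasonof{a}{t}|$ reaches a base case after finitely many steps, and never hits the $\emptypair$ case along the way. This yields ($ii$) and ($iii$).

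\textbf{Checking the invariant.} For the base configuration $\propathf_0=\emptymap$ it holds vacuously. The same is true after a conflict rule (\rulecongr, \rulediseq, \ruleroc, \rulecongc) or \ruleinterp, since these reset $\propathf$ to $\propathf_0$; a conflict rule may also yield \unsat, which is excluded.

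For the propagation rules we inspect the premises rule by rule.
\ruleinitr, \ruleinitw and \ruleinitc only record base-case entries $(\btrue,a)$ with $t=a[i]$ or $t=a$.
Every other propagation rule (\rulerowd, \rulerowu, \ruleeqr, \ruleeql, \rulecowd, \rulecowu, \ruleceqr, \ruleceql) records $\propath{a}{t}\coloneq(r,c)$. In each case its premise requires that $\propath{c}{t}\neq\emptypair$ already hold, and that $\propath{a}{t}=\emptypair$ (otherwise the application is redundant).

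Two consequences follow. First, the new entry points to an older one, so the invariant is preserved. Second, no existing entry is overwritten. This is the key point for ($i$): every entry on the chain defining an existing reason $\reasonof{a}{t}\neq\emptypair$ is left untouched. The chain itself is determined solely by those entries, so both $\reasonof{a}{t}$ and its depth are unchanged. The same holds for $\indicesof{a}{\aconst{v}}$, although we do not need it here.

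\textbf{Claim ($iv$).} The direction $\propath{a}{t}=\emptypair \Rightarrow \reasonof{a}{t}=\emptypair$ is the first case of the definition. For the converse, suppose $\propath{a}{t}\neq\emptypair$. If we are in the base case, the reason is $\btrue$. Otherwise it is $\reasonof{c}{t}\wedge r$, where $\reasonof{c}{t}\neq\emptypair$ by the induction on depth just established, using the invariant that $\propath{c}{t}\neq\emptypair$. Hence $\reasonof{a}{t}\neq\emptypair$.

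One subtlety needs care. The base case of \Cref{def:reason} applies whenever $t=a$ or $t=a[i]$, even if $\propath{a}{t}$ were $\emptypair$, which would make the first two cases overlap. We therefore read the first case as taking precedence, and note that base-case entries are recorded only by the \rn{Init*} rules.

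\textbf{Main obstacle.} The main obstacle is making the rule-by-rule check airtight. We must confirm from the rule figures that every non-init propagation rule requires both a nonempty source entry and an empty target entry, so that no rule overwrites an entry or creates a cycle, for instance through \ruleeqr/\ruleeql propagating back and forth. The emptiness guard on the target, together with the "redundant application" convention, is what rules out such cycles. We would state this explicitly as the core auxiliary observation.
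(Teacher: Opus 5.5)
Your proof is correct and follows essentially the same route as the paper. Both argue by induction over the derivation and rest on two facts: no non-conflict rule overwrites a nonempty entry of $\propathf$, and every newly recorded non-base entry points to an already nonempty one. Your explicit insertion-time ordering just states up front the well-foundedness that the paper gets from a minimal-$|\reasonof{a}{t}|$ counterexample.

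One slip to correct: \ruleinterp does not reset $\propathf$. It only assigns $\I$, and the \rn{Init*} rules may already have fired before it, as in the paper's worked example. It therefore preserves your invariant and ($i$) because it leaves $\propathf$ untouched. If it really reset $\propathf$, it would falsify ($i$), since \ruleinterp is a non-conflict rule.
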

\begin{proof}
  The proof is by induction over deductions.  In an initial configuration ($i$)
  holds trivially, and $\reasonoff$ and $\propathf$ are always equal to $\emptypair$,
  so it is easy to see that the properties hold.

  Now, suppose they hold for configuration $C'$, and let $C$ be obtained from $C'$ by
  the application of a rule other than \ruleconf.  Suppose the rule is a
  conflict rule.  Then ($i$) holds trivially, and as above, $\reasonoff$ and
  $\propathf$ are always equal to $\emptypair$, so the other three properties hold as well.

  Suppose the rule is not a conflict rule.
  We first show ($i$) by considering, in $C'$, the cases in the definition of
  $\reasonoff$.  Suppose 
  $\propath{a}{t}=\emptypair$ in $C'$.  By ($iv$), we also have
  $\reasonof{a}{t}=\emptypair$ in $C'$, so the property holds trivially.
  Suppose $t=a$ or $t=a[i]$ for some $i$, then $\reasonof{a}{t}=\btrue$ and
  $|\reasonof{a}{t}|=1$ regardless of the
  configuration.  Finally, assume we have (in $C'$)
  $\reasonof{a}{t}=\reasonof{c}{t} \wedge r \text{, where } (r,c) = \propath{a}{t}$.
  And suppose ($i$) does not hold.
  Then we can choose $a$ and $t$ such that
  $\reasonof{a}{t}\not=\emptypair$ in $C'$,
  $\reasonof{a}{t}$ or $|\reasonof{a}{t}|$ changes value from $C'$ to $C$, and
  $|\reasonof{a}{t}|$ is minimum in $C'$ (possible by property ($iii$)).
  
  But $\propath{a}{t}$ cannot have changed from $C'$ to $C$, since
  all non-conflict rules only modify $\propath{a}{t}$ if its value
  is $\emptypair$. So, we must also have in $C$ that
  $\reasonof{a}{t}=\reasonof{c}{t} \wedge r \text{, where } (r,c) =
  \propath{a}{t}$ and
  $|\reasonof{a}{t}|=|\reasonof{c}{t}| + 1$.
  Thus, if $\reasonof{a}{t}$ or $|\reasonof{a}{t}|$ has changed, it must be
  because either $\reasonof{c}{t}$ or $|\reasonof{c}{t}|$ has also changed.  But by definition,
  $|\reasonof{c}{t}| = |\reasonof{a}{t}| - 1$, which
  contradicts the minimality assumption.

  We now show ($ii$) and ($iii$).  Suppose $\reasonof{a}{t}\not=\emptypair$ in
  $C'$.  We know by the induction hypothesis that $\reasonof{a}{t}$ is well
  defined and $|\reasonof{a}{t}|$ is finite in $C'$.
  As we just showed above, $\reasonof{a}{t}$ and $|\reasonof{a}{t}|$
  remain unchanged in $C$, meaning that
  $\reasonof{a}{t}$ is well defined and $|\reasonof{a}{t}|$ is finite in $C$.
  On the other hand, suppose $\reasonof{a}{t}=\emptypair$ in $C'$.  By
  $(iv)$, we have $\propath{a}{t}=\emptypair$ in $C'$ as well. If
  $\propath{a}{t}=\emptypair$ still in $C$, then it is clear that
  $\reasonof{a}{t}=\emptypair$ in $C$ as well, and so it is well
  defined, and $|\reasonof{a}{t}|=0$.  The last case is when
  $\propath{a}{t}=\emptypair$ in $C'$ but
  $\propath{a}{t}\not=\emptypair$ in $C$.  This happens when the
  rule newly defines $\propath{a}{t}$.  But whenever this happens,
  we either have $t=a$ or $t=a[i]$ for some $i$, in which case
  $\reasonof{a}{t}$ is trivially well defined with $|\reasonof{a}{t}|=1$, or
  the rule newly defines 
  $\propath{a}{t}$ to be some $(r,a')$, where $\propath{a'}{t}\not=\emptypair$
  in $C'$.  In the latter case, by ($iv$), $\reasonof{a'}{t}\not=\emptypair$ in $C'$, so by
  ($i$), $\reasonof{a'}{t}$ and $|\reasonof{a'}{t}|$ are unchanged from $C'$ to
  $C$. And by ($iii$), $|\reasonof{a'}{t}|$ is finite in $C'$. But, by
  definition, $\reasonof{a}{t}=\reasonof{a'}{t} \wedge r$, which is
  well defined, and similarly, $|\reasonof{a}{t}|=|\reasonof{a'}{t}|+1$, which
  is finite.

  Finally, we show ($iv$).  The if direction follows directly by the
  definition of $\reasonof{a}{t}$.  For the only-if direction, consider the
  contrapositive.  Suppose that $\propath{a}{t}\not=\emptypair$.  Then by
  definition, either $\reasonof{a}{t}=\btrue$ or $\reasonof{a}{t}=r_1 \wedge
  r_2$ for some $r_1,r_2$.  We know that it is well defined by ($ii$), so in
  either case, $\reasonof{a}{t}\not=\emptypair$.
  \qed
\end{proof}

The next lemma provides the main invariant needed for soundness in the base calculus \calcbase.

\begin{lemma}\label{lem:invariant}
  Let $D$ be a derivation ending in configuration $C$, with $C\not=\unsat$.  The following is true
  in $C$ for all $a,b,i$:
\[\text{If }\propath{a}{\aread{b}{i}}\not=\emptypair\text{, then }
\theoryarr \models \reasonof{a}{\aread{b}{i}} \implies \teq{\aread{a}{i}}{\aread{b}{i}}.\]
\end{lemma}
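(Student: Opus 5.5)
The plan is to argue by induction over derivations, in the same style as \Cref{lem:reasonprops}, using the recursion depth $|\reasonof{a}{\aread{b}{i}}|$ as a secondary measure. In the initial configuration, and in every configuration produced by a conflict rule (\rulecongr, \rulediseq, \ruleroc, \rulecongc), $\propathf$ is $\propathf_0$ and the claim holds vacuously. The rules \ruleinterp, \ruleinitc, \rn{Cow*} and \rn{CEq*} never add entries of the form $\propath{a}{\aread{b}{i}}$. It therefore suffices to consider \ruleinitr, \ruleinitw, \rulerowd, \rulerowu, \ruleeqr and \ruleeql.

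By \Cref{lem:reasonprops}($i$), a non-conflict rule leaves every already-defined $\reasonof{a}{\aread{b}{i}}$ unchanged. Hence the invariant carries over for all previously defined entries, and only the single newly defined entry has to be checked.

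For \ruleinitr and \ruleinitw, the new entry is $\propath{b}{\aread{b}{i}}$ with destination equal to source. By \Cref{def:reason} the reason is $\btrue$, and $\teq{\aread{b}{i}}{\aread{b}{i}}$ is trivially valid. For the remaining rules, the new entry is $\propath{a}{\aread{b}{i}} \coloneq (r,c)$, where $\propath{c}{\aread{b}{i}}\ne\emptypair$ held in the premise configuration. The induction hypothesis then gives
\[\theoryarr \models \reasonof{c}{\aread{b}{i}} \implies \teq{\aread{c}{i}}{\aread{b}{i}}.\]
Since $\reasonof{a}{\aread{b}{i}} = \reasonof{c}{\aread{b}{i}}\wedge r$, it remains to show $\theoryarr \models r \implies \teq{\aread{a}{i}}{\aread{c}{i}}$ and to chain the two equalities by transitivity. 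This splits into cases by rule.

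\rulerowd: here $c=\awrite{a}{j}{u}$ and $r = \tneq{i}{j}$, so the needed implication is exactly an instance of \ref{eq:rowne}.

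\rulerowu: here $a=\awrite{c}{j}{u}$ and $r=\tneq{i}{j}$, again an instance of \ref{eq:rowne}, used in the other direction.

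\ruleeqr and \ruleeql: here $r$ is $\teq{c}{a}$ (or $\teq{a}{c}$), and the implication follows from the left-to-right direction of \ref{eq:ext}, or simply from congruence of the read operator.

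The main obstacle is bookkeeping rather than mathematics. I need to confirm that the index carried along the propagation path is always the read's own index $i$, so that the fact ``$\aread{b}{i}$ represents $\aread{a}{i}$'' is the right invariant. I also need to confirm that the side conditions recorded in $r$ by the rules in \Cref{fig:calcbasene,fig:calcbase} are exactly $\tneq{i}{j}$ (respectively, the array equality). The point to stress is that these side conditions appear syntactically in $r$ rather than being merely true under $\I$. Only because they are recorded in $r$ does the implication hold $\theoryarr$-validly, and not just in the current interpretation.
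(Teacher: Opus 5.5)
Your proposal is correct and follows essentially the same argument as the paper. Both proceed by induction on the derivation, note vacuity after conflict rules, and use \Cref{lem:reasonprops}($i$) to carry over already-defined entries (including transferring $\reasonof{c}{\aread{b}{i}}$ from $C'$ to $C$), with a per-rule check in which \rulerowd and \rulerowu reduce to \eqref{eq:rowne} and \ruleeqr/\ruleeql to congruence of reads. The announced secondary measure $|\reasonof{a}{\aread{b}{i}}|$ is never actually used and can be dropped.
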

\begin{proof}
  The proof is by induction on the length of a derivation. If $C$ is the
  initial configuration, $\propathf$ is
  empty, so the statement holds vacuously.

  Now, suppose $C$ is obtained by applying a rule to configuration $C'$. We
  consider each rule and show that it preserves the property.

  First, if the rule is a conflict rule, then $\propathf$ is empty in $C$, so
  again, the statement holds vacuously.
  
  Second, if $\propath{a}{\aread{b}{i}}\not=\emptypair$ in $C'$, then
  by~\Cref{lem:reasonprops}, $\reasonof{a}{\aread{b}{i}}$ has the same value in $C$ and
  $C'$, so, since the property holds in $C'$ by the induction hypothesis, the
  property must hold also in $C$.

  The remaining case to consider is when $\propath{a}{\aread{b}{i}}$ is newly defined
  by the rule.  We consider all such rules.
  First note that the \ruleinitr and \ruleinitw rules define $\propathf$ for
  a pair of arguments of the form $(a,a[i])$.  Since $\teq{a[i]}{a[i]}$ is
  always true, the property clearly holds in $C$.

  Next, consider \rulerowd.  The newly defined value of $\propath{a}{\aread{b}{i}}$ is
  $(\tneq{i}{j},\awrite{a}{j}{u})$.  We also know that
  $\propath{\awrite{a}{j}{u}}{\aread{b}{i}} \ne \emptypair$ in $C'$,
  so we also have $\theoryarr \models \reasonof{\awrite{a}{j}{u}}{\aread{b}{i}}
  \implies \teq{\aread{\awrite{a}{j}{u}}{i}}{\aread{b}{i}}$ in $C'$ by the
  induction hypothesis. The same holds also in $C$ by~\Cref{lem:reasonprops}. Now, by the definition of $\reasonoff$, we have (in $C$)
  $\reasonof{a}{\aread{b}{i}}=\reasonof{\awrite{a}{j}{u}}{\aread{b}{i}}\wedge\tneq{i}{j}$.
  And by the~\eqref{eq:rowne} array axiom, we have $\theoryarr \models
  \tneq{i}{j} \implies \teq{\aread{a}{i}}{\aread{\awrite{a}{j}{u}}{i}}$.
  It follows that (in $C$)
  $\theoryarr\models\reasonof{a}{\aread{b}{i}}\implies\teq{\aread{a}{i}}{\aread{b}{i}}$.

  The \rulerowu case is similar.  Suppose $a=\awrite{a'}{j}{u}$.  The newly defined value of $\propath{\awrite{a'}{j}{u}}{\aread{b}{i}}$ is
  $(\tneq{i}{j},a')$.  We also know that
  $\propath{a'}{\aread{b}{i}} \ne \emptypair$ in $C'$,
  so we also have $\theoryarr \models \reasonof{a'}{\aread{b}{i}}
  \implies \teq{\aread{a'}{i}}{\aread{b}{i}}$ in $C'$ by the
  induction hypothesis. The same holds also in $C$
  by~\Cref{lem:reasonprops}. Now, by the definition of $\reasonoff$, we have
  (in $C$) $\reasonof{\awrite{a'}{j}{u}}{\aread{b}{i}}=\reasonof{a'}{\aread{b}{i}}\wedge\tneq{i}{j}$.
  And by the~\eqref{eq:rowne} array axiom, we have $\theoryarr \models \tneq{i}{j} \implies
  \teq{\aread{a'}{i}}{\aread{\awrite{a'}{j}{u}}{i}}$.  It follows that (in $C$)
  $\theoryarr\models\reasonof{\awrite{a'}{j}{u}}{\aread{b}{i}}\implies\teq{\aread{\awrite{a'}{j}{u}}{i}}{\aread{b}{i}}$.

  Consider now \ruleeql.  The newly defined value of
  $\propath{a}{\aread{b}{i}}$ in $C$ is
  $(\teq{a}{c},c)$.  We also know that
  $\propath{c}{\aread{b}{i}} \ne \emptypair$ in $C'$,
  so we also have $\theoryarr \models \reasonof{c}{\aread{b}{i}}
  \implies \teq{\aread{c}{i}}{\aread{b}{i}}$ in $C'$ by the
  induction hypothesis. The same holds also in $C$ by~\Cref{lem:reasonprops}. Now, by the definition of $\reasonoff$, we have (in $C$)
  $\reasonof{a}{\aread{b}{i}}=\reasonof{c}{\aread{b}{i}}\wedge\teq{a}{c}$.
  And by equality reasoning, we have $\theoryarr \models \teq{a}{c} \implies
  \teq{\aread{a}{i}}{\aread{c}{i}}$.  It follows that
  $\theoryarr\models\reasonof{a}{\aread{b}{i}}\implies\teq{\aread{a}{i}}{\aread{b}{i}}$.
  The \ruleeqr case is analogous.

  These are the only rules that add a newly defined value to $\propathf$ of the
  form $(a,\aread{b}{i})$.
  \qed
  \end{proof}

  Next, for our extended calculus \calcext we need an additional definition
  and a pair of lemmas similar to the \calcbase case.

\begin{definition}
  \begin{equation*}
  |\indicesof{a}{\aconst{v}}| =
  \begin{cases}
    0                     &\text{if } \propath{a}{\aconst{v}} = \emptypair\\
    1                     &\text{if } a = \aconst{v}\\
    1 + |\indicesof{b}{\aconst{v}}| &\text{if } 
      \propath{a}{\aconst{v}}=(\btrue,b) \text{ with } \\&
      \hspace{2.57em}b=\awrite{a}{j}{u} \text{ or }
      a=\awrite{b}{j}{u} \text{ for some }u \\
    1 + |\indicesof{c}{\aconst{v}}| &\text{otherwise, where } \propath{a}{\aconst{v}}=(r,c)
  \end{cases}
  \end{equation*}
  \qed
\end{definition}

\begin{lemma}\label{lem:indicesprops}
  Let $D$ be a derivation ending in configuration $C$, with $C\not=\unsat$.
  The following are true
  of $\indices$ in $C$ for all $a$ and $t$:
  ($i$) if $C$ is obtained by
  applying a non-conflict rule to a configuration $C'$, and
  $\indicesof{a}{\aconst{v}}\not=\emptypair$ in $C'$,
  then both $\indicesof{a}{\aconst{v}}$ and $|\indicesof{a}{\aconst{v}}|$
  remain unchanged from $C'$ to $C$.
  ($ii$) $\indicesof{a}{\aconst{v}}$ is well defined; ($iii$)
  $|\indicesof{a}{\aconst{v}}|$ is finite; and
  ($iv$) $\indicesof{a}{\aconst{v}}=\emptypair$ iff
  $\propath{a}{\aconst{v}}=\emptypair$.
\end{lemma}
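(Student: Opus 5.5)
The plan is to mirror the proof of \Cref{lem:reasonprops} almost verbatim, with $\indicesof{a}{\aconst{v}}$ in place of $\reasonof{a}{t}$ and the new measure $|\indicesof{a}{\aconst{v}}|$ in place of $|\reasonof{a}{t}|$. The argument is by induction over the derivation $D$. In the initial configuration $\propathf_0$ is empty, so every $\indicesof{a}{\aconst{v}}=\emptypair$ and every $|\indicesof{a}{\aconst{v}}|=0$; hence ($ii$)--($iv$) hold trivially and ($i$) is vacuous. If $C$ is obtained from $C'$ by a conflict rule (\rulecongr, \rulediseq, \ruleroc, \rulecongc), then $\propathf$ is reset to $\propathf_0$, and the same reasoning applies.

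For a non-conflict rule, I will first prove ($i$) by minimal counterexample, which is well founded by ($iii$) in $C'$. Choose $a,\aconst{v}$ with $\indicesof{a}{\aconst{v}}\neq\emptypair$ in $C'$ whose value or measure changes, and with $|\indicesof{a}{\aconst{v}}|$ minimal. The case $a=\aconst{v}$ is constant: the value is $\emptyset$ and the measure is $1$.

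Otherwise, observe that every propagation rule (\ruleinitc, \rn{Cow*}, \rn{CEq*}, and the base rules) writes $\propath{a}{t}$ only when it is currently $\emptypair$. So $\propath{a}{\aconst{v}}=(r,c)$ is the same in $C$ and $C'$, and in particular the case distinction is unchanged. That distinction is whether $r=\btrue$ and $c$ is a store over $a$ or $a$ is a store over $c$; it depends only on the syntactic pair. In both remaining cases the value is determined by $\indicesof{c}{\aconst{v}}$, either as that set united with the fixed index $j$ or equal to it, and the measure is $1+|\indicesof{c}{\aconst{v}}|$. Any change would therefore have to come from $c$, whose measure is strictly smaller, contradicting minimality.

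The one subtlety I expect here is the store case: the definition says ``with $b=\awrite{a}{j}{u}$ or $a=\awrite{b}{j}{u}$'', so $j$ is read off the syntax of the terms involved. I need it to be uniquely determined and never changed by a rule. It is, because terms are fixed and non-conflict rules do not modify $\assertions$. This is also why the measure is defined with a $+1$ in the store case even though $\indicesof{}{}$ adds only one index: the measure must decrease strictly along the path regardless of whether a new index appears.

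Then ($ii$) and ($iii$): if $\indicesof{a}{\aconst{v}}\neq\emptypair$ in $C'$, both are inherited via ($i$). If $\propath{a}{\aconst{v}}$ remains $\emptypair$, the value is $\emptypair$ with measure $0$. If the rule newly defines it, either it is \ruleinitc with $a=\aconst{v}$, giving value $\emptyset$ and measure $1$. Or it is a \rn{Cow*} or \rn{CEq*} step setting $\propath{a}{\aconst{v}}=(r,c)$ with $\propath{c}{\aconst{v}}\neq\emptypair$ in $C'$ (a rule premise). By ($iv$) and ($i$), $\indicesof{c}{\aconst{v}}$ is then a well-defined finite-depth set, unchanged into $C$, so the new value is well defined and its measure is finite. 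A point to check in the rule premises is that the source $c$ already carries a propagation for the same $\aconst{v}$; I expect this to be where care is needed, by inspecting each rule in \Cref{fig:calcext}.

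Finally, for ($iv$), the ``if'' direction is the first defining case. For ``only if'', when $\propath{a}{\aconst{v}}\neq\emptypair$ the value is either $\emptyset$, a union containing $\{j\}$, or a well-defined (by ($ii$)) recursive value that by induction on the measure bottoms out in $\emptyset$. So it is a set, never $\emptypair$.
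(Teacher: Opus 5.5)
Your proposal is correct and follows essentially the same route as the paper. The paper also proceeds by induction over the derivation, handles conflict rules through the reset of $\propathf$, proves ($i$) by a minimal counterexample on the depth measure (using that the relevant non-conflict rules only fill entries that are $\emptypair$), obtains ($ii$)/($iii$) either by inheritance or from the rule premise on the source array, and checks ($iv$) by cases. The point you flag as needing care, that each \rn{Cow*}/\rn{CEq*} rule requires its source to already carry a propagation of the same $\aconst{v}$, holds by direct inspection of the premises in \Cref{fig:calcext}, and this is exactly the fact the paper relies on.
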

\begin{proof}
  The proof is by induction over deductions.  In an initial configuration ($i$)
  holds trivially, and $\indices$ and $\propathf$ are always equal to $\emptypair$,
  so it is easy to see that the properties hold.

  Now, suppose they hold for configuration $C'$, and let $C$ be obtained from $C'$ by
  the application of a rule other than \ruleconf.  Suppose the rule is a
  conflict rule.  Then ($i$) holds trivially, and as above, $\indices$ and
  $\propathf$ are always equal to
  $\emptypair$, so the other three properties hold as well.

  Suppose the rule is not a conflict rule.
  We first show ($i$) by considering, in $C'$, the cases in the definition of
  $\indices$.  Suppose
  $\propath{a}{t}=\emptypair$ in $C'$.  By ($iv$), we also have
  $\indicesof{a}{\aconst{v}}=\emptypair$ in $C'$, so the property holds trivially.
  Suppose $a=\aconst{v}$; then, $\indicesof{a}{\aconst{v}}=\emptyset$ and
  $|\indicesof{a}{\aconst{v}}|=1$ regardless of the
  configuration.
  Finally, assume we have (in $C'$)
  $\indicesof{a}{\aconst{v}}=\indicesof{b}{\aconst{v}}\cup\{j\}$ (for some
  $b,j$) or $\indicesof{a}{\aconst{v}}=\indicesof{c}{\aconst{v}}$ (for some
  $c$), and suppose ($i$) does not hold.
  Then we can choose $a$ and $v$ such that
  $\indicesof{a}{\aconst{v}}\not=\emptypair$ in $C'$,
  $\indicesof{a}{\aconst{v}}$ or $|\indicesof{a}{\aconst{v}}|$ changes value from $C'$ to $C$, and
  $|\indicesof{a}{\aconst{v}}|$ is minimum in $C'$ (possible by ($iii$)). 
  
  But $\propath{a}{\aconst{v}}$ cannot have changed from $C'$ to $C$, since
  all non-conflict rules only modify $\propath{a}{\aconst{v}}$ if its value
  is $\emptypair$.  So, if $\indicesof{a}{\aconst{v}}$ or
  $|\indicesof{a}{\aconst{v}}|$ has changed, it must be
  because of a change in $\indicesof{b}{\aconst{v}}$ or $|\indicesof{b}{\aconst{v}}|$ (in the first case) or
  because of a change in $\indicesof{c}{\aconst{v}}$ or $|\indicesof{c}{\aconst{v}}|$(in the second case).
  But by definition, we have
  $|\indicesof{b}{\aconst{v}}| = |\indicesof{a}{\aconst{v}}| - 1$ or
  $|\indicesof{c}{\aconst{v}}| = |\indicesof{a}{\aconst{v}}| - 1$, which in
  either case contradicts the minimality assumption.

  We now show ($ii$) and ($iii$).  Suppose $\indicesof{a}{\aconst{v}}\not=\emptypair$ in
  $C'$.  We know by the induction hypothesis that $\indicesof{a}{\aconst{v}}$
  is well defined and $|\indicesof{a}{\aconst{v}}|$ is finite in $C'$.
  As we just showed above, $\indicesof{a}{\aconst{v}}$ and
  $|\indicesof{a}{\aconst{v}}|$ remain unchanged in $C$, meaning that
  $\indicesof{a}{\aconst{v}}$ is well defined and $|\indicesof{a}{\aconst{v}}|$
  is finite in $C$.
  On the other hand, suppose $\indicesof{a}{\aconst{v}}=\emptypair$ in $C'$.  By
  $(iv)$, we have $\propath{a}{\aconst{v}}=\emptypair$ in $C'$ as well. If
  $\propath{a}{\aconst{v}}=\emptypair$ still in $C$, then it is clear that
  $\indicesof{a}{\aconst{v}}=\emptypair$ in $C$ as well, and so it is well
  defined and $|\indicesof{a}{\aconst{v}}|=0$.  The last case is when $\propath{a}{\aconst{v}}=\emptypair$ in $C'$ but
  $\propath{a}{\aconst{v}}\not=\emptypair$ in $C$.  This happens when the
  rule newly defines $\propath{a}{\aconst{v}}$.

  Now, if
  $a=\aconst{v}$, then $\indicesof{a}{\aconst{v}}$ is trivially well defined
  and $|\indicesof{a}{\aconst{v}}|=1$.  Otherwise, the rule newly defines
  $\propath{a}{\aconst{v}}$ to be some $(r,b)$, where $\propath{b}{\aconst{v}}\not=\emptypair$
  in $C'$. By ($iv$) and the induction hypothesis, $\indicesof{b}{\aconst{v}}\not=\emptypair$ in $C'$, so by
  ($i$), $\indicesof{b}{\aconst{v}}$ and $|\indicesof{b}{\aconst{v}}|$ remain
  unchanged from $C'$ to $C$.
  But now, by definition, we either have
  $\indicesof{a}{\aconst{v}}=\indicesof{b}{\aconst{v}}\cup\{j\}$ for some $j$ or
  $\indicesof{a}{\aconst{v}}=\indicesof{b}{\aconst{v}}$,
  which in either case is well defined.  Similarly, we have, in either case,
  $|\indicesof{a}{\aconst{v}}|=1+|\indicesof{b}{\aconst{v}}|$, which is finite.

  Finally, we show ($iv$).  The if direction follows directly by the
  definition of $\indicesof{a}{\aconst{v}}$.  For the only-if direction, consider the
  contrapositive.  Suppose that $\propath{a}{\aconst{v}}\not=\emptypair$.  Then by
  definition, either $\indicesof{a}{\aconst{v}}=\emptyset$,
  $\indicesof{a}{\aconst{v}}=\indicesof{b}{\aconst{v}}\cup\{j\}$ for some
  $b,j$, or $\indicesof{a}{\aconst{v}}=\indicesof{c}{\aconst{v}}$ for some
  $c$.  In each case, we know that it is well defined by ($ii$), so in
  each case, $\indicesof{a}{\aconst{v}}\not=\emptypair$.
  \qed
\end{proof}

\begin{lemma}\label{lem:cinvariant}
  Let $D$ be a derivation ending in configuration $C$, with $C\not=\unsat$.  The following is true
  in $C$ for all $a,v$:
\[\text{If }\propath{a}{\aconst{v}}\not=\emptypair\text{, then }
\theoryarr \models \reasonof{a}{\aconst{v}}\implies \forall\,i.\hspace*{-1em}\bigvee_{k\in\indicesof{a}{\aconst{v}}}\hspace*{-1em}\teq{i}{k} \vee \teq{\aread{a}{i}}{v}.\]
\end{lemma}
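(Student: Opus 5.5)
We prove the statement by induction on the length of the derivation, mirroring the proof of \Cref{lem:invariant}. In the initial configuration, and after any conflict rule, $\propathf$ is empty, so the claim holds vacuously. If $\propath{a}{\aconst{v}}\neq\emptypair$ already holds in the predecessor configuration $C'$, then \Cref{lem:reasonprops}($i$) and \Cref{lem:indicesprops}($i$) show that both $\reasonof{a}{\aconst{v}}$ and $\indicesof{a}{\aconst{v}}$ are unchanged. Hence the induction hypothesis transfers directly to $C$. It therefore remains to treat each rule that newly defines some $\propath{a}{\aconst{v}}$: \ruleinitc, \rulecowd, \rulecowu, \ruleceqr and \ruleceql.

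For \ruleinitc we have $a=\aconst{v}$. Then $\indicesof{a}{\aconst{v}}=\emptyset$, so the disjunction reduces to $\teq{\aread{\aconst{v}}{i}}{v}$, which is exactly axiom~\ref{eq:roc}. Here we read $\theoryarr$ in the statement as the theory including \ref{eq:roc}, i.e., \theorycarr.

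For the \rn{CEq*} rules, the new step is $(\teq{a}{c},c)$ with $\propath{c}{\aconst{v}}\neq\emptypair$ in $C'$. By the definitions, $\reasonof{a}{\aconst{v}}=\reasonof{c}{\aconst{v}}\wedge\teq{a}{c}$ and $\indicesof{a}{\aconst{v}}=\indicesof{c}{\aconst{v}}$. The induction hypothesis for $c$, which still holds in $C$ by the two lemmas above, combined with $\teq{a}{c}\implies\teq{\aread{a}{i}}{\aread{c}{i}}$, gives the claim for $a$.

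For \rulecowu, the destination is $\awrite{b}{j}{u}$ with source $b$ and reason $\btrue$. The definition then gives $\indicesof{\awrite{b}{j}{u}}{\aconst{v}}=\indicesof{b}{\aconst{v}}\cup\{j\}$ and $\reasonof{\awrite{b}{j}{u}}{\aconst{v}}=\reasonof{b}{\aconst{v}}$. Fix $i$. If $\teq{i}{j}$, the new disjunct $\teq{i}{j}$ holds. Otherwise, \ref{eq:rowne} gives $\teq{\aread{\awrite{b}{j}{u}}{i}}{\aread{b}{i}}$, and the induction hypothesis for $b$ yields either $\teq{i}{k}$ for some $k\in\indicesof{b}{\aconst{v}}$ or $\teq{\aread{b}{i}}{v}$. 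Either way the disjunction for $\awrite{b}{j}{u}$ holds.

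The \rulecowd case is symmetric: the source is $\awrite{a}{j}{u}$ and the destination is $a$, again with the index set enlarged by $j$. For $\tneq{i}{j}$ we use \ref{eq:rowne} in the other direction, $\teq{\aread{a}{i}}{\aread{\awrite{a}{j}{u}}{i}}$. For $\teq{i}{j}$ the disjunct $\teq{i}{j}$ holds, so the unknown value $\aread{a}{j}$ never needs to be constrained.

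The main obstacle is bookkeeping rather than mathematics. We must check that the reason component recorded by \rn{Cow*} is exactly $\btrue$ and that its source matches the third case of the definition of $\indicesof{\cdot}{\cdot}$. Any extra side condition of these rules, such as the existence of an index not yet updated, only restricts when the rules apply, so it does not affect soundness. Finally, the rule definitions in the figures must not admit any other case that creates a step $\propath{a}{\aconst{v}}$.
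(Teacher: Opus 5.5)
Your proposal is correct and follows the paper's proof essentially step for step. It uses the same induction on derivation length and transfers already-defined steps via Lemmas~\ref{lem:reasonprops} and~\ref{lem:indicesprops}, handling the same five rules: \ref{eq:roc} for \ruleinitc, \ref{eq:rowne} for \rn{Cow*}, and equality reasoning for \rn{CEq*}. Your remark that the \ruleinitc case needs \ref{eq:roc}, so $\theoryarr$ must be read as \theorycarr, is a fair point the paper silently glosses over.
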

\begin{proof}
  The proof is by induction on the length of a derivation. If $C$ is the
  initial configuration, $\propathf$ is
  empty, so the statement holds vacuously.

  Now, suppose $C$ is obtained by applying a rule other than \ruleconf to configuration $C'$. We
  consider each rule and show that it preserves the property.

  First, if the rule is a conflict rule, then $\propathf$ is empty in $C$, so
  again, the statement holds vacuously.
  
  Second, if $\propath{a}{\aconst{v}}\not=\emptypair$ in $C'$, then
  by~\Cref{lem:reasonprops} and~\Cref{lem:indicesprops},
  $\reasonof{a}{\aconst{v}}$ and $\indicesof{a}{\aconst{v}}$ have the same value in $C$ and
  $C'$, so, since the property holds in $C'$ by the induction hypothesis, the
  property must hold also in $C$.

  The remaining case to consider is when $\propath{a}{\aconst{v}}$ is newly defined
  by the rule.  We consider all such rules.

  First note that the \ruleinitc rule defines $\propathf$ for
  a pair of arguments of the form $(\aconst{v},\aconst{v})$.  Since
  $\theoryarr\models\teq{\aconst{v}[i]}{v}$ for every $i$,
  the property clearly holds in $C$.

  Next, consider \rulecowd.  The newly defined value of $\propath{a}{\aconst{v}}$ is
  $(\btrue,\awrite{a}{j}{u})$.  We also know that
  $\propath{\awrite{a}{j}{u}}{\aconst{v}} \ne \emptypair$ in $C'$,
  so we have $\theoryarr \models \reasonof{\awrite{a}{j}{u}}{\aconst{v}}
  \implies
  \forall\,i.\bigvee_{k\in\indicesof{\awrite{a}{j}{u}}{\aconst{v}}}\teq{i}{k}
  \vee \teq{\aread{\awrite{a}{j}{u}}{i}}{v}$ in $C'$ by the 
  induction hypothesis. The same holds also in $C$ by~\Cref{lem:reasonprops,lem:indicesprops}.
  By the definition of $\reasonoff$, we have (in $C$)
  $\reasonof{a}{\aconst{v}}=\reasonof{\awrite{a}{j}{u}}{\aconst{v}}\wedge\btrue$.
  And, by the definition of $\indices$, we have (in $C$)
  $\indicesof{a}{\aconst{v}}=\indicesof{\awrite{a}{j}{u}}{\aconst{v}}\cup\{j\}$.
  Now, to show the property holds in $C$, let $\I$ be a $\theoryarr$-interpretation with
  $\I\models\reasonof{a}{\aconst{v}}$.  We must show that $\I\models
  \forall\,i.\bigvee_{k\in\indicesof{a}{\aconst{v}}}\teq{i}{k}
  \vee \teq{\aread{a}{i}}{v}$.
  To this end, let $\iota\in\sigma^{\I}$ be such that $\iota\ne k^{\I}$ for
  every $k\in\indicesof{a}{\aconst{v}}$.  We must show that $a^{\I}(\iota)=v^{\I}$.
  First, note that by the definition of $\reasonof{a}{\aconst{v}}$, we have
  $\I\models\reasonof{\awrite{a}{j}{u}}{\aconst{v}}$, so by the induction
  hypothesis, we also have
  $\I\models\forall\,i.\bigvee_{k\in\indicesof{a}{\aconst{v}}}\teq{i}{k}
  \vee \teq{\aread{\awrite{a}{j}{u}}{i}}{v}$.
  Since $\indicesof{\awrite{a}{j}{u}}{\aconst{v}}$ is a subset of
  $\indicesof{a}{\aconst{v}}$, it must be the case that
  $\awrite{a}{j}{u}^{\I}(\iota)=v^{\I}$.
  But we also know that $j^{\I}\ne \iota$ because
  $j\in\indicesof{a}{\aconst{v}}$, so by array axiom \eqref{eq:rowne}, we have
  $a^{\I}(\iota)=v^{\I}$.
  
  The argument for \rulecowu is similar, but we include it for completeness.
  Let $a=\awrite{a'}{j}{u}$. The newly defined value of $\propath{a}{\aconst{v}}$ is
  $(\btrue,a')$.  We also know that
  $\propath{a'}{\aconst{v}} \ne \emptypair$ in $C'$,
  so we have $\theoryarr \models \reasonof{a'}{\aconst{v}}
  \implies
  \forall\,i.\bigvee_{k\in\indicesof{a'}{\aconst{v}}}\teq{i}{k}
  \vee \teq{\aread{a'}{i}}{v}$ in $C'$ by the 
  induction hypothesis. The same holds also in $C$ by~\Cref{lem:reasonprops,lem:indicesprops}.
  By the definition of $\reasonoff$, we have (in $C$)
  $\reasonof{a}{\aconst{v}}=\reasonof{a'}{\aconst{v}}\wedge\btrue$.
  And, by the definition of $\indices$, we have (in $C$)
  $\indicesof{a}{\aconst{v}}=\indicesof{a'}{\aconst{v}}\cup\{j\}$.
  Now, to show the property holds in $C$, let $\I$ be a $\theoryarr$-interpretation with
  $\I\models\reasonof{a}{\aconst{v}}$.  We must show that $\I\models
  \forall\,i.\bigvee_{k\in\indicesof{a}{\aconst{v}}}\teq{i}{k}
  \vee \teq{\aread{a}{i}}{v}$.
  To this end, let $\iota\in\sigma^{\I}$ be such that $\iota\ne k^{\I}$ for
  every $k\in\indicesof{a}{\aconst{v}}$.  We must show that $a^{\I}(\iota)=v^{\I}$.
  First, note that by the definition of $\reasonof{a}{\aconst{v}}$, we have
  $\I\models\reasonof{a'}{\aconst{v}}$, so by the induction
  hypothesis, we also have
  $\I\models\forall\,i.\bigvee_{k\in\indicesof{a'}{\aconst{v}}}\teq{i}{k}
  \vee \teq{\aread{a'}{i}}{v}$.
  Since $\indicesof{a'}{\aconst{v}}$ is a subset of
  $\indicesof{a}{\aconst{v}}$, it must be the case that
  $(a')^{\I}(\iota)=v^{\I}$.
  But we also know that $j^{\I}\ne \iota$ because
  $j\in\indicesof{a}{\aconst{v}}$, so by array axiom \eqref{eq:rowne}, we have
  $a^{\I}(\iota)=v^{\I}$.

  Consider now \ruleceql.  The newly defined value of
  $\propath{a}{\aread{b}{i}}$ in $C$ is
  $(\teq{a}{c},c)$.  We also know that
  $\propath{c}{\aconst{v}} \ne \emptypair$ in $C'$,
  so we also have $\theoryarr \models \reasonof{c}{\aconst{v}}
  \implies
  \forall\,i.\bigvee_{k\in\indicesof{c}{\aconst{v}}}\teq{i}{k}
  \vee \teq{\aread{c}{i}}{v}$ in $C'$ by the 
  induction hypothesis. The same holds also in $C$ by~\Cref{lem:reasonprops,lem:indicesprops}.
  Now, by the definition of $\reasonoff$, we have (in $C$)
  $\reasonof{a}{\aconst{v}}=\reasonof{c}{\aconst{v}}\wedge\teq{a}{c}$.
  And, by the definition of $\indices$, we have (in $C$)
  $\indicesof{a}{\aconst{v}}=\indicesof{c}{\aconst{v}}$.
  And by equality reasoning, we have $\theoryarr \models \teq{a}{c} \implies
  \teq{\aread{a}{i}}{\aread{c}{i}}$.  It follows that
  $\theoryarr \models \reasonof{a}{\aconst{v}}
  \implies
  \forall\,i.\bigvee_{k\in\indicesof{a}{\aconst{v}}}\teq{i}{k}
  \vee \teq{\aread{a}{i}}{v}$
  The \ruleceqr case is analogous.

  These are the only rules that add a newly defined value to $\propathf$ of the
  form $(a,\aconst{v})$.
  \qed
\end{proof}

Refutational soundness is stated by the following theorem.

\begin{theorem}[Refutational Soundness]
  If a derivation starting with an initial configuration \config{\assertions}{\I}{\propathf}
  ends in \unsat, then $\assertions$ is $\theoryarr$-unsatisfiable.
\end{theorem}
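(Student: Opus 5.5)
The plan is to show that every rule application preserves $\theoryarr$-equivalence of the set of formulas $\assertions$, in the sense that every formula added to $\assertions$ is $\theoryarr$-valid (or, for the \rulediseq lemma, preserves $\theoryarr$-satisfiability). Since \unsat can only be derived by rule \ruleconf, the derivation must end with a configuration whose current set $\assertions'$ is $\theoryeuf$-unsatisfiable. A $\theoryeuf$-unsatisfiable set is a fortiori $\theoryarr$-unsatisfiable, because every $\theoryarr$-interpretation is in particular a $\sig$-interpretation. It then suffices to show by induction along the derivation that if the original $\assertions$ is $\theoryarr$-satisfiable, then so is every later $\assertions'$. Only the conflict rules (\rulecongr, \rulediseq, \ruleroc, \rulecongc) modify $\assertions$; propagation rules and \ruleinterp leave it untouched.

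The case analysis over conflict rules goes as follows.

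For \rulecongr, the added lemma is essentially $\reasonof{a}{\aread{b}{i}}\wedge\reasonof{a}{\aread{c}{k}}\wedge\teq{i}{k}\implies\teq{\aread{b}{i}}{\aread{c}{k}}$. Its validity follows from \Cref{lem:invariant} applied twice, which gives $\teq{\aread{a}{i}}{\aread{b}{i}}$ and $\teq{\aread{a}{k}}{\aread{c}{k}}$ under the reasons, plus congruence on $\teq{i}{k}$. \Cref{lem:reasonprops} guarantees the reasons are well defined and non-$\emptypair$, since the premises require the propagation steps to exist.

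For \ruleroc, the lemma is $\reasonof{\aconst{v}}{\aread{b}{i}}\implies\teq{\aread{b}{i}}{v}$. This follows from \Cref{lem:invariant} together with axiom \eqref{eq:roc}.

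For \rulediseq, the added lemma is the ext instance $\teq{a}{c}\vee\tneq{\aread{a}{\adiff{a}{c}}}{\aread{c}{\adiff{a}{c}}}$ with a fresh constant $\adiff{a}{c}$. It is not valid as stated, but it is satisfiability-preserving. Given a $\theoryarr$-model, if $a\neq c$ then by \eqref{eq:ext} some witness index exists, and we interpret the fresh symbol as that witness. Freshness ensures the rest of $\assertions$ is unaffected.

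For \rulecongc, the lemma is $\reasonof{a}{\aconst{v}}\wedge\reasonof{a}{\aconst{w}}\wedge\exists i.\bigwedge_{k\in\indicesof{a}{\aconst{v}}\cup\indicesof{a}{\aconst{w}}}\tneq{i}{k}\implies\teq{v}{w}$. We apply \Cref{lem:cinvariant} to both $(a,\aconst{v})$ and $(a,\aconst{w})$, using \Cref{lem:indicesprops} for well-definedness. For the witness $i$, which avoids both index sets, this gives $\teq{\aread{a}{i}}{v}$ and $\teq{\aread{a}{i}}{w}$, hence $\teq{v}{w}$. So the lemma is $\theoryarr$-valid.

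The main obstacle I expect is bookkeeping rather than mathematics. First, the lemmas are stated for configurations $C\neq\unsat$, so I apply them to the configuration immediately before the conflict rule fires, which is the premise configuration. Second, the flattening transformations that accompany added non-flat literals introduce only fresh constants defined by equalities, so they preserve $\theoryarr$-satisfiability. With these points settled, induction on the derivation length shows that a $\theoryarr$-satisfiable $\assertions$ would keep every later set $\theoryarr$-satisfiable, hence $\theoryeuf$-satisfiable. That contradicts the application of \ruleconf, so $\assertions$ is $\theoryarr$-unsatisfiable.
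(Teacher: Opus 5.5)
Your proposal is correct and follows the paper's proof essentially step for step. You reduce to \ruleconf, show that \rulecongr, \ruleroc, and \rulecongc add $\theoryarr$-valid lemmas (via \Cref{lem:invariant}, axiom \eqref{eq:roc}, and \Cref{lem:cinvariant}), and show that \rulediseq preserves satisfiability by interpreting the fresh $\adiff{a}{c}$ as a witness index.

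One small fix: the premise of \ruleconf is $\theoryeuf$-unsatisfiability of $\assertions \cup \{\teq{\aread{\awrite{a}{i}{u}}{i}}{u} \mid \awrite{a}{i}{u} \in \terms{\assertions}\}$, not of $\assertions$ alone. To close your final contradiction you therefore also need to note, as the paper does, that these virtual-read equalities hold in every $\theoryarr$-model by \eqref{eq:roweq}.
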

\begin{proof}
  We first note that the only rule that derives \unsat is the \ruleconf
  rule, which applies when $\assertions \cup
  \{\teq{\aread{\awrite{a}{i}{u}}{i}}{u} \;|\; \awrite{a}{i}{u} \in
  \terms{\assertions}\}$ is unsatisfiable (in the empty theory).  Clearly, this
  implies that $\assertions \cup
  \{\teq{\aread{\awrite{a}{i}{u}}{i}}{u} \;|\; \awrite{a}{i}{u} \in
  \terms{\assertions}\}$ is also $\theoryarr$-unsatisfiable which is true iff
  $\assertions$ is $\theoryarr$-unsatisfiable, since
  $\{\teq{\aread{\awrite{a}{i}{u}}{i}}{u} \;|\; \awrite{a}{i}{u} \in
  \terms{\assertions}\}$ always holds in $\theoryarr$.

  It thus suffices to show that the remaining rules are \emph{satisfiability
  preserving} with respect to $\theoryarr$, that is,
  $\assertions$ is $\theoryarr$-satisfiable iff $\assertions'$ is $\theoryarr$-satisfiable, where
  $\assertions$ and $\assertions'$ are the values of $\assertions$ before and
  after applying the rule, respectively.  For most rules, this is trivially
  true, as they do not modify $\assertions$.

  Consider now rule \rulecongr, which adds $\reasonof{a}{\aread{b}{i}} \wedge
  \reasonof{a}{\aread{c}{k}} \wedge \teq{i}{k} \implies
  \teq{\aread{b}{i}}{\aread{c}{k}}$ to $\assertions$.  By~\Cref{lem:invariant},
  we know that $\theoryarr \models \reasonof{a}{\aread{b}{i}} \implies
  \teq{\aread{a}{i}}{\aread{b}{i}}$ and also $\theoryarr \models \reasonof{a}{\aread{c}{k}} \implies
  \teq{\aread{a}{k}}{\aread{c}{k}}$.  It follows that $\theoryarr \models
  \reasonof{a}{\aread{b}{i}} \wedge \reasonof{a}{\aread{c}{k}} \wedge
  \teq{i}{k} \implies \teq{\aread{b}{i}}{\aread{c}{k}}$.  Since the added
  formula is \emph{valid} in $\theoryarr$,
  adding it to $\assertions$ does not change its $\theoryarr$-satisfiability.

  Next, consider rule \rulediseq, which adds the formula $\tneq{a}{c} \implies
  \tneq{\aread{a}{\adiff{a}{c}}}{\aread{c}{\adiff{a}{c}}}$.  Call the result
  $\assertions'$.  Clearly, if $\assertions'$ is $\theoryarr$-satisfiable, then
  so is $\assertions$.  In the other direction, suppose $\assertions$ is
  $\theoryarr$-satisfiable with interpretation $\I$.  If $a^\I = c^\I$, then
  $\assertions'$ is also satisfied by $\I$.  Otherwise, if $\I \models
  \tneq{a}{c}$, then $a^\I(\iota) \not= c^\I(\iota)$ for some
  $\iota$.  We can then modify $\I$ to $\I'$ where $\I'$ is identical to $\I$
  except that it interprets $k_{a,c}$ as $\iota$.  $\I'$ then satisfies
  $\assertions'$.

  We next consider rule \ruleroc, which adds
  $\reasonof{\aconst{v}}{\aread{b}{i}}\implies\teq{\aread{b}{i}}{v}$.
  By~\Cref{lem:invariant}, we know that $\theoryarr \models \reasonof{\aconst{v}}{\aread{b}{i}} \implies
  \teq{\aread{\aconst{v}}{i}}{\aread{b}{i}}$.  But by array
  axiom~\eqref{eq:roc}, we know that $\theoryarr\models \aread{\aconst{v}}{i}=v$, so
  $\theoryarr \models \reasonof{\aconst{v}}{\aread{b}{i}} \implies
  \teq{\aread{b}{i}}{v}$. Since the added
  formula is \emph{valid} in $\theoryarr$,
  adding it to $\assertions$ does not change its $\theoryarr$-satisfiability.
  
  Finally, consider rule \rulecongc, which adds
  $\reasonof{a}{\aconst{v}}\wedge\reasonof{a}{\aconst{w}}\wedge\linebreak
    \bexists{i : \indexsort}\bigwedge_{k \in \indicesof{a}{\aconst{v}}\cup\indicesof{a}{\aconst{w}}}\tneq{i}{k}
    \implies \teq{v}{w}$.
  By~\Cref{lem:cinvariant},
  we know that
  $\theoryarr \models \reasonof{a}{\aconst{v}}\implies \forall\,i.\bigvee_{k\in\indicesof{a}{\aconst{v}}}\teq{i}{k} \vee \teq{\aread{a}{i}}{v}$
  and also
  $\theoryarr \models \reasonof{a}{\aconst{w}}\implies \forall\,i.\bigvee_{k\in\indicesof{a}{\aconst{w}}}\teq{i}{k} \vee \teq{\aread{a}{i}}{w}$.
  It follows that
  $\reasonof{a}{\aconst{v}}\wedge\reasonof{a}{\aconst{w}}\wedge
    \bexists{i : \indexsort}\bigwedge_{k \in \indicesof{a}{\aconst{v}}\cup\indicesof{a}{\aconst{w}}}\tneq{i}{k}
    \implies \teq{v}{w}$.
  Since the added
  formula is \emph{valid} in $\theoryarr$,
  adding it to $\assertions$ does not change its $\theoryarr$-satisfiability.

  These are all of the rules that modify $\assertions$.
  \qed
\end{proof}

Refutational soundness ensures that the calculus never reports \unsat unless
the initial set of constraints is unsatisfiable.  Satisfiability soundness is
the complementary property: if the calculus \emph{does not} report \unsat,
then the initial set of constraints is satisfiable.\footnote{This is often
referred to as completeness, but we find the term satisfiability soundness more
precise in this context.}
We need one more lemma.
\begin{lemma}\label{lem:interpreasons}
  Let $D$ be a derivation ending in configuration $C$, with $C\not=\unsat$.
  The following is true in $C$ for all $a,t$.  If $\I\ne\I_0$ and
  $\reasonof{a}{t}\ne\emptypair$, then $\I \models \reasonof{a}{t}$.
\end{lemma}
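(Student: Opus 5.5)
We argue by induction on the length of the derivation $D$, in the same style as \Cref{lem:invariant,lem:cinvariant}. The invariant to maintain is: in every reachable configuration $C\neq\unsat$ with $\I\neq\I_0$, every recorded step $\propath{a}{t}=(r,c)$ that is not an initialization step satisfies $\I\models r$. The lemma then follows by unfolding \Cref{def:reason}. By \Cref{lem:reasonprops}, $\reasonof{a}{t}$ is well defined, has finite depth $|\reasonof{a}{t}|$, and is a conjunction of $\btrue$ together with the reasons $r$ of the finitely many steps on the propagation path. A simple induction on $|\reasonof{a}{t}|$ therefore gives $\I\models\reasonof{a}{t}$ from the invariant.

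\textbf{Base and conflict cases.} In the initial configuration $\I=\I_0$ and $\propathf=\propathf_0$, so the claim holds vacuously. After a conflict rule (\rulecongr, \rulediseq, \ruleroc, \rulecongc), the interpretation and map are reset to $\I_0$ and $\propathf_0$, so again the claim holds vacuously.

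\textbf{Rule \ruleinterp.} This rule replaces $\I$, and we must check that $\propathf$ is empty when it fires. It fires from the reset state, i.e.\ with $\I=\I_0$. Every propagation rule has premises evaluated under the current $\I$, so no step can be recorded while $\I=\I_0$. Hence $\propathf=\propathf_0$ when a new $\I$ is chosen. The key point is that $\I$ never changes while $\propathf$ is non-empty.

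\textbf{Propagation rules.} Each propagation rule records one new step, and by \Cref{lem:reasonprops}(i) the reasons of all previously recorded steps are unchanged. We check each rule's recorded reason against its premises under the fixed current $\I$:
\begin{itemize}
\item \ruleinitr, \ruleinitw and \ruleinitc record reason $\btrue$, or fall under the base case $t=a$ or $t=\aread{a}{i}$ of \Cref{def:reason}.
\item \rulerowd and \rulerowu record $\tneq{i}{j}$, and their guard requires $\I\models\tneq{i}{j}$.
\item \ruleeqr, \ruleeql, \ruleceqr and \ruleceql record $\teq{a}{c}$, and their guard requires $\I\models\teq{a}{c}$.
\item \rulecowd and \rulecowu record $\btrue$.
\end{itemize}

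\textbf{Main obstacle.} The difficulty is not logical but depends on the rule figures, which are not reproduced in the text. We must confirm that each propagation rule indeed contains the guard "$\I\models r$" for its recorded reason $r$. This is clearly intended by the prose: ``if $\I\models\tneq{i}{j}$'' for \rulerowd, and the \rn{Eq*} rules are applied to equalities true under $\I$. We must also confirm that \ruleinterp is only enabled when $\I=\I_0$.

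If \ruleinterp could fire with a non-empty $\propathf$, the invariant would fail. We would then instead restrict to derivations in which \ruleinterp is applied only from $\I_0$, which is how the calculus is described ("reset to $\I_0$, then continue with a new interpretation").
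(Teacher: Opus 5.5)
Your proof is correct and follows essentially the paper's argument. Both use induction over the derivation, in which every newly recorded non-initialization step has reason $\btrue$ (\rulecowd, \rulecowu) or a reason that is itself a premise evaluated under the current $\I$ (\rn{Row*}, \rn{Eq*}, \rn{CEq*}). Your per-step invariant plus unfolding of $\reasonoff$ is a repackaging of the paper's step $\reasonof{a}{t}=\reasonof{c}{t}\wedge r$ with the induction hypothesis applied to $\reasonof{c}{t}$. You also treat \ruleinterp explicitly, which the paper does not; its rule figure confirms the premise $\I=\I_0$.

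One correction there: \ruleinitr, \ruleinitw and \ruleinitc have no premise on $\I$. They can therefore fire while $\I=\I_0$, so $\propathf$ need not be $\propathf_0$ when \ruleinterp fires. However, only initialization steps can be recorded at that point, and your invariant already excludes those, so the \ruleinterp case still goes through once you replace ``$\propathf=\propathf_0$'' by ``$\propathf$ contains only initialization steps''.
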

\begin{proof}
  The proof is by induction on the length of a derivation. If $C$ is the
  initial configuration, $\I=\I_0$, so the statement holds vacuously.

  Now, suppose $C$ is obtained by applying a rule to configuration $C'$. We
  consider each rule and show that it preserves the property.

  First, if the rule is a conflict rule, then $\propathf$ is empty in $C$, so
  again, the statement holds vacuously.
  The remaining case to consider is when $\propath{a}{t}$ is newly defined
  by the rule.  We consider all such rules.

  First note that the \ruleinitr, \ruleinitw, and \ruleinitc rules define $\propathf$ for
  a pair of arguments of the form $(a,a[i])$ or $(a,a)$.  Since
  $\reasonof{a}{t}=\btrue$ in these cases, the property holds.

  For all other rules, if $\propath{a}{t}$ is newly set to $(r,c)$ in $C$, we
  have $\reasonof{a}{t}=\reasonof{c}{t}\wedge r$.  By the induction hypothesis
  and \Cref{lem:reasonprops}, we have $\I\models\reasonof{c}{t}$, so we simply
  have to show $\I\models r$.  But in every such rule, either $r=\btrue$ or
  $\I\models r$ is a premise.
  \qed
\end{proof}


\begin{theorem}[Satisfiability Soundness]
  If a derivation starting with an initial configuration \config{\assertions}{\I_0}{\propathf_0}
  ends in a configuration other than \unsat, and every applicable rule
  is redundant, then $\assertions$ is $\theoryarr$-satisfiable.
\end{theorem}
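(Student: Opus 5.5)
We argue by model construction from the final saturated configuration $C=\config{\assertions}{\I}{\propathf}$. Since $C\neq\unsat$ and \ruleinterp is redundant, $\I\neq\I_0$ is an $\theoryeuf$-model of $\assertions$ together with the virtual-read equalities; by \Cref{lem:interpreasons}, $\I$ satisfies every recorded reason $\reasonof{a}{t}$. We extend $\I$ to a $\theoryarr$-interpretation $\I_\mathcal{A}$. We keep the domains and interpretations of all non-array symbols, and interpret each array sort $\arraysort$ as the full function space $\indexsort^\I\to\elementsort^\I$. The key step is to define, for every array term $a\in\termsa{\assertions}$, a function $a^{\I_\mathcal{A}}$. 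We do this per congruence class of arrays under $\I$: arrays $a,c$ with $\I\models\teq{a}{c}$ for an equality in $\terms{\assertions}$ are connected by \ruleeqr/\ruleeql/\ruleceqr/\ruleceql.

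For an index value $\iota$, set $a^{\I_\mathcal{A}}(\iota)=\aread{b}{i}^\I$ whenever $\propath{a}{\aread{b}{i}}\neq\emptypair$ and $i^\I=\iota$. This is well defined because \rulecongr is redundant: two propagated reads at equal indices have equal values. Otherwise, if $\propath{a}{\aconst{v}}\neq\emptypair$ and $\iota\notin\indicesof{a}{\aconst{v}}^\I$, set $a^{\I_\mathcal{A}}(\iota)=v^\I$. Two different defaults at a common non-updated index are excluded by redundancy of \rulecongc; the eager strategy ensures the recorded index sets are the minimal ones, so no undetected conflict remains. Agreement between a propagated read and a default is ensured by \ruleroc together with propagation along \rn{Row*}/\rn{Cow*}. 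Every remaining $\iota$ receives a fresh value, chosen uniformly per $\I$-equivalence class of arrays. For disequalities $\neg(\teq{a}{c})$, redundancy of \rulediseq guarantees the witness reads $\aread{a}{\adiff{a}{c}}$ and $\aread{c}{\adiff{a}{c}}$ exist with distinct values, so $a^{\I_\mathcal{A}}\neq c^{\I_\mathcal{A}}$. Constant arrays get $\aconst{v}^{\I_\mathcal{A}}=\lambda\iota.\,v^\I$, which is consistent with the above definition by \ruleroc.

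Next we verify the axioms on the terms occurring in $\assertions$. For \eqref{eq:roweq}, the virtual read gives $\awrite{a}{j}{u}^{\I_\mathcal{A}}(j^\I)=u^\I$. For \eqref{eq:rowne}, at $\iota\neq j^\I$ we show $\awrite{a}{j}{u}$ and $a$ agree. If $a$ has a read at $\iota$, \rulerowu pushes it up. If the store has a read at $\iota$, \rulerowd pushes it down. If either side has a default $v$ with $\iota$ not updated, \rulecowu or \rulecowd applies, because $\iota$ itself witnesses the side condition. If neither side has a value, the fresh choice is made to coincide. For array equalities that hold in $\I$, the \rn{Eq*}/\rn{CEq*} rules make the two functions identical. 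Hence every literal of $\assertions$ holds in $\I_\mathcal{A}$, since array atoms are flat and their truth values agree with $\I$.

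The main obstacle is making the ``fresh value'' step and the \eqref{eq:rowne} check consistent, especially for finite domains. There we must ensure that whenever some index escapes all updated index sets, the propagation rules fired, and conversely that the function chosen for $a$ is consistent across store chains that link arrays which are unconstrained by reads. I would handle this by defining the unconstrained values along store-connected components, using a single fresh value or a default per component. Minimality of $\indicesof{\cdot}{\cdot}$ under the eager strategy then shows that no non-updated index is left needing two different defaults. If a finite element sort prevents fresh distinct values, the \rulediseq witnesses already supply the needed differences, so fresh values only need to be consistent, not distinct.
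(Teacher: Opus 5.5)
Your construction is essentially the paper's: propagated reads first, then recorded defaults, then a fallback, with \rulecongr, \rulecongc and \ruleroc for well-definedness and \rulediseq for disequalities. It breaks exactly where you suspected, and neither of your patches repairs it.

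\textbf{The failing step.} The step ``\rulecowu or \rulecowd applies, because $\iota$ itself witnesses the side condition'' only gives $\propath{y}{\aconst{v}}\neq\emptypair$ at the neighbour $y$. The same holds for ``the \rn{Eq*}/\rn{CEq*} rules make the two functions identical''. Since one step is recorded per pair, $\indicesof{y}{\aconst{v}}$ comes from whichever path was recorded first. It may contain some $k$ with $k^{\I}=\iota$ even though no read at $\iota$ ever reaches $y$.

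Concretely, take $\teq{b}{\aconst{v}}$, $\teq{e}{b}$, $\teq{g}{\awrite{b}{k}{u}}$ and $\teq{g}{\awrite{e}{k}{u'}}$. Propagate $\aconst{v}$ from $b$ through the two stores and $g$ down to $e$ before using $\teq{e}{b}$; this records $\indicesof{e}{\aconst{v}}=\{k\}$. The only reads at index $k$ are the two virtual reads, and they cannot pass \rulerowd through their own stores. So $e(k^{\I})$ is left to the fallback, while $b(k^{\I})=v^{\I}$. The paper's appendix proof builds the same model and dismisses this case by claiming the \ruleinitw read of the store at $k$ would have propagated to $b$; that fails here for the same reason.

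\textbf{Why your patches do not help.} The strategy only reorders propagations that lead to a \rulecongc conflict. With a single constant array it imposes nothing, so recorded sets need not be minimal.

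No fallback that ignores $\iota$ can work either, whether chosen per component, per class, or per node. Take \Cref{ex:finite} under its final interpretation. Add $\teq{E}{a}$ and two such gadgets between $a$ and $E$, with store indices $k^{\I}=i_2^{\I}$ and ${k'}^{\I}=j_1^{\I}$. Route $\aconst{v}$ through the first gadget and $\aconst{w}$ through the second before using $\teq{E}{a}$. This gives $\indicesof{E}{\aconst{v}}=\{i_1,j_1,k\}$ and $\indicesof{E}{\aconst{w}}=\{j_2,i_2,k'\}$, and no read reaches $E$ at $i_2^{\I}$ or $j_1^{\I}$. The configuration is saturated and strategy-compliant: $\I$ can be taken to be a genuine model, so no \rulecongc conflict is derivable. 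Yet $E$ must be $v^{\I}$ at $i_2^{\I}$ and $w^{\I}$ at $j_1^{\I}$.

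\textbf{The fix.} You are right that the eager strategy is needed. What it provides at saturation is that no sequence of \ruleinitc, \rn{Cow*} and \rn{CEq*} applications under $\I$ makes \rulecongc applicable; it says nothing about minimal recorded sets. Use it by making the default case path-based. Set $a(\iota)=v^{\I}$ if some such sequence propagates $\aconst{v}$ to $a$ with an index set whose $\I$-values avoid $\iota$; this is the intersection over paths that the implementation computes. Equivalently, $a$ is linked to $\aconst{v}$ through equalities true in $\I$ and stores whose index differs from $\iota$ under $\I$, with $\iota$ itself witnessing every \rn{Cow*} side condition along the way. Three facts then close the argument:
\begin{itemize}
\item Two such paths for $v^{\I}\neq w^{\I}$ would together make \rulecongc applicable with witness $\iota$, which the strategy excludes. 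So the default is well defined.
\item A read at $\iota$ on $a$ travels back along such a path by \rn{Row*}/\rn{Eq*} to $\aconst{v}$, where \ruleroc forces agreement with $v^{\I}$.
\item Both ``has a read at $\iota$'' and ``has an $\iota$-avoiding default'' are preserved across equalities true in $\I$ and across stores with index $\neq\iota$.
\end{itemize}
Hence the two sides of every such edge fall into the same case with the same value. This yields \eqref{eq:rowne} and all array equalities, and one fixed $\epsilon$ suffices for the remaining positions.
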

\begin{proof}

We first note that because \ruleconf does not apply, it must be the case that
$\assertions \cup \{\teq{\aread{\awrite{a}{i}{u}}{i}}{u} \;|\; \awrite{a}{i}{u}
\in \terms{\assertions}\}$ is satisfiable in the empty theory.  Moreover,
because \ruleinterp does not apply, a specific satisfying interpretation $\I$
is already stored in the final configuration.


We will show that $\assertions$ is
$\theoryarr$-satisfiable by using $\I$ to construct a $\theoryarr$-interpretation $\Iarr$
that satisfies $\assertions$. We start by defining the domains of $\Iarr$ as
\begin{align*}
\Iarr_\indexsort &= \I_\indexsort\enspace,\\
\Iarr_\elementsort &= \I_\elementsort\enspace,\\
\Iarr_{\arraysort} &= \lbrace\;\ f \;\;|\;\; f : \Iarr_\indexsort \mapsto \Iarr_\elementsort \;\rbrace\enspace.
\end{align*}

For this proof, it will be convenient to use the SMT-LIB function names for array reads
and writes.  In the rest of the proof, we consider $\aread{a}{i}$ to be
syntactic sugar for $\select(a,i)$ and $\awrite{a}{i}{u}$ to be syntactic sugar
for $\store(a,i,u)$.  We also introduce $\const(v)$ and consider
$\aconst{v}$ syntactic sugar for $\const(v)$.
We define these function symbols in $\Iarr$ as follows:
\begin{align*}
\select^\Iarr &= \lambda\,\alpha:\Iarr_{\arraysort}.\:\lambda\,\iota:\Iarr_\indexsort.\:\alpha(\iota)\enspace,\\
\store^\Iarr &= \lambda\,\alpha:\Iarr_{\arraysort}.\:\lambda\,\iota:\Iarr_\indexsort.\:\lambda\,\epsilon:\Iarr_\elementsort.\:(\lambda\,\kappa:\Iarr_\indexsort. \text{if
} \iota = \kappa \text{ then } \epsilon \text{ else } \alpha(\kappa))\enspace.\\
\const^\Iarr &= \lambda\,\epsilon:\Iarr_\elementsort.\:(\lambda\,\kappa:\Iarr_\indexsort.\epsilon)
\end{align*}

For all variables $v$ that are not of sort $\arraysort$, we keep their original interpretations in $\I$, i.e., we take $v^\Iarr = v^\I$.

Let $\epsilon$ be some element of $\Iarr_\elementsort$.  We interpret each array $a:\arraysort$ as the corresponding function from $\I$, but in 
a restricted manner.

\begin{align*}
a^\Iarr = \lambda\,\iota:\Iarr_\indexsort.\:
\begin{cases}
\select^\I(b^\I,i^\I) & \text{if } \propath{a}{\aread{b}{i}} \ne\emptypair
\text{ and } i^\I = \iota\enspace \\
v^\I & \text{if } \propath{a}{\aconst{v}}\ne\emptypair\text{ and
}\iota\ne k^\I \text{ for every }k\in\indicesof{a}{\aconst{v}}\\
\epsilon & \text{otherwise}
\end{cases}
\end{align*}
Intuitively, we interpret the elements according to the corresponding
interpretations of the reads, we ensure that default elements match with any
relevant constant arrays, and we otherwise pick an arbitrary default element.

We first show that the interpretation is well-defined.
To show that the first case above is well defined, suppose that for some variable
$a$, we have both $\propath{a}{\aread{b}{i}}\ne\emptypair$ and
$\propath{a}{\aread{c}{k}}\ne\emptypair$ with $i^\I=k^\I=\iota$, and suppose
towards a contradiction
that $\select^\I(b^\I,i^\I)\ne\select^\I(c^\I,k^\I)$, or, in other words,
$\I\models \tneq{\aread{b}{i}}{\aread{c}{k}}$.  Then, the premises of
rule \rulecongr are satisfied, so in order for it to be redundant, it must mean that
$\reasonof{a}{\aread{b}{i}} \wedge \reasonof{a}{\aread{c}{k}} \wedge \teq{i}{k} \implies \teq{\aread{b}{i}}{\aread{c}{k}}\in\assertions$.
Now, by~\Cref{lem:interpreasons}, we know that
$\I\models\reasonof{a}{\aread{b}{i}}$ and $\I\models\reasonof{a}{\aread{c}{k}}$.
We also know that $\I\models \teq{i}{k}$.  But this means that we must have
$\I\models\teq{\aread{b}{i}}{\aread{c}{k}}$, which contradicts our assumption.

To show that the second case is well defined, suppose that for some variable
$a$, we have both $\propath{a}{\aconst{v}}\ne\emptypair$ and
$\propath{a}{\aconst{w}}\ne\emptypair$, and suppose towards a contradiction
that for some element $\iota : \indexsort$, we have $\iota\ne k^\I$ for every
$k\in\indicesof{a}{\aconst{v}}\cup\indicesof{a}{\aconst{w}}$ and $w^\I\ne v^\I$.
Then, the premises of
rule \rulecongc are satisfied, so in order for it to be redundant, it must mean that
$\reasonof{a}{\aconst{v}} \wedge \reasonof{a}{\aconst{w}} \wedge
\bexists{i : \indexsort}\bigwedge_{k \in \indicesof{a}{\aconst{v}}\cup\indicesof{a}{\aconst{w}}}\tneq{i}{k}
    \implies \teq{v}{w} \in \assertions$.
Now, by~\Cref{lem:interpreasons}, we know that
$\I\models\reasonof{a}{\aconst{v}}$ and $\I\models\reasonof{a}{\aconst{w}}$.
But this means that we must have
$\I\models\teq{v}{w}$, which contradicts our assumption.

We must also show that the first and second cases don't conflict.  Suppose
towards a contradiction that for some $\iota : \indexsort$, we have both
$\propath{a}{\aread{b}{i}}\ne\emptypair$, with $i^\I=\iota$ and
$\propath{a}{\aconst{v}}\ne\emptypair$ with $\iota\ne k^\I$ for every
$k\in\indicesof{a}{\aconst{v}}$ and $\select^\I(b^\I,i^\I)\ne v^\I$, or in
other words, $\I\models \tneq{\aread{b}{i}}{v}$.
It is not hard to see that in this case, we must also have
$\propath{\aconst{v}}{\aread{b}{i}}\ne\emptypair$. 
Then, the premises of
rule \ruleroc are satisfied, so in order for it to be redundant, it must mean that
$\reasonof{\aconst{v}}{\aread{b}{i}} \implies \teq{\aread{b}{i}}{v} \in \assertions$.
Now, by~\Cref{lem:interpreasons}, we know that
$\I\models\reasonof{\aconst{v}}{\aread{b}{i}}$
But this means that we must have
$\I\models\teq{\aread{b}{i}}{v}$, which contradicts our assumption.

It is easy to see that the definitions of $\select$, $\store$, and $\const$ satisfy the
axioms of $\theoryarr$.  Now, we proceed to show that $\Iarr \models \assertions$.
First, note that by construction, for a variable $v$
not of sort $\arraysort$, we have $v^\Iarr = v^\I$.  Next, consider a
term of the form $\aread{a}{i}\in\terms{\assertions}$.  We
know by the fact that rule \ruleinitr is redundant that
$\propath{a}{\aread{a}{i}}\ne\emptypair$.
Then, by definition, $a^\Iarr(i^\Iarr)=\select^\I(a^\I,i^\I)$.  Then, by
the definition of $\select^\Iarr$, it's clear that $\select^\Iarr(a^\Iarr,i^\Iarr)=\select^\I(a^\I,i^\I)$.
Since a flat term of a non-array sort can only be a variable or an array read,
this shows that if we have flat terms $s,t\in\terms{\assertions}$ of a non-array sort,
$s^\Iarr = s^\I$ and $t^\Iarr=t^\I$.  Thus, all equalities $\teq{s}{t}$ and
disequalities $\tneq{s}{t}$
between non-array terms must have the same truth value in $\Iarr$ as in $\I$.

Next, consider equalities and disequalities between variables of sort $\arraysort$.
For a disequality $\tneq{a}{c} \in \assertions$, we know by rule \rulediseq that 
$\tneq{a}{c} \implies
\tneq{\aread{a}{\adiff{a}{c}}}{\aread{c}{\adiff{a}{c}}}\in\assertions$.
So, since $\I\models\tneq{a}{c}$, we also have $\I\models
\tneq{\aread{a}{\adiff{a}{c}}}{\aread{c}{\adiff{a}{c}}}$.
By \ruleinitr, we have $\propath{a}{\aread{a}{\adiff{a}{c}}}\ne\emptypair$ and
$\propath{c}{\aread{c}{\adiff{a}{c}}}\ne\emptypair$, so
by the definition of $a^\Iarr$ and $c^\Iarr$, it's clear that
$a^\Iarr(\adiff{a}{c}^\Iarr)=(\aread{a}{\adiff{a}{c}})^\I \ne
(\aread{c}{\adiff{a}{c}})^\I = c^\Iarr(\adiff{a}{c}^\Iarr)$.  Therefore,
$a^\Iarr\ne c^\Iarr$.

To see that equalities $\teq{a}{c}$ are
satisfied, let $\iota : \indexsort$ and consider $a^\Iarr(\iota)$.  If
$\propath{a}{\aread{b}{i}}\ne\emptypair$ for some $b,i$ with $i^\Iarr=\iota$,
then $a^\Iarr(\iota)=\select^\I(b^\I,i^\I)$.  Now, we know by the fact that
rule \ruleeqr is redundant that we must also have
$\propath{c}{\aread{b}{i}}\ne\emptypair$.  It follows then that also
$c^\Iarr(\iota)=\select^\I(b^\I,i^\I)$.

Suppose now that $\propath{a}{\aread{b}{i}}=\emptypair$ for every $b,i$ with
$i^\Iarr=\iota$.  Suppose further that $\propath{a}{\aconst{v}}\ne\emptypair$ and
$\iota\ne k^\I$ for every $k\in\indicesof{a}{\aconst{v}}$ so that
$a^\Iarr(\iota)=v^\I$.
We know that rule \ruleceqr is redundant, so we must have
$\propath{c}{\aconst{v}}\ne\emptypair$.  If $\iota\ne k^\I$ for every
$k\in\indicesof{c}{\aconst{v}}$, then we also have $c^\Iarr(\iota)=v^\I$ as
required.  However, if $\iota=k^\I$ for some $k\in\indicesof{c}{\aconst{v}}$,
then there must be a write to index $k$.  And the corresponding read introduced by
\ruleinitw would have propagated to $a$ contradicting our assumption.
A symmetric argument can be made starting with $c^\Iarr(\iota)$.
In the final case, where none of the above hold, we have $a^\Iarr(\iota) =
\epsilon = c^\Iarr(\iota)$.
This concludes the case for equalities and disequalities between variables of
sort $\arraysort$.

Next, consider an equality of the form $\teq{c}{\awrite{a}{i}{u}}$.
Let $\iota : \indexsort$ and consider $c^\Iarr(\iota)$.
First, suppose $\iota = i^\Iarr$.  We know that
$\propath{\awrite{a}{i}{u}}{\aread{\awrite{a}{i}{u}}{i}}\ne\emptypair$ by
\ruleinitw.  It follows from \ruleeql that
$\propath{c}{\aread{\awrite{a}{i}{u}}{i}}\ne\emptypair$.
Thus, by definition, $c^\Iarr(\iota)=(\aread{\awrite{a}{i}{u}}{i})^\I$.  But we
know by \ruleinterp that $\I\models\teq{\aread{\awrite{a}{i}{u}}{i}}{u}$, so
$c^\Iarr=u^\I$.  And by definition of $\select^\Iarr$ and $\store^\Iarr$, we also have
$(\aread{\awrite{a}{i}{u}}{i})^\Iarr=u^\Iarr=u^\I$.

Otherwise, by the definition of
$\store$, and because $i^\Iarr\ne \iota$, we have
$(\awrite{a}{i}{u})^\Iarr(\iota)=a^\Iarr(\iota)$.  So, we must show that $c^\Iarr(\iota)=a^\Iarr(\iota)$.
If
$\propath{c}{\aread{b}{k}}\ne\emptypair$ for some $b,k$ with $k^\Iarr=\iota$,
then $c^\Iarr(\iota)=\select^\I(b^\I,k^\I)$.  Now, we know by the fact that
rule \ruleeqr cannot be applied that we must also have
$\propath{\awrite{a}{i}{u}}{\aread{b}{k}}\ne\emptypair$.
Then, because
$i^\I=i^\Iarr\ne k^\Iarr=k^\I$, we also have by \rulerowd that $\propath{a}{\aread{b}{k}}\ne\emptypair$.
So, $a^\Iarr(\iota)=\select^\I(b^\I,k^\I)$.

Next, suppose that $\propath{c}{\aread{b}{k}}=\emptypair$ for every
$k^\I=\iota$ but $\propath{c}{\aconst{v}}\ne\emptypair$ for some $v$ and 
$\iota\ne k^\I$ for every $k\in\indicesof{c}{\aconst{v}}$ so that
$c^\Iarr(\iota)=v^\I$.
Now, by \ruleceqr, we know
that $\propath{\awrite{a}{i}{u}}{\aconst{v}}\ne\emptypair$.
But if $\iota=k^\I$ for some $k\in\indicesof{\awrite{a}{i}{u}}$,
then there must be a write to index $k$.  And the corresponding read introduced by
\ruleinitw would have propagated to $c$ contradicting our assumption.
Thus, we must have $\iota\not=k^\I$ for every
$k\in\indicesof{\awrite{a}{i}{u}}$.  It follows that $a^\Iarr(\iota)=v^\I$.
A symmetric argument can be made with $\awrite{a}{i}{u}$.
In the final case, where none of the above hold, we have $a^\Iarr(\iota) =
\epsilon = c^\Iarr(\iota)$.

Finally, consider an equality of the form $\teq{a}{\aconst{v}}$.
We know by
\ruleinitc that $\propath{\aconst{v}}{\aconst{v}}\ne\emptypair$.  So, by
\ruleceql, we also have $\propath{a}{\aconst{v}}\ne\emptypair$.  Consider
$a^\Iarr(\iota)$.
We know that $\aconst{v}^\Iarr(\iota)=v^\Iarr=v^\I$.
If
$\propath{a}{\aread{b}{k}}\ne\emptypair$ for some $b,k$ with $k^\Iarr=\iota$,
then $a^\Iarr(\iota)=(\aread{b}{k})^\I$.  But by \ruleeqr, we also then
have $\propath{\aconst{v}}{\aread{b}{k}}\ne\emptypair$.  Suppose
$\I\models\aread{b}{k}\ne v$.
Then, the premises of
rule \ruleroc are satisfied, so we must have
$\reasonof{\aconst{v}}{\aread{b}{k}} \implies \teq{\aread{b}{k}}{v} \in \assertions$.
Now, by~\Cref{lem:interpreasons}, we know that
$\I\models\reasonof{\aconst{v}}{\aread{b}{k}}$
But this means that we must have
$\I\models\teq{\aread{b}{k}}{v}$, which contradicts our assumption. Thus,
$(\aread{b}{k})^\I=v^\I$.
On the other hand, suppose that $\propath{a}{\aread{b}{k}}=\emptypair$ for every
$k^\I=\iota$.
Suppose further that $\propath{a}{\aconst{w}}\ne\emptypair$ and $\iota\ne k^\I$
for every $k\in\indicesof{a}{\aconst{w}}$ so that $a^\Iarr(\iota)=w^\I$.
We know from \ruleceqr that $\propath{\aconst{v}}{\aconst{w}}\ne\emptypair$.
But then if $v^\I\ne w^\I$, \rulecongc would apply, so we must have
$v^\I=w^\I$.
A symmetric argument starting with $\aconst{v}$ shows that it must always be
the case that $a^\Iarr(\iota)=v^\I$.
\qed
\end{proof}
\end{arxiv}

\end{document}

%% file: macros.tex
\newcommand{\rem}[1]{\textcolor{red}{[#1]}}
\newcommand{\an}[1]{\rem{#1 --an}}
\newcommand{\cb}[1]{\rem{#1 --cb}}
\newcommand{\mpr}[1]{\rem{#1 --mp}}

\newtheorem{assumption}{Assumption}{\bfseries}{\itshape}
\Crefname{assumption}{\text{Assumption}}{\text{Assumptions}}
\crefname{assumption}{\text{Assumption}}{\text{Assumptions}}

\newcommand{\teqf}{\ensuremath{\approx}\xspace}
\newcommand{\tneqf}{\ensuremath{\not\teqf}}
\newcommand{\teq}[2]{\ensuremath{#1 \teqf #2}\xspace}
\newcommand{\tneq}[2]{\ensuremath{#1 \tneqf #2}\xspace}

\newcommand{\areadf}{\ensuremath{\mathit{read}}}
\newcommand{\aread}[2]{\ensuremath{#1[#2]}\xspace}
\newcommand{\awritef}{\ensuremath{\mathit{write}}}
\newcommand{\awrite}[3]{\ensuremath{#1\langle#2 \triangleleft #3\rangle}\xspace}
\newcommand{\accf}{\ensuremath{\mathit{S}}\xspace}
\newcommand{\acc}[1]{\ensuremath{\accf(#1)}\xspace}
\newcommand{\aconstf}{\ensuremath{\mathit{const}}\xspace}
\newcommand{\aconst}[1]{\ensuremath{\langle#1\rangle}\xspace}
\newcommand{\termsf}{\ensuremath{\mathit{T}}\xspace}
\newcommand{\terms}[1]{\ensuremath{\termsf(#1)}\xspace}
\newcommand{\termsaf}{\ensuremath{\termsf_\mathcal{A}}\xspace}
\newcommand{\termsa}[1]{\ensuremath{\termsaf(#1)}\xspace}
\newcommand{\assertions}{\ensuremath{A}\xspace}
\newcommand{\config}[3]{\ensuremath{\langle #1, #2, #3\rangle}\xspace}
\newcommand{\conflictf}{\ensuremath{\mathcal{C}}\xspace}
\newcommand{\conflict}[6]{\ensuremath{\langle #1, #2, #3, #4, #5, #6\rangle_R}\xspace}
\newcommand{\cconflict}[5]{\ensuremath{\langle #1, #2, #3, #4, #5\rangle_C}\xspace}
\newcommand{\indices}{\ensuremath{I}\xspace}
\newcommand{\reasonof}[2]{\ensuremath{\mathcal{R}(#1,#2)}\xspace}
\newcommand{\reasonoff}{\ensuremath{\mathcal{R}}\xspace}
\newcommand{\indicesof}[2]{\ensuremath{\indices(#1,#2)}\xspace}

\newcommand{\bool}{\ensuremath{\mathsf{Bool}}\xspace}
\newcommand{\boolnotf}{\ensuremath{\neg}\xspace}
\newcommand{\boolandf}{\ensuremath{\wedge}\xspace}

\newcommand{\sort}{\ensuremath{s}\xspace}
\newcommand{\sorts}{\ensuremath{S}\xspace}

\newcommand{\vars}{\ensuremath{X}\xspace}
\newcommand{\varss}[1]{\ensuremath{\vars_{#1}}\xspace}

\newcommand{\sig}{\ensuremath{\Sigma}\xspace}
\newcommand{\sigs}{\ensuremath{\sig^s}\xspace}
\newcommand{\sigf}{\ensuremath{\sig^f}\xspace}

\newcommand{\I}{\ensuremath{{\mathcal{I}}}\xspace}
\newcommand{\Iarr}{\ensuremath{{\mathcal{A}}}\xspace}
\newcommand{\Is}{\ensuremath{{I}}\xspace}
\newcommand{\sorti}{\ensuremath{\sort^\I}\xspace}

\newcommand{\T}{\ensuremath{\mathcal{T}}\xspace}

\newcommand{\indexsort}{\ensuremath{\sigma}\xspace}
\newcommand{\elementsort}{\ensuremath{\tau}\xspace}
\newcommand{\arraysort}{\ensuremath{(\indexsort \rightarrow \elementsort)}\xspace}
\newcommand{\select}{\texttt{select}}
\newcommand{\store}{\texttt{store}}
\newcommand{\const}{\texttt{const}}

\newcommand{\theorybv}{\ensuremath{\T_{\mathcal{BV}}}\xspace}

\newcommand{\theoryeuf}{\ensuremath{\T_{\mathcal{E}}}\xspace}

\newcommand{\theoryarr}{\ensuremath{\T_\mathcal{A}}\xspace}
\newcommand{\sigarr}{\ensuremath{\sig_\mathcal{A}}\xspace}

\newcommand{\carr}{\ensuremath{\langle\mathcal{A}\rangle}\xspace}
\newcommand{\theorycarr}{\ensuremath{\T_{\carr}}\xspace}
\newcommand{\sigcarr}{\ensuremath{\sig_{\carr}}\xspace}
\newcommand{\Icarr}{\ensuremath{{\mathcal{I}_{\carr}}}\xspace}

\newcommand{\aconsti}[1]{\ensuremath{\langle#1\rangle_\indexsort}\xspace}
\newcommand{\adiff}[2]{\ensuremath{k_{\{#1,#2\}}}\xspace}

\newcommand{\btrue}{\ensuremath{\top}\xspace}
\newcommand{\bfalse}{\ensuremath{\bot}\xspace}
\newcommand{\pluseq}{\ensuremath{\mathrel{+}=}}

\newcommand{\emptymap}{\ensuremath{\lambda x,y \,.\, \emptypair}\xspace}
\newcommand{\emptypair}{\ensuremath{()}\xspace}
\newcommand{\domsize}[1]{\ensuremath{|#1^{\I}|}\xspace}
\newcommand{\propathf}{\ensuremath{\pi}\xspace}
\newcommand{\propath}[2]{\ensuremath{\propathf(#1,#2)}\xspace}

\newcommand{\imp}[2]{\ensuremath{#1\implies#2}\xspace}
\newcommand{\bimp}[2]{\ensuremath{#1\iff#2}\xspace}

\newcommand{\bforall}[2]{\ensuremath{\forall #1.\,#2}\xspace}
\newcommand{\bexists}[2]{\ensuremath{\exists #1.\,#2}\xspace}
\newcommand{\distinct}{\ensuremath{\mathit{distinct}}\xspace}
\newcommand{\distinctn}{\ensuremath{\mathit{distinct_N}}\xspace}

\newcommand{\rn}[1]{\textsf{\small #1}}
\newcommand{\ruleinterp}{\rn{Interp}\xspace}
\newcommand{\ruleinitr}{\rn{InitR}\xspace}
\newcommand{\ruleinitw}{\rn{InitW}\xspace}
\newcommand{\ruleinitc}{\rn{InitC}\xspace}
\newcommand{\rulerowd}{\rn{RowD}\xspace}
\newcommand{\rulerowu}{\rn{RowU}\xspace}
\newcommand{\rulecowd}{\rn{CowD}\xspace}
\newcommand{\rulecowu}{\rn{CowU}\xspace}
\newcommand{\ruleroc}{\rn{Roc}\xspace}
\newcommand{\ruleeqr}{\rn{EqR}\xspace}
\newcommand{\ruleeql}{\rn{EqL}\xspace}
\newcommand{\ruleceqr}{\rn{CEqR}\xspace}
\newcommand{\ruleceql}{\rn{CEqL}\xspace}
\newcommand{\rulediseq}{\rn{DisEq}\xspace}
\newcommand{\rulecongr}{\rn{CongR}\xspace}
\newcommand{\rulecongc}{\rn{CongC}\xspace}
\newcommand{\ruleconf}{\rn{Conf}\xspace}
\newcommand{\unsat}{\rn{unsat}\xspace}

\newcommand{\dpa}{DP\textsubscript{A}\xspace}
\newcommand{\none}{\ensuremath{\mathsf{none}}\xspace}
\newcommand{\calcbase}{\textsf{AEXT}\xspace}
\newcommand{\calcext}{\textsf{CAEXT}\xspace}

\newcommand{\configbzlabase}{\emph{Bitwuzla}\xspace}
\newcommand{\configbzlaconst}{\emph{Bitwu\-zla\textsubscript{\carr}}\xspace}
\newcommand{\configcvcv}{\emph{cvc5}\xspace}
\newcommand{\configziii}{\emph{Z3}\xspace}
\newcommand{\configmsat}{\emph{MathSAT5}\xspace}

\newcommand{\benchsmtlib}{\emph{smtlib}\xspace}
\newcommand{\benchmbqi}{\emph{mbqi}\xspace}
\newcommand{\benchpononb}{\emph{k\nobreakdash-ind}\xspace}
\newcommand{\benchpono}{\emph{k-ind}\xspace}
\newcommand{\benchhevm}{\emph{hevm}\xspace}
\newcommand{\benchcraft}{\emph{crafted}\xspace}
\newcommand{\benchcraftc}{\emph{crafted}\textsubscript{\carr}\xspace}
\newcommand{\benchcraftq}{\emph{crafted}\textsubscript{\ensuremath{\forall}}\xspace}


%% file: table/arrayops.tex
{%
  \begin{tabular}{l@{\hspace{2em}}l@{\hspace{2em}}l@{\hspace{2em}}l}
    \toprule
    \textbf{Syntax} & \textbf{SMT-LIB} & \textbf{Arity} & \textbf{Comment}\\
    \midrule
    \aread{\cdot}{\cdot}
    & \texttt{select}
    & $\arraysort \times \indexsort \rightarrow \elementsort$
    & Array access (read)
    \\
    \awrite{\cdot}{\cdot}{\cdot}
    & \texttt{store}
    & $\arraysort \times \indexsort \times \elementsort \rightarrow \arraysort$
    & Array update (write)
    \\
    $\aconst{\cdot}_\indexsort$
    &
    & $\elementsort \rightarrow \arraysort$
    & Constant array
    \\
    \bottomrule
  \end{tabular}
}

%% file: fig/example2.tex
\begin{tikzpicture}[
    >={Latex[length=2mm,width=2mm]},
    cell/.style={
      draw, minimum width=7mm, minimum height=7mm,
      inner sep=0pt, font=\footnotesize, anchor=center
    },
    cellT/.style={cell, fill=blue!15},
    cellF/.style={cell, fill=orange!22},
    update/.style={line width=1pt},
    arrname/.style={font=\small, anchor=center},
    storearr/.style={->, black, thick, shorten >=2pt, shorten <=2pt},
    eqarr/.style={thick},
    storelabel/.style={font=\footnotesize, fill=white, inner sep=1pt, text=black}
  ]

  %
  \def\cw{0.75}   
  \def\ch{0.7}    

  \def\yRoot   {-1.5}
  \def\yStore  { 1.0}
  \def\yEq     { 3}

  \def\xCV   {-4.0}     
  \def\xA    { 2.0}     
  \def\xCW   { 8.0}     

  \def\xOne  {-4.0}     
  \def\xTwo  { 0.0}     
  \def\xThree{ 4.0}     
  \def\xFour { 8.0}     


  \node[cellF] (cvA) at ({\xCV - 1.5*\cw}, \yRoot) {$\bot$};
  \node[cellF] (cvB) at ({\xCV - 0.5*\cw}, \yRoot) {$\bot$};
  \node[cellF] (cvC) at ({\xCV + 0.5*\cw}, \yRoot) {$\bot$};
  \node[cellF] (cvD) at ({\xCV + 1.5*\cw}, \yRoot) {$\bot$};
  \node[arrname] (cv_name) at (\xCV, \yRoot - 0.7) {$\aconst{v}$};
  \coordinate (cv_top) at (\xCV, {\yRoot + 0.4});

  \node[cellT] (aA) at ({\xA - 1.5*\cw}, \yRoot) {$\top$};
  \node[cellF] (aB) at ({\xA - 0.5*\cw}, \yRoot) {$\bot$};
  \node[cellT] (aC) at ({\xA + 0.5*\cw}, \yRoot) {$\top$};
  \node[cellF] (aD) at ({\xA + 1.5*\cw}, \yRoot) {$\bot$};
  \node[arrname] (a_name) at (\xA, \yRoot - 0.7) {$a$};
  \coordinate (a_top) at (\xA, {\yRoot + 0.4});

  \node[cellT] (cwA) at ({\xCW - 1.5*\cw}, \yRoot) {$\top$};
  \node[cellT] (cwB) at ({\xCW - 0.5*\cw}, \yRoot) {$\top$};
  \node[cellT] (cwC) at ({\xCW + 0.5*\cw}, \yRoot) {$\top$};
  \node[cellT] (cwD) at ({\xCW + 1.5*\cw}, \yRoot) {$\top$};
  \node[arrname] (cw_name) at (\xCW, \yRoot - 0.7) {$\aconst{w}$};
  \coordinate (cw_top) at (\xCW, {\yRoot + 0.4});


  \node[arrname] at (\xOne, \yStore + 0.7) {\awrite{\aconst{v}}{i_1}{u_1}};
  \node[cellT] (s1A) at ({\xOne - 1.5*\cw}, \yStore) {$\top$};
  \node[cellF]       at ({\xOne - 0.5*\cw}, \yStore) {$\bot$};
  \node[cellF]       at ({\xOne + 0.5*\cw}, \yStore) {$\bot$};
  \node[cellF]       at ({\xOne + 1.5*\cw}, \yStore) {$\bot$};
  \draw[update] (s1A.south west) rectangle (s1A.north east);
  \coordinate (s1_bot) at (\xOne, {\yStore - 0.35});
  \coordinate (s1_top) at (\xOne, {\yStore + 1});

  \node[arrname] at (\xTwo, \yStore + 0.7) {\awrite{a}{j_1}{u_2}};
  \node[cellT]       at ({\xTwo - 1.5*\cw}, \yStore) {$\top$};
  \node[cellF]       at ({\xTwo - 0.5*\cw}, \yStore) {$\bot$};
  \node[cellF] (s2C) at ({\xTwo + 0.5*\cw}, \yStore) {$\bot$};
  \node[cellF]       at ({\xTwo + 1.5*\cw}, \yStore) {$\bot$};
  \draw[update] (s2C.south west) rectangle (s2C.north east);
  \coordinate (s2_bot) at (\xTwo, {\yStore - 0.35});
  \coordinate (s2_top) at (\xTwo, {\yStore + 1});

  \node[arrname] at (\xThree, \yStore + 0.7) {\awrite{a}{i_2}{u_3}};
  \node[cellT]       at ({\xThree - 1.5*\cw}, \yStore) {$\top$};
  \node[cellT] (s3B) at ({\xThree - 0.5*\cw}, \yStore) {$\top$};
  \node[cellT]       at ({\xThree + 0.5*\cw}, \yStore) {$\top$};
  \node[cellF]       at ({\xThree + 1.5*\cw}, \yStore) {$\bot$};
  \draw[update] (s3B.south west) rectangle (s3B.north east);
  \coordinate (s3_bot) at (\xThree, {\yStore - 0.35});
  \coordinate (s3_top) at (\xThree, {\yStore + 1});

  \node[arrname] at (\xFour, \yStore + 0.7) {\awrite{\aconst{w}}{j_2}{u_4}};
  \node[cellT]       at ({\xFour - 1.5*\cw}, \yStore) {$\top$};
  \node[cellT]       at ({\xFour - 0.5*\cw}, \yStore) {$\top$};
  \node[cellT]       at ({\xFour + 0.5*\cw}, \yStore) {$\top$};
  \node[cellF] (s4D) at ({\xFour + 1.5*\cw}, \yStore) {$\bot$};
  \draw[update] (s4D.south west) rectangle (s4D.north east);
  \coordinate (s4_bot) at (\xFour, {\yStore - 0.35});
  \coordinate (s4_top) at (\xFour, {\yStore + 1});

  \draw[storearr] (cv_top) -- (s1_bot)
    node[storelabel, midway] {$\langle 00 \triangleleft \btrue\rangle$};
  \draw[storearr] (a_top) -- (s2_bot)
    node[storelabel, midway] {$\langle 10 \triangleleft \bfalse\rangle$};
  \draw[storearr] (a_top) -- (s3_bot)
    node[storelabel, midway] {$\langle 01 \triangleleft \btrue\rangle$};
  \draw[storearr] (cw_top) -- (s4_bot)
    node[storelabel, midway] {$\langle 11 \triangleleft \bfalse\rangle$};


  \coordinate (apex12) at ({(\xOne + \xTwo)/2}, \yEq);
  \coordinate (apex34) at ({(\xThree + \xFour)/2}, \yEq);

  \draw[eqarr] (s1_top) -- (apex12);
  \draw[eqarr] (s2_top) -- (apex12);
  \node[font=\Large, anchor=south] at (apex12) {$\approx$};

  \draw[eqarr] (s3_top) -- (apex34);
  \draw[eqarr] (s4_top) -- (apex34);
  \node[font=\Large, anchor=south] at (apex34) {$\approx$};
\end{tikzpicture}

%% file: rules/non_ext.tex
  \begin{tabular}{c@{\hskip 1em}c}
    \ruleinterp
    \(
    \inferrule{
      \I' \models \assertions \cup \{\teq{\aread{\awrite{a}{i}{u}}{i}}{u} \;|\; \awrite{a}{i}{u} \in \terms{\assertions}\}\\
      \I = \I_0
    }{
      \I \coloneq \I'
     }
    \)
    \\[6ex]
    \ruleinitr
    \(
    \inferrule{
      \aread{a}{i} \in \terms{\assertions}
    }{
      \propath{a}{\aread{a}{i}} \coloneq (\btrue,a)
     }
    \)
    \\[6ex]
    \ruleinitw
    \(
    \inferrule{
      \awrite{a}{i}{u} \in \terms{\assertions}\\
    }{
      \propath{\awrite{a}{i}{u}}{\aread{\awrite{a}{i}{u}}{i}} \coloneq (\btrue,\awrite{a}{i}{u})
     }
    \)
    \\[6ex]
    \rulerowd
    \(
    \inferrule{
      \I \models \tneq{i}{j}\\
      \propath{\awrite{a}{j}{u}}{\aread{b}{i}} \ne \emptypair\\
      \propath{a}{\aread{b}{i}} = \emptypair
    }{
      \propath{a}{\aread{b}{i}} \coloneq (\tneq{i}{j},\awrite{a}{j}{u})
     }
    \)
    \\[6ex]
    \rulecongr
    \(
    \inferrule{
      \I \models \teq{i}{k}\\
      \propath{a}{\aread{b}{i}} \ne \emptypair\\
      \propath{a}{\aread{c}{k}} \ne \emptypair\\
      \I \models \tneq{\aread{b}{i}}{\aread{c}{k}}\\
    }{
      \assertions \coloneq \assertions,
      (\reasonof{a}{\aread{b}{i}} \wedge \reasonof{a}{\aread{c}{k}} \wedge \teq{i}{k} \implies \teq{\aread{b}{i}}{\aread{c}{k}})\\
      (\I,\propathf) \coloneq (\I_0,\propathf_0)
     }
    \)
    \\[6ex]
    \ruleconf
    \(
    \inferrule{
      \assertions \cup \{\teq{\aread{\awrite{a}{i}{u}}{i}}{u} \;|\; \awrite{a}{i}{u} \in \terms{\assertions}\} \models \bot
    }{
      \unsat
     }
    \)
\end{tabular}

%% file: rules/ext.tex
\begin{tabular}{c@{\hskip 1em}c}
  \rulerowu
  \(
  \inferrule{
    \I \models \tneq{i}{j}\\
    \awrite{a}{j}{u} \in \terms{\assertions}\\
    \propath{a}{\aread{b}{i}} \ne \emptypair\\
    \propath{\awrite{a}{j}{u}}{\aread{b}{i}} = \emptypair
  }{
    \propath{\awrite{a}{j}{u}}{\aread{b}{i}} \coloneq (\tneq{i}{j},a)
  }
  \)
  \\[6ex]
  \ruleeqr
  \(
  \inferrule{
    \I \models \teq{a}{c}\\
    a,c \in \termsa{\assertions}\\
    \teq{a}{c} \in \terms{\assertions}\\
    \propath{a}{\aread{b}{i}} \ne \emptypair\\
    \propath{c}{\aread{b}{i}} = \emptypair
  }{
    \propath{c}{\aread{b}{i}} \coloneq (\teq{a}{c},a)
   }
  \)
  \\[6ex]
  \ruleeql
  \(
  \inferrule{
    \I \models \teq{a}{c}\\
    a,c \in \termsa{\assertions}\\
    \teq{a}{c} \in \terms{\assertions}\\
    \propath{c}{\aread{b}{i}} \ne \emptypair\\
    \propath{a}{\aread{b}{i}} = \emptypair
  }{
    \propath{a}{\aread{b}{i}} \coloneq (\teq{a}{c},c)
   }
  \)
  \\[6ex]
  \rulediseq
  \(
  \inferrule{
    \I \models \tneq{a}{c}\\
    a,c \in \termsa{\assertions}\\
    \teq{a}{c} \in \terms{\assertions}\\
    \adiff{a}{c} \not\in \terms{\assertions}
  }{
    \assertions \coloneq \assertions, (\tneq{a}{c} \implies \tneq{\aread{a}{\adiff{a}{c}}}{\aread{c}{\adiff{a}{c}}})\\
    (\I,\propathf) \coloneq (\I_0,\propathf_0)
   }
  \)
\end{tabular}

%% file: rules/cnon_ext.tex
\begin{tabular}{c@{\hskip 1em}c}
  \ruleroc
  \(
  \inferrule{
    \propath{\aconst{v}}{\aread{b}{i}} \ne \emptypair\\
    \I \models \tneq{\aread{b}{i}}{v}
  }{
    \assertions \coloneq \assertions, \reasonof{\aconst{v}}{\aread{b}{i}} \implies \teq{\aread{b}{i}}{v}\\
    (\I,\propathf) \coloneq (\I_0,\propathf_0)
   }
  \)
\end{tabular}

%% file: rules/cext.tex
\begin{tabular}{c@{\hskip 1em}c}
  \ruleinitc
  \(
  \inferrule{
    \aconst{v} \in \terms{\assertions}
  }{
    \propath{\aconst{v}}{\aconst{v}} \coloneq (\top,\aconst{v})\\
   }
  \)
  \\[6ex]
  \rulecowd
  \(
  \inferrule{
    \propath{\awrite{a}{j}{u}}{\aconst{v}} \ne \emptypair\\
    \propath{a}{\aconst{v}} = \emptypair\\
    \I \models \bexists{i : \indexsort}{\hspace{-3.2em}\bigwedge_{k \in \indicesof{\awrite{a}{j}{u}}{\aconst{v}}\cup\{j\}}\hspace{-3.0em}\tneq{i}{k}} \\
  }{
    \propath{a}{\aconst{v}} \coloneq (\btrue,\awrite{a}{j}{u})
   }
  \)
  \\[6ex]
  \rulecowu
  \(
  \inferrule{
    \propath{a}{\aconst{v}}\!\ne\!\emptypair\\
    \propath{\awrite{a}{j}{u}}{\aconst{v}}\!=\!\emptypair\\
    \awrite{a}{j}{u} \in \terms{\assertions}\\
    \I \models \bexists{i : \indexsort}{\hspace{-2.3em}\bigwedge_{k \in \indicesof{a}{\aconst{v}}\cup\{j\}}\hspace{-2.2em}\tneq{i}{k}} \\
  }{
    \propath{\awrite{a}{j}{u}}{\aconst{v}} \coloneq (\btrue,a)
   }
  \)
  \\[6ex]
  \ruleceqr
  \(
  \inferrule{
    \I \models \teq{a}{c}\\
    a,c \in \termsa{\assertions}\\
    \teq{a}{c} \in \terms{\assertions}\\
    \propath{a}{\aconst{v}} \ne \emptypair\\
    \propath{c}{\aconst{v}} = \emptypair
  }{
    \propath{c}{\aconst{v}} \coloneq (\teq{a}{c},a)
   }
  \)
  \\[6ex]
  \ruleceql
  \(
  \inferrule{
    \I \models \teq{a}{c}\\
    a,c \in \termsa{\assertions}\\
    \teq{a}{c} \in \terms{\assertions}\\
    \propath{c}{\aconst{v}} \ne \emptypair\\
    \propath{a}{\aconst{v}} = \emptypair
  }{
    \propath{a}{\aconst{v}} \coloneq (\teq{a}{c},c)
   }
  \)
  \\[6ex]
  \rulecongc
  \(
  \inferrule{
    \propath{a}{\aconst{v}} \ne \emptypair\\
    \propath{a}{\aconst{w}} \ne \emptypair\\
    \I \models \tneq{v}{w}\\
    \I \models \bexists{i : \indexsort}{\hspace{-3.4em}\bigwedge_{k \in \indicesof{a}{\aconst{v}}\cup\indicesof{a}{\aconst{w}}}\hspace{-3em}\tneq{i}{k}}
  }{
    \assertions \coloneq \assertions, (\reasonof{a}{\aconst{v}}\wedge\reasonof{a}{\aconst{w}}\wedge
    \bexists{i : \indexsort}{\hspace{-3.4em}\bigwedge_{k \in \indicesof{a}{\aconst{v}}\cup\indicesof{a}{\aconst{w}}}\hspace{-3em}\tneq{i}{k})}
    \implies \teq{v}{w}\\
  (\I,\propathf) \coloneq (\I_0,\propathf_0)
   }
  \)
\end{tabular}

%% file: table/results.tex
\newcommand{\wrongnum}[1]{\textcolor{gray}{\textsuperscript{*}#1}}

\begin{tabular}{l@{\hskip 3em}l@{\hskip 3em}l@{\hskip 2em}r@{\hskip 2em}r@{\hskip 2em}r@{\hskip 2em}r@{\hskip 2em}r}
\toprule
\multicolumn{2}{l}{\textbf{Benchmarks}}
& \textbf{Solver}
& \textbf{Solved}
& \textbf{Sat}
& \textbf{Unsat}
& \textbf{Unsolved}
& \textbf{T \scriptsize{[s]}}
\\
\midrule


\multicolumn{2}{l}{\multirow{5}{*}{\parbox{7em}{\textbf{\benchsmtlib}\\(5,112/5,112)\vspace{3ex}}}}
                     & \textbf{\configbzlaconst} & 3,545 & 3,299 & 246 & 1,567 & 1,897,536.4 \\
\multicolumn{2}{l}{} & \configziii               &   846 &   698 & 148 & 4,266 & 5,125,846.4 \\
\multicolumn{2}{l}{} & \configbzlabase           &   717 &   700 &  17 & 4,395 & 5,274,616.8 \\
\multicolumn{2}{l}{} & \configcvcv               &   276 &   184 &  92 & 4,836 & 5,803,208.3 \\

\midrule


\multicolumn{2}{l}{\multirow{5}{*}{\parbox{7em}{\textbf{\benchmbqi}\\(4,147/4,147)\vspace{3ex}}}}
                     & \textbf{\configbzlaconst} & 4,147 & 2,585 & 1,562 &     0 &        47.3 \\
\multicolumn{2}{l}{} & \configziii               & 4,126 & 2,564 & 1,562 &    21 &    30,479.6 \\
\multicolumn{2}{l}{} & \configmsat               & 4,034 & 2,522 & 1,512 &   113 &   135,663.9 \\
\multicolumn{2}{l}{} & \configcvcv               & 2,972 & 2,103 &   869 & 1,175 & 1,410,552.5 \\
\multicolumn{2}{l}{} & \configbzlabase           &     4 &     4 &     0 & 4,143 & 4,971,600.0 \\

\midrule


\multicolumn{2}{l}{\multirow{5}{*}{\parbox{7em}{\textbf{\benchpononb}\\(41/2,838)\vspace{3ex}}}}
                     & \textbf{\configbzlaconst} & 2,838 & 2,173 & 665 &     0 & 19,368.6 \\
\multicolumn{2}{l}{} & \configziii               & 2,142 & 1,644 & 498 &   696 & 47,133.4 \\
\multicolumn{2}{l}{} & \configcvcv               & 1,230 &   945 & 285 & 1,608 & 48,651.5 \\
\multicolumn{2}{l}{} & \configmsat               &   883 &   682 & 201 & 1,955 & 46,967.0 \\
\multicolumn{2}{l}{} & \configbzlabase           &   372 &    15 & 357 & 2,466 & 49,200.0 \\

\midrule


\multicolumn{2}{l}{\multirow{5}{*}{\parbox{7em}{\textbf{\benchhevm}\\(34/34)\vspace{3ex}}}}
                     & \textbf{\configbzlaconst} & 34 & 1 & 33 & 0 &     0.3 \\
\multicolumn{2}{l}{} & \configbzlabase           & 34 & 1 & 33 & 0 &     2.6 \\
\multicolumn{2}{l}{} & \configmsat               & 34 & 1 & 33 & 0 &    53.2 \\
\multicolumn{2}{l}{} & \configcvcv               & 34 & 1 & 33 & 0 &    75.9 \\
\multicolumn{2}{l}{} & \configziii               & 32 & 0 & 32 & 2 & 2,400.7 \\

\midrule


\multicolumn{2}{l}{\multirow{5}{*}{\parbox{3.5em}{\textbf{\benchcraftc}\\(205/205)}}}
& \textbf{\configbzlaconst}  & 172 & 100 & 72 &  33 &  46,752.1 \\
&& \textcolor{gray}{\configziii} & \wrongnum{153} & \wrongnum{115} & \wrongnum{38} &  \textcolor{gray}{52} & \textcolor{gray}{ 67,838.4} \\
&& \textcolor{gray}{\configmsat} & \wrongnum{107} &  \wrongnum{13} & \wrongnum{94} &  \textcolor{gray}{98} & \textcolor{gray}{119,363.6} \\
&& \configcvcv                   &   0 &   0 &  0 & 205 & 246,000.0 \\
&& \configbzlabase               &   0 &   0 &  0 & 205 & 246,000.0 \\


\cmidrule{3-8}
\multicolumn{2}{l}{\multirow{5}{*}{\parbox{3.5em}{\textbf{\benchcraftq}\\(205/205)}}}
& \textbf{\configbzlaconst}  & 117 &  89 & 28 &  88 & 109,776.4 \\
&& \configbzlabase           & 113 &  85 & 28 &  92 & 116,918.9 \\
&& \configcvcv               &  42 &  27 & 15 & 163 & 195,601.3 \\
&& \configziii               &  28 &  16 & 12 & 177 & 212,833.4 \\
\bottomrule
\end{tabular}

%% file: references.bib
@misc{smtlib25noninc,
  author       = {Mathias Preiner and
                  Hans{-}J{\"{o}}rg Schurr and
                  Clark W. Barrett and
                  Pascal Fontaine and
                  Aina Niemetz and
                  Cesare Tinelli},
  title        = {{SMT-LIB} release 2025 (non-incremental benchmarks)},
  publisher    = {Zenodo}, year         = {2025},
  month        = aug,
  doi          = {10.5281/ZENODO.15493089},
  bibsource    = {dblp computer science bibliography, https://dblp.org}
}

@misc{smtlib25inc,
  author       = {Mathias Preiner and
                  Hans{-}J{\"{o}}rg Schurr and
                  Clark W. Barrett and
                  Pascal Fontaine and
                  Aina Niemetz and
                  Cesare Tinelli},
  title        = {{SMT-LIB} release 2025 (incremental benchmarks)},
  publisher    = {Zenodo},
  year         = {2025},
  month        = may,
  doi          = {10.5281/ZENODO.15493095},
  bibsource    = {dblp computer science bibliography, https://dblp.org}
}

@inproceedings{bitwuzla,
  author       = {Aina Niemetz and
                  Mathias Preiner},
  editor       = {Constantin Enea and
                  Akash Lal},
  title        = {Bitwuzla},
  booktitle    = {Computer Aided Verification - 35th International Conference, {CAV}
                  2023, Paris, France, July 17-22, 2023, Proceedings, Part {II}},
  series       = {Lecture Notes in Computer Science},
  volume       = {13965},
  pages        = {3--17},
  publisher    = {Springer},
  year         = {2023},
  doi          = {10.1007/978-3-031-37703-7\_1},
  bibsource    = {dblp computer science bibliography, https://dblp.org}
}

@misc{bitwuzla-gh,
  title = {{Bitwuzla on GitHub}},
  year = {2026},
  howpublished = {\url{https://github.com/bitwuzla/bitwuzla/}}
}

@inproceedings{btor2,
  author       = {Aina Niemetz and
                  Mathias Preiner and
                  Clifford Wolf and
                  Armin Biere},
  editor       = {Hana Chockler and
                  Georg Weissenbacher},
  title        = {{Btor2} , {BtorMC} and {Boolector} 3.0},
  booktitle    = {Computer Aided Verification - 30th International Conference, {CAV}
                  2018, Held as Part of the Federated Logic Conference, FloC 2018, Oxford,
                  UK, July 14-17, 2018, Proceedings, Part {I}},
  series       = {Lecture Notes in Computer Science},
  volume       = {10981},
  pages        = {587--595},
  publisher    = {Springer},
  year         = {2018},
  doi          = {10.1007/978-3-319-96145-3\_32},
  bibsource    = {dblp computer science bibliography, https://dblp.org}
}

@inproceedings{cvc5,
  author       = {Haniel Barbosa and
                  Clark W. Barrett and
                  Martin Brain and
                  Gereon Kremer and
                  Hanna Lachnitt and
                  Makai Mann and
                  Abdalrhman Mohamed and
                  Mudathir Mohamed and
                  Aina Niemetz and
                  Andres N{\"{o}}tzli and
                  Alex Ozdemir and
                  Mathias Preiner and
                  Andrew Reynolds and
                  Ying Sheng and
                  Cesare Tinelli and
                  Yoni Zohar},
  editor       = {Dana Fisman and
                  Grigore Rosu},
  title        = {cvc5: {A} Versatile and Industrial-Strength {SMT} Solver},
  booktitle    = {Tools and Algorithms for the Construction and Analysis of Systems
                  - 28th International Conference, {TACAS} 2022, Held as Part of the
                  European Joint Conferences on Theory and Practice of Software, {ETAPS}
                  2022, Munich, Germany, April 2-7, 2022, Proceedings, Part {I}},
  series       = {Lecture Notes in Computer Science},
  volume       = {13243},
  pages        = {415--442},
  publisher    = {Springer},
  year         = {2022},
  doi          = {10.1007/978-3-030-99524-9\_24},
  bibsource    = {dblp computer science bibliography, https://dblp.org}
}

@inproceedings{smtinterpol,
  author       = {J{\"{u}}rgen Christ and
                  Jochen Hoenicke and
                  Alexander Nutz},
  editor       = {Alastair F. Donaldson and
                  David Parker},
  title        = {{SMTInterpol}: An Interpolating {SMT} Solver},
  booktitle    = {Model Checking Software - 19th International Workshop, {SPIN} 2012,
                  Oxford, UK, July 23-24, 2012. Proceedings},
  series       = {Lecture Notes in Computer Science},
  volume       = {7385},
  pages        = {248--254},
  publisher    = {Springer},
  year         = {2012},
  doi          = {10.1007/978-3-642-31759-0\_19},
  bibsource    = {dblp computer science bibliography, https://dblp.org}
}

@inproceedings{z3,
  author       = {Leonardo Mendon{\c{c}}a de Moura and
                  Nikolaj S. Bj{\o}rner},
  editor       = {C. R. Ramakrishnan and
                  Jakob Rehof},
  title        = {{Z3:} An Efficient {SMT} Solver},
  booktitle    = {Tools and Algorithms for the Construction and Analysis of Systems,
                  14th International Conference, {TACAS} 2008, Held as Part of the Joint
                  European Conferences on Theory and Practice of Software, {ETAPS} 2008,
                  Budapest, Hungary, March 29-April 6, 2008. Proceedings},
  series       = {Lecture Notes in Computer Science},
  volume       = {4963},
  pages        = {337--340},
  publisher    = {Springer},
  year         = {2008},
  doi          = {10.1007/978-3-540-78800-3\_24},
  bibsource    = {dblp computer science bibliography, https://dblp.org}
}

@misc{z3-gh,
  title = {{Z3 on GitHub}},
  year = {2026},
  howpublished = {\url{https://github.com/Z3Prover/z3/}}
}

@inproceedings{pono,
  author       = {Makai Mann and
                  Ahmed Irfan and
                  Florian Lonsing and
                  Yahan Yang and
                  Hongce Zhang and
                  Kristopher Brown and
                  Aarti Gupta and
                  Clark W. Barrett},
  editor       = {Alexandra Silva and
                  K. Rustan M. Leino},
  title        = {Pono: {A} Flexible and Extensible SMT-Based Model Checker},
  booktitle    = {Computer Aided Verification - 33rd International Conference, {CAV}
                  2021, Virtual Event, July 20-23, 2021, Proceedings, Part {II}},
  series       = {Lecture Notes in Computer Science},
  volume       = {12760},
  pages        = {461--474},
  publisher    = {Springer},
  year         = {2021},
  doi          = {10.1007/978-3-030-81688-9\_22},
  bibsource    = {dblp computer science bibliography, https://dblp.org}
}

@misc{hwmcc19,
  title = {{Hardware Model Checking Competition 2019}},
  year = {2019},
  howpublished = {\url{https://fmv.jku.at/hwmcc19/}}
}

@misc{hwmcc20,
  title = {{Hardware Model Checking Competition 2020}},
  year = {2020},
  howpublished = {\url{https://hwmcc.github.io/2020/}}
}

@inproceedings{hwmcc24,
  author       = {Armin Biere and
                  Nils Froleyks and
                  Mathias Preiner},
  editor       = {Nina Narodytska and
                  Philipp R{\"{u}}mmer},
  title        = {Hardware Model Checking Competition 2024},
  booktitle    = {Formal Methods in Computer-Aided Design, {FMCAD} 2024, Prague, Czech
                  Republic, October 15-18, 2024},
  pages        = {1},
  publisher    = {{IEEE}},
  year         = {2024},
  doi          = {10.34727/2024/ISBN.978-3-85448-065-5\_6},
  bibsource    = {dblp computer science bibliography, https://dblp.org}
}

@inproceedings{hwmcc25,
  author       = {Armin Biere and
                  Nils Froleyks and
                  Mathias Preiner},
  editor       = {Ahmed Irfan and
                  Daniela Kaufmann},
  title        = {Hardware Model Checking Competition 2025},
  booktitle    = {Proceedings of the 25th Conference on Formal Methods in Computer-Aided
                  Design, {FMCAD} 2025, Menlo Park, CA, USA, October 6-10, 2025},
  publisher    = {{TU} Wien Academic Press},
  year         = {2025},
  doi          = {10.34727/2025/ISBN.978-3-85448-084-6\_6},
  bibsource    = {dblp computer science bibliography, https://dblp.org}
}

@inproceedings{hevm,
  author       = {Dxo and
                  Mate Soos and
                  Zoe Paraskevopoulou and
                  Martin Lundfall and
                  Mikael Brockman},
  editor       = {Arie Gurfinkel and
                  Vijay Ganesh},
  title        = {Hevm, a Fast Symbolic Execution Framework for {EVM} Bytecode},
  booktitle    = {Computer Aided Verification - 36th International Conference, {CAV}
                  2024, Montreal, QC, Canada, July 24-27, 2024, Proceedings, Part {I}},
  series       = {Lecture Notes in Computer Science},
  volume       = {14681},
  pages        = {453--465},
  publisher    = {Springer},
  year         = {2024},
  doi          = {10.1007/978-3-031-65627-9\_22},
  bibsource    = {dblp computer science bibliography, https://dblp.org}
}

@misc{hevmbench,
title = {hevm Symbolic Execution Engine {SMT} Queries},
url = {https://github.com/SMT-LIB/pending-benchmarks/commit/0a87f2c7581a3df7ad0f2d7b57621b6ecf4637ad},
year = {2026}
}

@phdthesis{preiner-phd,
  author = {Mathias Preiner},
  title  = {Lambdas, Arrays and Quantifiers},
  year   = {2017},
  series = {Dissertation Technische Wissenschaften},
  school = {Informatik, Johannes Kepler University Linz},
  url = {https://resolver.obvsg.at/urn:nbn:at:at-ubl:1-14897}
}

@TECHREPORT{BarFT-RR-25,
  author      = {Clark Barrett and Pascal Fontaine and Cesare Tinelli},
  title       = {{The SMT-LIB Standard: Version 2.7}},
  institution = {Department of Computer Science, The University of Iowa},
  year        = 2025,
  note        = {Available at \url{http://smt-lib.org}}
}

@incollection{Man-MSL-93,
  Address = {New York, NY, USA},
  Author = {Manzano, Mar\'{i}a},
  Booktitle = {Many-sorted logic and its applications},
  note = {{ISBN}  0-471-93485-2},
  Pages = {3--86},
  Publisher = {John Wiley \& Sons, Inc.},
  Title = {Introduction to many-sorted logic},
  Year = {1993},
  url = {https://dl.acm.org/doi/10.5555/165446.165450}
}

@book{EndertonLogic,
  author       = {Herbert B. Enderton},
  title        = {A mathematical introduction to logic},
  publisher    = {Academic Press},
  year         = {1972},
  note = {{ISBN} 978-0-12-238450-9},
  bibsource    = {dblp computer science bibliography, https://dblp.org}
}

@inproceedings{McCarthy62,
  author       = {John McCarthy},
  title        = {Towards a Mathematical Science of Computation},
  booktitle    = {Information Processing, Proceedings of the 2nd {IFIP} Congress 1962,
                  Munich, Germany, August 27 - September 1, 1962},
  pages        = {21--28},
  publisher    = {North-Holland},
  year         = {1962},
  bibsource    = {dblp computer science bibliography, https://dblp.org}
}

@inproceedings{StumpBDL01,
  author       = {Aaron Stump and
                  Clark W. Barrett and
                  David L. Dill and
                  Jeremy R. Levitt},
  title        = {A Decision Procedure for an Extensional Theory of Arrays},
  booktitle    = {16th Annual {IEEE} Symposium on Logic in Computer Science, Boston,
                  Massachusetts, USA, June 16-19, 2001, Proceedings},
  pages        = {29--37},
  publisher    = {{IEEE} Computer Society},
  year         = {2001},
  doi          = {10.1109/LICS.2001.932480},
  bibsource    = {dblp computer science bibliography, https://dblp.org}
}

@inproceedings{HoenickeS19,
  author       = {Jochen Hoenicke and
                  Tanja Schindler},
  editor       = {Constantin Enea and
                  Ruzica Piskac},
  title        = {Solving and Interpolating Constant Arrays Based on Weak Equivalences},
  booktitle    = {Verification, Model Checking, and Abstract Interpretation - 20th International
                  Conference, {VMCAI} 2019, Cascais, Portugal, January 13-15, 2019,
                  Proceedings},
  series       = {Lecture Notes in Computer Science},
  volume       = {11388},
  pages        = {297--317},
  publisher    = {Springer},
  year         = {2019},
  doi          = {10.1007/978-3-030-11245-5\_14},
  bibsource    = {dblp computer science bibliography, https://dblp.org}
}

@inproceedings{MouraB09,
  author       = {Leonardo Mendon{\c{c}}a de Moura and
                  Nikolaj S. Bj{\o}rner},
  title        = {Generalized, efficient array decision procedures},
  booktitle    = {Proceedings of 9th International Conference on Formal Methods in Computer-Aided
                  Design, {FMCAD} 2009, 15-18 November 2009, Austin, Texas, {USA}},
  pages        = {45--52},
  publisher    = {{IEEE}},
  year         = {2009},
  doi          = {10.1109/FMCAD.2009.5351142},
  bibsource    = {dblp computer science bibliography, https://dblp.org}
}

@article{BrummayerB09,
  author       = {Robert Brummayer and
                  Armin Biere},
  title        = {Lemmas on Demand for the Extensional Theory of Arrays},
  journal      = {J. Satisf. Boolean Model. Comput.},
  volume       = {6},
  number       = {1-3},
  pages        = {165--201},
  year         = {2009},
  doi          = {10.3233/SAT190067},
  bibsource    = {dblp computer science bibliography, https://dblp.org}
}

@inproceedings{PreinerNB13,
  author       = {Mathias Preiner and
                  Aina Niemetz and
                  Armin Biere},
  editor       = {Malay K. Ganai and
                  Alper Sen},
  title        = {Lemmas on Demand for Lambdas},
  booktitle    = {Proceedings of the Second International Workshop on Design and Implementation
                  of Formal Tools and Systems, Portland, OR, USA, October 19, 2013},
  series       = {{CEUR} Workshop Proceedings},
  volume       = {1130},
  publisher    = {CEUR-WS.org},
  year         = {2013},
  url          = {https://ceur-ws.org/Vol-1130/paper\_7.pdf},
  bibsource    = {dblp computer science bibliography, https://dblp.org}
}

@inproceedings{BarrettDS02,
  author    = {Clark W. Barrett and
               David L. Dill and
               Aaron Stump},
  editor    = {Ed Brinksma and
               Kim Guldstrand Larsen},
  title     = {Checking Satisfiability of First-Order Formulas by Incremental Translation
               to {SAT}},
  booktitle = {Computer Aided Verification, 14th International Conference, {CAV}
               2002,Copenhagen, Denmark, July 27-31, 2002, Proceedings},
  series    = {Lecture Notes in Computer Science},
  volume    = {2404},
  pages     = {236--249},
  publisher = {Springer},
  year      = {2002},
  doi       = {10.1007/3-540-45657-0\_18},
  bibsource = {dblp computer science bibliography, https://dblp.org}
}

@inproceedings{DemouraR02,
  author    = {Leonardo De Moura and
               Harald Rue{\ss}},
  title     = {Lemmas on Demand for Satisfiability Solvers},
  booktitle = {The 5th International Symposium on the Theory and Applications of Satisfiability Testing, {SAT}
               2002, Cincinnati, USA},
  documenturl = { http://leodemoura.github.io/files/sat02.pdf},
  year      = {2002},
  url       = {https://leodemoura.github.io/files/sat02.pdf}
}

@inproceedings{GeM09,
  author    = {Yeting Ge and
               Leonardo Mendon{\c{c}}a de Moura},
  editor    = {Ahmed Bouajjani and
               Oded Maler},
  title     = {Complete Instantiation for Quantified Formulas in Satisfiabiliby Modulo
               Theories},
  booktitle = {Computer Aided Verification, 21st International Conference, {CAV}
               2009, Grenoble, France, June 26 - July 2, 2009. Proceedings},
  series    = {Lecture Notes in Computer Science},
  volume    = {5643},
  pages     = {306--320},
  publisher = {Springer},
  year      = {2009},
  doi       = {10.1007/978-3-642-02658-4\_25},
  bibsource = {dblp computer science bibliography, https://dblp.org}
}

@inproceedings{NiemetzPZ24,
  author       = {Aina Niemetz and
                  Mathias Preiner and
                  Yoni Zohar},
  editor       = {Arie Gurfinkel and
                  Vijay Ganesh},
  title        = {Scalable Bit-Blasting with Abstractions},
  booktitle    = {Computer Aided Verification - 36th International Conference, {CAV}
                  2024, Montreal, QC, Canada, July 24-27, 2024, Proceedings, Part {I}},
  series       = {Lecture Notes in Computer Science},
  volume       = {14681},
  pages        = {178--200},
  publisher    = {Springer},
  year         = {2024},
  doi          = {10.1007/978-3-031-65627-9\_9},
  bibsource    = {dblp computer science bibliography, https://dblp.org}
}

@inproceedings{yices2,
  author       = {Bruno Dutertre},
  editor       = {Armin Biere and
                  Roderick Bloem},
  title        = {Yices 2.2},
  booktitle    = {Computer Aided Verification - 26th International Conference, {CAV}
                  2014, Held as Part of the Vienna Summer of Logic, {VSL} 2014, Vienna,
                  Austria, July 18-22, 2014. Proceedings},
  series       = {Lecture Notes in Computer Science},
  pages        = {737--744},
  publisher    = {Springer},
  year         = {2014},
  doi          = {10.1007/978-3-319-08867-9\_49},
  bibsource    = {dblp computer science bibliography, https://dblp.org}
}

@inproceedings{mathsat5,
  author       = {Alessandro Cimatti and
                  Alberto Griggio and
                  Bastiaan Joost Schaafsma and
                  Roberto Sebastiani},
  editor       = {Nir Piterman and
                  Scott A. Smolka},
  title        = {The MathSAT5 {SMT} Solver},
  booktitle    = {Tools and Algorithms for the Construction and Analysis of Systems
                  - 19th International Conference, {TACAS} 2013, Held as Part of the
                  European Joint Conferences on Theory and Practice of Software, {ETAPS}
                  2013, Rome, Italy, March 16-24, 2013. Proceedings},
  series       = {Lecture Notes in Computer Science},
  pages        = {93--107},
  publisher    = {Springer},
  year         = {2013},
  doi          = {10.1007/978-3-642-36742-7\_7},
  bibsource    = {dblp computer science bibliography, https://dblp.org}
}

@book{handbookmc18,
  editor       = {Edmund M. Clarke and
                  Thomas A. Henzinger and
                  Helmut Veith and
                  Roderick Bloem},
  title        = {Handbook of Model Checking},
  publisher    = {Springer},
  year         = {2018},
  doi          = {10.1007/978-3-319-10575-8},
  note         = {{ISBN} 978-3-319-10574-1},
  bibsource    = {dblp computer science bibliography, https://dblp.org}
}

@article{BaldoniCDDF18,
  author       = {Roberto Baldoni and
                  Emilio Coppa and
                  Daniele Cono D'Elia and
                  Camil Demetrescu and
                  Irene Finocchi},
  title        = {A Survey of Symbolic Execution Techniques},
  journal      = {{ACM} Comput. Surv.},
  volume       = {51},
  number       = {3},
  pages        = {50:1--50:39},
  year         = {2018},
  doi          = {10.1145/3182657},
  bibsource    = {dblp computer science bibliography, https://dblp.org}
}

@misc{arxiv,
  title={{Satisfiability Modulo Extensional Constant Arrays (Extended Version)}},
  author={Mathias Preiner and Aina Niemetz and Clark Barrett},
  year={2026},
  eprint={2605.16820},
  archivePrefix={arXiv},
  primaryClass={cs.LO},
  doi={10.48550/arXiv.2605.16820},
}
